\newtheorem{definition}{Definition}
\newtheorem{example}{Example}
\newtheorem{theorem}{Theorem}
\begin{document}

\title{Willingness Optimization for Social Group Activity}
\vspace{-10pt}
\numberofauthors{1}
\author{
\alignauthor
Hong-Han Shuai~~~~~~~~~~~De-Nian Yang~~~~~~~~~~~~~~~Philip S. Yu~~~~~~~~~~~~~~~Ming-Syan Chen\\
       \affaddr{National Taiwan Univ.~~~~~~~~~~~~Academia Sinica~~~~~~~~~Univ. of Illinois at Chicago~~~~~~~National Taiwan Univ.}\\
       \email{~~~~\{hhshuai,dnyang\}@iis.sinica.edu.tw~~~~~~~~~~~~~~~~~~~~psyu@cs.uic.edu~~~~~~mschen@cc.ee.ntu.edu.tw~~~}
}

\maketitle

\begin{abstract}
Studies show that a person is willing to join a social group activity if the
activity is interesting, and if some close friends also join the activity
as companions. The literature has demonstrated that the interests of a
person and the social tightness among friends can be effectively derived and
mined from social networking websites. However, even with the above two kinds of
information widely available, social group activities still need to
be coordinated manually, and the process is tedious and time-consuming for
users, especially for a large social group activity, due to complications of social connectivity and the diversity of possible interests among friends. To
address the above important need, this paper proposes to automatically
select and recommend potential attendees of a social group activity, which
could be very useful for social networking websites as a value-added
service. We first formulate a new problem, named Willingness mAximization
for Social grOup (WASO). This paper points out that the solution obtained by
a greedy algorithm is likely to be trapped in a local optimal solution.
Thus, we design a new randomized algorithm to effectively and efficiently
solve the problem. Given the available computational budgets, the proposed
algorithm is able to optimally allocate the resources and find a solution
with an approximation ratio. We implement the proposed algorithm in
Facebook, and the user study demonstrates that social groups obtained by the
proposed algorithm significantly outperform the
solutions manually configured by users.
\end{abstract}
\vspace{-9pt}

\section{Introduction}

\label{Introduction}

Studies show that two important criteria are usually involved in the
decision of a person joining a group activity \cite{DeutschGDM,KaplanGDM} at her available time. First, the person is interested in the
intrinsic properties of the activity, which may be in line with her favorite
hobby or exercise. Second, other people who are important to the person,
such as her close friends, will join the activity as companions\footnote{%
There are other criteria that are also important, e.g., activity time, and activity location. However, to consider the above factors, a promising way is to preprocess and filter out the people who are not available, live too far, etc.}. For
example, if a person who appreciates abstract art has complimentary tickets for
a modern art exhibition at MoMA, she would probably want to invite her
friends and friends of friends with this shared interest. Nowadays, many
people are accustomed to sharing information with their friends on social
networking websites, like Facebook, Meetup, and
LikeALittle, and a recent line of studies \cite{PowerLawClauset09,MisloveMI} has introduced effective algorithms to
quantify the interests of a person according to the interest attributes in
her personal profile and the contextual information in her interaction with
friends. Moreover, social connectivity models have been widely studied \cite%
{ChaojiNnodeLink12} for evaluating the tightness
between two friends in the above websites. Nonetheless, even with the above
knowledge available, to date there has been neither published work nor a real system explores how to
leverage the above two crucial factors for \textit{automatic planning and
recommending} of a \textit{group activity}, which is potentially very
useful for social networking websites as a value-added service\footnote{%
The privacy of a person in automatic activity planning can follow the
current privacy setting policy in social networking websites when the person
subscribes the service. The details of privacy setting are beyond the scope
of this paper.}. At present, many social networking websites only act as a
platform for information sharing and exchange in activity planning. The
attendees of a group activity still need to be selected manually, and such
manual coordination is usually tedious and time-consuming, especially for a
large social activity, given the complicated link structure in social
networks and the diverse interests of friends.

To solve this problem, this paper makes an initial attempt to
incorporate the interests of people and their social tightness as two key
factors to find a group of attendees for automatic planning and
recommendation. It is desirable to choose more attendees who like and enjoy
the activity and to invite more friends with the shared interest in the
activity as companions. In fact, Psychology \cite{DeutschGDM,KaplanGDM} and recent study in social
networks \cite{YangSI12,LeesI12} have modeled the \textit{willingness}
to attend an activity or a social event as the sum of the interest of each
attendee on the activity and the social tightness between friends that are
possible to join it. It is envisaged that the selected attendees are more
inclined to join the activity if the willingness of the group increases.

With this objective in mind, we formulate a new fundamental optimization
problem, named \textit{Willingness mAximization for Social grOup (WASO)}.
The problem is given a social graph $G$, where each node represents a
candidate person and is associated with an interest score of the person for
the activity, and each edge has a social tightness score to indicate the
mutual familiarity between the two persons. Let $k$ denote the number of
expected attendees. Given the user-specified $k$, the goal of automatic
activity planning is to maximize the willingness of the selected group $F$,
while the induced graph on $F$ is a connected subgraph for each attendee to
become acquainted with another attendee according to a social path\footnote{%
For some group activities, it is not necessary to ensure that the solution
group is a connected subgraph. Later in Section \ref{Prilim}, we will
show that WASO without a connectivity constraint can be easily solved by
the proposed algorithm with simple modification.}. For the activities
without an \textit{a priori} fixed size, it is reasonable for a user to
specify a proper range for the group size, and our algorithm can find the
solution for each $k$ within the range and return the solutions for the user
to decide the most suitable group size and the corresponding attendees.\footnote{%
The parameter settings in WASO to fit varied scenarios in everyday life will
be introduced in more details in Section \ref{scenario}.}

Naturally, to incrementally construct the group, a greedy algorithm sequentially chooses an attendee that leads to the largest increment in the
willingness at each iteration. For example, Figure \ref{counterexample}
presents an illustrative example with $k=3$. Node $v_{1}$ is first selected
since its interest score is the maximum one among all nodes. Afterward, node 
$v_{2}$ is then extracted. Finally, $v_{3}$, instead of $v_{4}$, is chosen
because it generates the largest increment on willingness, i.e., $10$, and
leads to a group with a willingness of $27$. Note the greedy algorithm,
though simple, tends to be trapped in a local optimal solution, since it
facilitates the selection of nodes only suitable at the corresponding
iterations. In this simple example, the above algorithm is not able to find
the optimal solution because it makes a greedy selection at each iteration
and only chooses $v_{1}$ as the start node, who enjoys the activity the most
at the first iteration, but the optimal solution is \{$v_{2}$, $v_{3}$, $%
v_{4}$\} with the total willingness being $30$. 

\begin{figure}[t]
\centering
\includegraphics[height=0.659in, width=2.2567in]{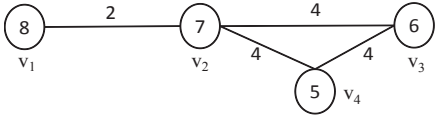} \vspace{-3mm} 
\caption{Counterexample of greedy algorithm}
\vspace{-20pt}
\label{counterexample}
\end{figure}
Another approach is to examine the willingness of every possible combination
of $k$ attendees. However, this enumeration approach needs to evaluate $%
C_{k}^{n}$ candidate groups, where $n$ is the number of nodes in $G$. In
current social networking websites, the number of candidate groups is still
huge even when we focus on only the candidates located in the same area,
e.g., about ten thousand users in British Virgin Islands\footnote{%
http://www.socialbakers.com/facebook-statistics/.}. When $k=50$, the number
of candidate groups is in the order of $10^{135}$. Thus, this
approach is computationally intractable for a massive social network.

Indeed, we show that the problem is challenging and prove that it is
NP-hard. As shown in Figure \ref{counterexample}, the greedy approach
improperly chooses $v_{1}$ as the start node and explores only a single
sequence of nodes in the solution space. To increase the search space, randomized algorithms have been proposed as a
simple but effective strategy to solve the problems with large instances 
\cite{MitzenmacherRand05}. To avoid being trapped in a local optimal
solution, a simple randomized algorithm for WASO is to randomly choose
multiple start nodes. Each start node is considered as partial solution, and
a node neighboring the partial solution is randomly chosen and added to
the partial solution at each iteration afterward, until $k$\textbf{\ }nodes
are included as a final solution. This randomized algorithm is more
efficient than the greedy approach, because the computation of willingness
is not involved during the selection of a node. For the
problem with a large $k$, numerous candidate nodes neighboring the
partial solution are necessary to be examined in the greedy approach to sum
up the willingness, in order to find the one that generates the largest
willingness. In contrast, the randomized algorithm simply chooses one
neighboring node at random.

With randomization, the aforementioned algorithm is able to effectively
avoid being trapped in a local optimal solution. It suffers, however, two disadvantages. Firstly, a start node that has the potential to
generate final solutions with high willingness is not invested with more
computational budgets for randomization in the following iterations. Each
start node in the randomized algorithm is expanded to only one final
solution. Thus, a start nodes, which has the potential to grow and become
the solution with high willingness, may fail to generate a final solution
with high willingness because only one solution is randomly constructed and
expanded from the start node. The second disadvantage is that the expansion of the partial solution does
not differentiate the selection of the neighboring nodes. Each neighboring
node is treated equally and chosen uniformly at random for each iteration. In
contrast, a simple way to remedy this issue is to assign the probability to
each neighboring node according to its interest score and social tightness
of incident edges. However, this assignment is similar to the greedy
algorithm in that it limits the scope to the local information corresponding to each
node and is not expected to generate a solution with high willingness.

Keeping in mind the above observations in an effort to guide an efficient search of the solution
space, we propose two randomized algorithms, called \emph{CBAS}
(Computational Budget Allocation for Start nodes) and \emph{CBAS-ND }%
(Computation Budget Allocation for Start nodes with Neighbor
Differentiation), to address the above two crucial factors in selecting
start nodes and expanding the partial solutions, respectively. This paper
exploits the notion of Optimal Computing Budget Allocation (OCBA) \cite%
{OCBA10} in randomization, in order to optimally invest more computational
budgets in the start nodes with the potential to generate the solutions with
high willingness. \emph{CBAS} first selects $m$ start nodes\footnote{%
The setting of $m$ and other parameters is important
and will be studied in the end of Section \ref{Exp}.} and then\emph{\ }randomly
adds neighboring nodes to expand the partial solution stage-by-stage, until $%
k$ nodes are included as a final solution. Each start node in \emph{CBAS} is
expanded to multiple final solutions. To properly invest the computational
budgets, \emph{CBAS} at each stage identifies the start nodes worth more
computational budgets according to sampled results of the previous stages.
Equipped with the allocation strategy of computational resources, \emph{CBAS}
is enhanced to \emph{CBAS-ND} to adaptively assign the probability to each
neighboring node during the expansion of the partial solution according to
the cross entropy method. We prove that the allocation of computational
budgets for start nodes and the assignment of the probability to each node
are both optimal in \emph{CBAS} and \emph{CBAS-ND}, respectively. We further
show that \emph{CBAS} can achieve an approximation ratio, while \emph{CBAS-ND}
needs much smaller computational budgets than \emph{CBAS} to acquire the
same solution quality.

The contributions of this paper can be summarized as follows. \vspace{-2mm}

\begin{itemize}
\item We formulate a new optimization problem, namely WASO, to consider the
topic interest of users and social tightness among friends for automatic
planning of activities. We prove that WASO is NP-hard. To the best of the
authors' knowledge, there is no real system or existing work in the
literature that addresses the issue of automatic activity planning based on
both topic interest and social relationship. \vspace{-3mm}

\item We design Algorithm \emph{CBAS} and \emph{CBAS-ND} to find the
solution to WASO with an approximation ratio. Experimental results demonstrate
that the solution returned by \emph{CBAS-ND }is very close to the optimal
solution obtained by IBM\ CPLEX, which is widely regarded as the fastest
general parallel optimizer, and \emph{CBAS-ND} is faster
than CPLEX.\vspace{-3mm}

\item We implement \emph{CBAS-ND} in Facebook and conduct a user
study with 137 people. Currently, people are used to organizing an activity
manually without being aware of the quality of the organized group, because
there is no automatic group recommendation service available for comparison.
Compared with the manual solutions, we observe that the solutions obtained by \emph{CBAS-ND} are 50.6\% better. In addition, 98.5\% of users
conclude that the group recommended by \emph{CBAS-ND} is better or acceptable.
Therefore, this research result\textbf{\ }has the potential to be adopted in
social networking websites as a value-added service.
\end{itemize}\vspace{-3mm}

The rest of this paper is summarized as follows. Section \ref{Prilim}
formulates WASO and surveys the related works. Sections \ref{SIIGQ-CBA} and %
\ref{CE_method} explain \emph{CBAS} and \emph{%
CBAS-ND} and derive the approximation ratio. User study and experimental
results are presented in Section \ref{Exp}, and we conclude this paper in
Section \ref{Conclu}.
\vspace{-2mm}
\section{Preliminary}

\label{Prilim}\vspace{-2mm}

\subsection{Problem Definition}

\label{SGQProb}

Given a social network $G=(V,E)$, where each node $v_{i}\in V$ and each edge 
$e_{i,j}\in E$ are associated with an interest score $\eta _{i}$ and a
social tightness score $\tau _{i,j}$ assigned according to the literature 
\cite{PowerLawClauset09} and \cite%
{ChaojiNnodeLink12} respectively, this paper studies a new optimization
problem WASO for finding a set $F$ of vertices with size $k$ to maximize the
willingness $W(F)$, i.e.,\vspace{-2.5mm} 
\begin{equation}
\max_{F}W(F)=\max_{F}\sum_{v_{i}\in F}(\eta _{i}+\sum_{v_{j}\in F:e_{i,j}\in
E}\tau _{i,j})\text{,}
\end{equation}
where $F$ is a connected subgraph in $G$ to encourage each attendee to be
acquainted with another attendee according to a social path in $F$. Notice that the social tightness between $v_{i}$ and $v_{j}$ is not necessarily symmetric, i.e., $\tau_{i,j}$ can be different with $\tau_{j,i}$. Therefore, the willingness in Eq. (1) considers both $\tau_{i,j}$ and $\tau_{j,i}$. It is worth noting that the illustrating example in the paper is symmetric for simplicity. As
demonstrated in the previous works in Psychology and social networks \cite%
{YangSI12,LeesI12} that jointly consider the social and interest
domains, the willingness of a group is represented as the sum of the topic
interest of nodes and social tightness between them\footnote{%
Different
weights $\lambda $ and (1-$\lambda $) can be assigned to the interest scores
and social tightness such that $W(F)=\sum_{v_{i}\in F}(\lambda _{i}\eta _{i}$%
+\newline
$(1-\lambda _{i})\sum_{v_{j}\in F:e_{i,j}\in E}\tau _{i,j})$. $\lambda _{i}$
can be set directly by a user or according to the existing model \cite{LeesI12}.
The impacts of different $\lambda $ will be studied later in Section \ref%
{Exp}.}.

Notice that the network with $\eta$ as 0 or $\tau$ as 0 is a special case of
WASO. Previous works \cite{DeutschGDM,KaplanGDM} demonstrated that both
social tightness and topic interest are intrinsic criteria involved in the
decision of a person to join a group activity, which is in line with the
results of our user study presented in Section \ref{Exp}. WASO is
challenging due to the tradeoff in interest and social tightness, while the
constraint that assures that the $F$ is connected also complicates this problem
because it is no longer able to arbitrarily choose any nodes from $G$.
Indeed, the following theorem shows that WASO is NP-hard.
\vspace{-2mm} 
\begin{theorem}
WASO is NP-hard.
\end{theorem}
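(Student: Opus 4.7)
The plan is to reduce from the classical NP-hard problem $k$-Clique: given a graph $H = (V_H, E_H)$ and an integer $k$, decide whether $H$ contains a clique on $k$ vertices. I will turn any such instance into a WASO instance in polynomial time and show that the optimum value of WASO meets a fixed threshold exactly when $H$ has a $k$-clique.

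First I would construct the social graph $G := H$, assign interest score $\eta_i = 0$ to every node, and assign social tightness $\tau_{i,j} = 1$ to every edge of $H$. Under this assignment the objective in Eq.\,(1) collapses to $W(F) = 2\,|E_G(F)|$, where $E_G(F)$ denotes the edges of $G$ with both endpoints in $F$. Since $|F| = k$, this immediately yields the uniform upper bound $W(F) \le 2\binom{k}{2} = k(k-1)$, with equality precisely when $F$ induces a clique.

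Next I would verify the two directions of the equivalence, setting the WASO decision threshold to $k(k-1)$. If $H$ has a $k$-clique $F^{*}$, then the induced subgraph on $F^{*}$ is $K_k$, which is connected, so $F^{*}$ is feasible for WASO and already attains $W(F^{*}) = k(k-1)$. Conversely, if any feasible $F$ with $|F| = k$ achieves $W(F) \ge k(k-1)$, the bound above forces $F$ to induce a complete subgraph of $G = H$, i.e.\ a $k$-clique. Thus this is a Karp reduction from $k$-Clique to the decision version of WASO, and since the construction is linear in $|V_H| + |E_H|$, WASO is NP-hard.

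The one subtlety to check is the connectivity requirement of WASO, which a priori could break the reduction. Here it is harmless: in the forward direction a clique is automatically connected, so no solutions are lost; in the backward direction the argument uses only the edge-count upper bound and never appeals to connectivity, so adding the constraint cannot weaken the conclusion. Therefore the connectivity restriction does not interfere, and the WASO hardness follows directly from the hardness of $k$-Clique. A reduction from Densest-$k$-Subgraph would work in essentially the same way; I choose $k$-Clique because the threshold value is crisper and the equality case is immediate.
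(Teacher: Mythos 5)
Your proof is correct, but it takes a genuinely different route from the paper. The paper reduces from Dense $k$-Subgraph: it sets $G=G_D$, $\eta_i=0$, $\tau_{i,j}=1$, and argues that the willingness of a $k$-set equals its induced edge count, so that WASO literally contains DkS as a special case. You instead reduce from $k$-Clique with the same gadget (zero interest scores, unit tightness) but use the sharp threshold $W(F)=2|E_G(F)|\le k(k-1)$, with equality iff $F$ is a clique. Two remarks on the comparison. First, your handling of the connectivity constraint is more careful than the paper's: a DkS-optimal $k$-set need not induce a connected subgraph, so the paper's claim that one can simply take $F=F_D$ as a feasible WASO solution has a gap that the paper does not address; in your reduction the witness is a clique, hence automatically connected, and the converse direction uses only the edge-count bound, so connectivity genuinely cannot interfere. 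Second, what the paper's choice of DkS buys (at least implicitly) is a connection to the approximation hardness of Dense $k$-Subgraph, suggesting WASO is not just NP-hard but hard to approximate well --- your $k$-Clique reduction establishes NP-hardness cleanly but does not by itself carry that inapproximability signal. As a pure NP-hardness proof, yours is the tighter argument.
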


\begin{proof}
\textbf{\ }We prove that WASO is NP-hard with the reduction from Dense $k$-Subgraph (DkS) \cite{Feige01}. Given a graph $%
G_{D}=(V_{D},E_{D})$, DkS finds a subgraph with $k$ nodes $F_{D}$ to
maximize the density of the subgraph. In other words, the purpose of DkS is to maximize the
number of edges $E(F_{D})$ in the subgraph induced by the selected nodes.
For each instance of DkS, we construct an instance for WASO by letting $%
G=G_{D}$, where $\eta _{i}$ of each node $v_{i}\in V$ is set as $0$, and $%
\tau _{i,j}$ of each edge $e_{i,j}\in E$ is assigned as $1$. We first prove
the sufficient condition. For each instance of DkS with solution node set $%
F_{D}$, we let $F=F_{D}$. If the number of edges $E(F_{D})$ in
the subgraph of DkS is $\epsilon $, the willingness of WASO $W(F)$ is also $%
\epsilon $ because $F=F_{D}$. We then prove the necessary condition. For
each instance of WASO with $F$, we select the same nodes
for $F_{D}$, and the number of edges $E(F_{D})$ must be the same as $W(F)$.
The theorem follows. \vspace{-2mm}
\end{proof}

\subsection{Scenarios}
\label{scenario}
In the following, we present the parameter settings of WASO to fit the need
of different scenarios.

\emph{Couple and Foe:} For any two people required to
be selected together, such as a couple, the two corresponding nodes $v_{i}$
and $v_{j}$ in $G$ are merged into one node $v_{a}$ with the interest score $\eta_{a}=\eta
_{i}+\eta _{j}$ and social tightness score $\tau_{a,b}=\tau_{i,b}+\tau_{j,b}$ for each neighboring node $v_{b}$ of $v_{i}$ or $v_{j}$. Similarly, more people can be merged as well, but the group size $k$ is required to be adjusted accordingly. On the other hand, if $v_{i}$ is a foe of $v_{j}$, their social
tightness score $\tau _{j,i}$ is assigned a large negative value, such that
any group consisting of the two nodes leads to a negative willingness and
thereby will not be selected. The relationship of foes can be discovered by blacklists and learnt from historical records. Similarly, $\eta _{i}$ is allowed to be assigned a negative value.

\emph{Invitation:} A piano player plans to hold a small concert. In this case, the player might prefer inviting people that are very
good friends with him/her, but it is not necessary for them to be pair-wise acquainted. For this scenario,
the activity candidates are the neighboring nodes of $v_{i}$, which is denoted as $N(v_{i})$, where $v_{i}$ represents the inviter (piano player), and we set $\lambda _{j}$ as 1 for every $j \in N(v_i)$ since the social tightness among the friends may not be important in this scenario.

\emph{Exhibition and house-warming party:} The British Museum plans to hold an exhibition of Van Gogh and would like to send e-mails to potential visitors. In this scenario, the topic interest is expected to play a crucial role, and $\lambda_{i}$ is suitable to set as 1 for all $i \in V$. On the other hand, for social activities such as a house-warming
party, $\lambda_{i}$ is 0 for all $i \in V$, and only social tightness is considered.

\emph{Separate Groups:} The government plans to organize a camping trip on Big Bear Lake to promote environmental protection. In this case, the group does not need to be connected, and a simple way is to
add a virtual node $\overline{v}$ to $V$ with the interest score of $\overline{v}$ as $\eta_{\overline{v}}$=$\epsilon$+$\sum_{v_{i}\in V}(\eta _{i}+\sum_{v_{j}\in V:e_{i,j}\in E}\tau _{i,j})$, where $\epsilon$ is any positive real number. In addition, $\overline{v}$ is connected to every other node $v_{j} \in V$ with the social tightness score $\tau _{\overline{v},j}$=$0$, and the set of new edges incident to $\overline{v}$ is denoted as $E_{\overline{v}}$. It is necessary to choose $\overline{v}$ so that $\overline{v}$ will connect to multiple
disconnected subgraphs to support the above group activities. In this case, $%
k+1$ nodes need to be included in the final solution.

Now WASO-dis denote the counterpart of WASO without the connectivity constraint. Indeed, WASO-dis is simpler than WASO because the constraint is not incorporated. In the following, we prove that WASO can be reduced from WASO-dis. In other words, any algorithm for WASO can also solve WASO-dis. More specifically, given a graph $G_{d}=(V_{d},E_{d})$, WASO-dis finds a subgraph $F_{d}$ with $k$ nodes to maximize the total willingness of the subgraph, i.e., $\max_{F_{d}}\sum_{v_{i}\in F_{d}}(\eta _{i}+\sum_{v_{j}\in F_{d}:e_{i,j}\in E_{d}}\tau _{i,j})$, where $F_{d}$ is a subgraph in $G_{d}$ and not required to be a connected subgraph. For any instance $G_{d}$ of WASO-dis, we construct an instance $G$ for WASO by adding a new node $\overline{v}$ as follows. Let $G=(V,E)$=$(V_{d}\bigcup \overline{v}$, $E_{d}\bigcup E_{\overline{v}})$, where the interest score of $\overline{v}$ is $\eta_{\overline{v}}$=$\epsilon$+$\sum_{v_{i}\in V_{d}}(\eta _{i}+\sum_{v_{j}\in V_{d}:e_{i,j}\in E_{d}}\tau _{i,j})$ with $\epsilon$ as any positive real number. In addition, $\overline{v}$ is connected to every other node $v_{j} \in V_{d}$ with the social tightness score $\tau _{\overline{v},j}$=$0$, and the set of new edges incident to $\overline{v}$ is denoted as $E_{\overline{v}}$. Therefore, $\overline{v}$ will appear in the optimal solution of any WASO instance due to its high interest score.

\vspace{-2mm} 
\begin{theorem}
\label{LemmaWasoDis}
$F^*_d$ is the $k$-node optimal solution of any WASO-dis instance $G_d$ if and only if $F^*$ is the $k+1$-node optimal solution of WASO instance $G$, where $F^*=F^*_d \bigcup {\overline{v}}$.
\end{theorem}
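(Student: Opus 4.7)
The plan is to show that the virtual node $\overline{v}$ must appear in every optimal $(k{+}1)$-node WASO solution on $G$, after which the connectivity constraint becomes vacuous and the WASO objective on $G$ decomposes into the WASO-dis objective on $G_d$ plus a constant. Both directions of the ``if and only if'' then become symmetric consequences of this reduction, so the heart of the argument is really a single equivalence $W(F) = \eta_{\overline{v}} + W_{\mathrm{dis}}(F \setminus \{\overline{v}\})$ combined with a ``forced inclusion'' lemma for $\overline{v}$.

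The first thing I would establish is that $\overline{v} \in F^{*}$ for any optimal WASO solution of size $k{+}1$. For any $F \subseteq V$ of size $k{+}1$ with $\overline{v} \notin F$, the entire willingness $W(F)$ is collected from nodes and edges inside $G_d$ and is therefore bounded above by the total willingness of $V_d$, which by construction equals $\eta_{\overline{v}} - \epsilon$. On the other hand, for any candidate $F' = \{\overline{v}\} \cup F_d$ with $F_d \subseteq V_d$ of size $k$, the edges incident to $\overline{v}$ contribute $0$, so $W(F') = \eta_{\overline{v}} + W_{\mathrm{dis}}(F_d)$; choosing $F_d$ so that $W_{\mathrm{dis}}(F_d) \ge 0$ (always possible under the standard non-negativity assumption on interest and tightness scores) yields $W(F') \ge \eta_{\overline{v}} > \eta_{\overline{v}} - \epsilon \ge W(F)$. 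Hence every optimal $(k{+}1)$-set must contain $\overline{v}$.

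With $\overline{v}$ forced in, the connectivity constraint is automatically satisfied, because $\overline{v}$ is adjacent to every node in $V_d$ and thus the induced subgraph on $\{\overline{v}\} \cup F_d$ is connected for any $F_d$. Using $\tau_{\overline{v}, j} = \tau_{j, \overline{v}} = 0$, the willingness decomposition $W(F) = \eta_{\overline{v}} + W_{\mathrm{dis}}(F_d)$ then shows that maximizing $W$ over $(k{+}1)$-node connected subgraphs of $G$ is equivalent, up to the additive constant $\eta_{\overline{v}}$, to maximizing $W_{\mathrm{dis}}$ over $k$-node subsets of $V_d$. This gives both directions at once: $F^{*}_d$ is optimal for WASO-dis on $G_d$ exactly when $F^{*} = F^{*}_d \cup \{\overline{v}\}$ is optimal for WASO on $G$.

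The only mildly delicate step is the strict domination argument for $\overline{v}$; it hinges on $\eta_{\overline{v}}$ strictly exceeding the maximum willingness attainable inside $G_d$ by $\epsilon$. Under the standard non-negativity assumption this is immediate from the way $\eta_{\overline{v}}$ is defined, and the rest of the proof reduces to the clean willingness decomposition above. If one wanted to accommodate the negative-weight variant mentioned in the ``Foe'' scenario, the patch is simply to redefine $\eta_{\overline{v}}$ using only the positive parts of the $\eta_i$ and $\tau_{i,j}$, which preserves the domination bound without affecting the decomposition step.
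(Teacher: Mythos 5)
Your proof is correct and follows essentially the same route as the paper's: both hinge on the forced inclusion of $\overline{v}$ in every optimal WASO solution together with the fact that adding or removing $\overline{v}$ shifts the willingness by the constant $\eta_{\overline{v}}$ (since $\tau_{\overline{v},j}=0$), so the two optimizations coincide up to that constant. The only substantive difference is that you actually prove the forced-inclusion step via the bound $W(F)\leq \eta_{\overline{v}}-\epsilon$ for $(k{+}1)$-sets avoiding $\overline{v}$ (and correctly flag the non-negativity assumption this needs), whereas the paper merely asserts it in the paragraph preceding the theorem; your version is the more complete of the two.
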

\begin{proof}
We first prove the sufficient condition. Since the optimal solution of WASO must include $\overline{v}$, if $F^*_d \bigcup {\overline{v}}$ is not optimal to WASO, there exists a better solution $\overline{F}$ with $W(\overline{F})>W(F^*_d \bigcup {\overline{v}})$, which implies that $W(\overline{F}-{\overline{v}})>W(F^*_d)$. Because $\overline{F}-{\overline{v}}$ can act as a feasible solution to WASO-dis, $W(\overline{F}-{\overline{v}})>W(F^*_d)$ contradicts that $F^*_d$ is optimal to WASO-dis. Therefore, $F^*=F^*_d \bigcup {\overline{v}}$ is the optimal solution to WASO. 

We then prove the necessary condition. Since the optimal solution of WASO must include $\overline{v}$, if $F^*-{\overline{v}}$ is not optimal to WASO-dis, there exists a better solution $\overline{F}_d$ with $W(\overline{F}_d)>W(F^*-{\overline{v}})$, implying that $W(\overline{F}_d+{\overline{v}})>W(F^*)$, contradicting that $F^*$ is optimal to WASO because $\overline{F}_d+{\overline{v}}$ is also a feasible solution to WASO. The lemma follows.
\end{proof}

\subsection{Related Works}

\label{RelatedWorks}

Given the growing importance of varied social networking applications, there has been a
recent push on the study of user interest scores and social tightness scores
from real social networking data. It has been demonstrated that unknown user
interest attributes can be effectively inferred from a social network
according to the revealed attributes of the friends \cite{MisloveMI}. On the other hand, Wilson%
\textit{\ et al.} \cite{WilsonSocialTies09} derived a new model to quantify
the social tightness between any two friends in Facebook. The number of wall
postings is also demonstrated to be an effective indicator for social tightness \cite%
{GilbertSocialTies09}. Thus, the above studies provide
a sound foundation to quantify the user interest and social tightness scores
in social networks. Moreover, Yang \cite{YangSI12} and Lee \cite%
{LeesI12} sum up the two factors as willingness for marketing and recommendation.
Nevertheless, the above factors crucial in social networks have not been
leveraged for automatic activity planning explored in this paper.

Expert team formation in social networks has attracted extensive research
interests. The problem of constructing an expert team is to find a
set of people owning the specified skills, while the communications cost
among the chosen friends is minimized to ensure the rapport among the team
members for an efficient operation. Two communications costs, diameter and minimum spanning tree, were evaluated. Several extended models have been
studied. For example, each skill $i$ needs to contain at least $k_{i}$
people in order to form a strong team \cite{LiExperts10}, while all-pair
shortest paths are incorporated to describe the communications costs more
precisely \cite{KargarExperts11}. Moreover, a skill leader is selected for
each skill with the goal to minimize the social distance from the skill
members to each skill leader \cite{KargarExperts11}, while the density of a
team is also considered\textit{\ }\cite{GajewarExperts12}.

In addition to expert team formation, community detection as well as graph
clustering and graph partitioning have been explored to find groups of nodes
mostly based on the graph structure \cite{BrandesCD08}. The quality of an
obtained community is usually measured according to the structure inside the
community, together with the connectivity within the community and between the rest
of the nodes in the graph, such as the density of local edges, deviance from
a random null model, and conductance \cite{SeshadhriCD12}. Sozio et al. \cite%
{SozioCocktail10}, for example, detected community by minimizing the total degree of a
community with specified nodes. However, the objective function of WASO is different from community detection. Each node and each edge in WASO are associated with an interest
score and social tightness score in the problem studied in this paper, in order to
maximize the willingness of the attendees with a specified group size, which
can be very useful for social networking websites as a
value-added service.
\vspace{-3mm}
\section{Algorithm Design for WASO}

\label{SIIGQ-CBA}

To solve WASO, a greedy approach incrementally constructs the solution
by sequentially choosing an attendee that leads to the largest increment in
the willingness at each iteration. However, while this approach is simple, it tends to be trapped in a local optimal solution. The search
space of the greedy algorithm is limited because only a single sequence of
nodes is explored. To address the above issues, this paper first proposes a
randomized algorithm \emph{CBAS} to randomly choose $m$ start nodes. Each
start node acts as a seed to be expanded to multiple final solutions. At
each iteration, a partial solution, which consists of only a start node at
the first iteration or a connected set of nodes at any iteration afterward,
is expanded by uniformly selecting at random a node neighboring the
partial solution, until $k$ nodes are included. We leverage the notion of
Optimal Computing Budget Allocation (OCBA) \cite{OCBA10} to randomly
generate more final solutions from each start node that has more potential
to generate the final solutions with high willingness. Later we will prove
that the number of final solutions generated from each start node is
optimally assigned.

After this, we enhance \emph{CBAS} to \emph{CBAS-ND} by differentiating the
selection of the nodes neighboring each partial solution. During each
iteration of \emph{CBAS}, each neighboring node is treated equally and
chosen uniformly at random. A simple way to improve \emph{CBAS} is to
associate each neighboring node with a different probability according to
its interest score and social tightness scores of incident edges.
Yet, this assignment is similar to the greedy algorithm insofar as it limits the
scope to only the local information associated with each node thereby making it difficult to generate a final solution with high willingness. To prevent the generation of only a local optimal solution, \emph{CBAS-ND} deploys
the cross entropy method according to results at the previous stages in
order to optimally assign a probability to each neighboring node.

One advantage of the proposed randomized algorithms is that the tradeoff
between the solution quality and execution time can be easily controlled by
assigning different $T$, which denotes the number of randomly generated
final solutions. Under a given $T$, if $m$ start nodes are generated, the
above algorithms can optimally divide $T$ into $m$ parts for the $m$ start
nodes to find final solutions with high willingness. Moreover, we prove that 
\emph{CBAS} is able to find a solution with an approximation ratio. Compared
with \emph{CBAS}, we further prove that the solution quality of \emph{CBAS-ND} is better with the same computation budget\footnote{%
It is worth noting that randomization is performed only in expanding a start
node to a final solution, not in the selection of a start node. This is because the
approximation ratio is not able to be achieved if a start node is decided
randomly.}. The detailed settings of $T$ and $m$ will be analyzed in Section %
\ref{Exp}. In addition, the notation table with their impacts on the solution are shown in Table \ref{table:notation}.

\begin{table}[t]
\caption{Parameter Summary}
\label{table:notation}
\begin{tabular}{c|p{3cm}|p{3cm}}
\hline\hline
Notation & Description & Impact \\ \hline 
$\tau_{i,j}$ & social tightness score between node $v_{i}$ and $v_{j}$ & set to be a negative value if $v_{i}$ and $v_{j}$ are foes\\ 
$\eta_{i}$ &  interest score of node $v_{i}$ & set to be a negative value if $v_{i}$ does not like the topic\\ 
$\lambda{i}$ & weighting between interest score and tightness score of node $v_{i}$ & set to be zero if $v_{i}$ only considers social tightness and one if $v_{i}$ only concerns topic interest\\
$T$ &  total computation budget & trade-off between solution quality and computation time\\
$m$ &  number of start nodes & sampling coverage\\ \hline
\end{tabular}
\vspace{-15pt}
\end{table}

In the following, we first present \emph{CBAS} to optimally allocate the
computational budgets to different start nodes (Section \ref{CBAAlgorithmdescription}) and then derive the approximation ratio in Section %
\ref{CBA_Thm}. Algorithm \emph{CBAS-ND} will be presented in Section \ref%
{CE_method}.
\vspace{-1mm}
\subsection{Allocation of Computational Budget for \newline
Start Nodes}

\label{CBAAlgorithmdescription}

Given the total computational budgets $T$ specified by users, a simple
approach first randomly selects $m$ start nodes and then expands each
start node to $\frac{T}{m}$ final solutions. However, this homogeneous
approach does not give priority to the start nodes that have more potential
to generate final solutions with high willingness. In contrast, \emph{%
CBAS} optimally allocates more resources to the start nodes with high willingness
with the following phases.
\vspace{-2mm}
\begin{enumerate}
\item \emph{Selection and Evaluation of Start Nodes:} This phase first
selects $m$ start nodes according to the interest scores and social
tightness scores. Afterward, each start node is randomly expanded to a few
final solutions. We iteratively select and add a neighboring node uniformly
at random to a partial solution, until $k$ nodes are selected. The
willingness of each final solution is evaluated for the next phase to
allocate different computational budgets to different start nodes.
\vspace{-3mm}
\item \emph{Allocation of Computational Budgets:} This phase derives the
computational resources optimally allocated to each start node according to
the previous sampled willingness.
\end{enumerate}
\vspace{-3mm}
To optimally allocate the computational budgets for each start node, we first
define the solution quality as follows.

\vspace{-2mm}
\begin{definition}
The solution quality, denoted by Q, is defined as the maximum willingness among all maximal sampled results of the m start nodes, 
\end{definition}%
\vspace{-10pt}
\begin{equation*}
Q=max\{J_{1}^{\ast },J_{2}^{\ast },...,J_{i}^{\ast },...,J_{m}^{\ast }\},
\end{equation*}%
where $J_{i}^{\ast }$ is a random variable representing the maximal
willingness sampled from a final solution expanded from start node $v_{i}$.

Since the maximal sampled result $J_{i}^{\ast }$ of start node $v_{i}$ is
related to the number of sampling times $N_{i}$, i.e., the number of final
solutions randomly generated from $v_{i}$, the mathematical formulation to
optimize the computational budget allocation is defined as\vspace{-2mm} 
\begin{equation*}
\max_{N_{1},N_{2},...,N_{m}}Q,
\end{equation*}%
\vspace{-4mm}
\begin{equation*}
\text{s.t. }N_{1}+N_{2}+...+N_{m}=T.
\end{equation*}%
Let $v_{b}$ denote the start node that are able to generate the solution
with the highest willingness. Obviously, the optimal solution in the above
maximization problem is to allocate all the computational budgets to $v_{b}$%
. However, since $v_{b}$ is not given \textit{a priori}, \emph{CBAS}
divides the resource allocation into $r$ stages, and each stage adjusts the
allocation of computational budgets $\frac{T}{r}$ to different start nodes
according to the sampled willingness from the partial solutions in previous
stages.

For each node, phase 1 of \emph{CBAS }first adds the interest score and the
social tightness scores of incident edges and then chooses the $m$ nodes
with the largest sums as the $m$ start nodes. On the other hand, allocating more
computational budgets to the start node with a larger sum, similar to the
greedy algorithm, does not tend to generate a final solution with high
willingness. For this reason, phase 2 evaluates the sampled willingness to
allocate different computational budgets to each start node.

In stage $t$ of phase 2, let $N_{i,t}$ denote the computational budgets
allocated to start node $v_{i}$ at the $t$-th stage. The ratio of
computational budgets $N_{i,t}$ and $N_{j,t}$ allocated to any two start
nodes $v_{i}$ and $v_{j}$ is \vspace{-2mm} 
\begin{equation*}
\frac{N_{i,t}}{N_{j,t}}=(\frac{d_{i}-c_{b}}{d_{j}-c_{b}})^{N_{_{b}}},
\end{equation*}%
where $d_{i}$ denotes the best sampled willingness of the partial solutions
expanded from start node $v_{i}$ in the previous stages $1,...,t-1$. Notice
that $v_{b}$ here is the start node that enjoys the highest willingness
sampled in the previous stages, $N_{b}$ is the overall computational budgets
allocated to $v_{b}$ in the previous stages, and $c_{b}$ denotes
the worst sampled willingness of the partial solution expanded from start
node $v_{b}$ in the previous stages. Later, we will prove that the
above budgets allocation in each stage is optimal. However, if the
allocated computational budgets for a start node is $0$ at the $t$-th stage,
we prune off the start node in the following $(t+1)$-th stage.

\begin{example}
Figure \ref{IllustratingExample} presents an illustrative
example for \emph{CBAS} with $n=10$, $k=5$, and $m=2$.
Phase 1 first chooses two start nodes by summing up the topic interest score
and the social tightness scores for every node. Therefore, $v_{3}$ with $%
0.8+0.6+0.5+0.9+1+0.4=4.2$ and $v_{10}$ with $0.9+0.6+1+0.9+0.8=4.2$ are
selected. Next, let $T=20$, $P_{b}=0.7$ and $\alpha =0.9$ in this example,
and the number of stages is thus $r\leq \frac{Tk\ln \alpha }{n\ln (\frac{%
2(1-P_{b})}{\frac{n}{k}-1})}=\frac{20\cdot 5\ln 0.9}{10\ln (0.6)}\approx 2.$
Each start node generates $5$ samples at the first stage. In the
beginning, the node selection probability of start node $v_{3}$, i.e., $%
\overrightarrow{p}_{3,1}$, is set to be $\langle\frac{4}{9}$,$\frac{4}{9},1,%
\frac{4}{9},\frac{4}{9},\frac{4}{9},\frac{4}{9},\frac{4}{9},\frac{4}{9},%
\frac{4}{9}\rangle$. The intermediate solution obtained so far is denoted as 
$V_{S}$, and the candidate attendees extracted so far is denoted as $V_{A}$.
Therefore, the total willingness of $V_{S}$\ $=\{v_{3}\}$\ is $0.8$, and $%
V_{A}=\{v_{1},v_{2},v_{4},v_{5},v_{6}\}$. Since the node selection
probability is homogeneous in the first stage, we randomly select $v_{6}$%
\ from $V_{A}$\ to expand $V_{S}$. Now the total willingness of $V_{S}$\ $=\{v_{3},v_{6}\}$ is $W(V_{S})=0.8+0.4+0.9=2.1$, and $%
V_{A}=\{v_{1},v_{2},v_{4},v_{5},v_{7},v_{8},v_{10}\}$. The process of
expanding $V_{S}$\ continues until the cardinality of $V_{S}$\ reaches $5$,
and we record the first sample result $X_{3,1}=\langle
1,0,1,1,1,1,0,0,0,0\rangle$ with the total willingness $8.9$, the worst result
of $v_{3}$ ($c_{3}=8.9$), and the best result of $v_{3}$ ($d_{3}=8.9$). The
other sample results from start node $v_{3}$\ are $X_{3,2}=\langle
1,1,1,1,1,0,0,0,0,0\rangle$ with the total willingness $8.9$, $X_{3,3}=\langle
0,1,1,0,1,1,0,1,0,0\rangle$ with the total willingness $5.9$, $X_{3,4}=\langle
0,1,1,1,1,0,1,0,0,0\rangle$ with the total willingness $7.9$, and $%
X_{3,5}=\langle 0,0,1,0,1,1,1,0,$\ $0,1\rangle$ with the total willingness $9.2$.
The worst and the best results of $v_{3}$\ are updated to $c_{3}=5.9$ and $%
d_{3}=9.2$, respectively. After sampling from node $v_{3}$, we repeat the above process for start node $v_{10}$. The worst result is $c_{10}=6.9$, and the best result is $d_{10}=8.9$.

To allocate the computational budgets for the second stage, i.e., $r=2$, we
first find the allocation ratio $r_{3}:r_{10}$=$1:(\frac{8.8-5.9}{9.2-5.9})^{5}$%
=$1:0.524$. Therefore, the allocated computational budgets for start nodes $%
v_{3}$\ and $v_{10}$\ are $\frac{10}{1.524}\approx 7$\ and $\frac{5.24}{1.524}%
\approx 3$, respectively. At the second stage, the best results of $v_{3}$\
and $v_{10}$\ are $9.2$\ and $8.9$, respectively. Finally, we obtain the
solution $\{v_{3},v_{5},v_{6},v_{7},v_{10}\}$\ with the total willingness $9.2$.
\end{example}

\subsection{Theoretical Result of CBAS}
\label{CBA_Thm}

To correctly allocate the computational budgets $T$ to $m$ start nodes, we
first derive the optimal ratio of computational budgets for any two start
nodes. Afterward, we find the probability $P_{b}$ that node $v_{b}$ is
actually the start node which is able to generate the highest willingness in
each stage. Finally, we derive the approximation ratio and analyze the complexity of \emph{CBAS}.

\vspace{-1mm}
\begin{definition}
A random variable, denoted as  $J_{i}$, is defined to be the sampled value in start node $v_{i}$.
\end{definition}%
\vspace{-1mm}
The literature of OCBA indicates that the distribution of random variable $%
J_{i}$ in most applications is a normal distribution, but the allocation
results are very close to the one with the uniform distribution \cite{OCBA10,OCBA2000}. Therefore, given space constraints, $J_{i}$ here is first
handled as the uniform distribution in [$c_{i},d_{i}$], and the derivation
for the normal distribution is presented in Appendix. The probability density function and cumulative distribution function are
formulated as \vspace{-6pt} 
\begin{equation*}
p_{J_{i}}(x)=\left\{ 
\begin{aligned}
\frac{1}{d_{i}-c_{i}} &~~ \text{     if }c_{i}\leq x\leq d_{i}\\ 
0 &~~ \text{     otherwise.} \\
\end{aligned}\right.
\end{equation*}%
\vspace{-2pt} 
\begin{equation*}
P_{J_{i}}(x)=\left\{ 
\begin{aligned}
0&~~\text{     if }x\leq c_{i}. \\ 
\frac{x-c_{i}}{d_{i}-c_{i}}&~~\text{     if }c_{i}\leq x\leq d_{i}. \\ 
1&~~\text{     otherwise.}%
\end{aligned}
\right.
\end{equation*}%
Therefore, for the maximal value $J_{i}^{\ast }$, \vspace{-2pt} 
\begin{equation*}
p_{J_{i}^{\ast }}(x)=N_{i}P_{J_{i}}(x)^{N_{i}-1}p_{J_{i}}(x),
\end{equation*}%
\vspace{-3mm}
\begin{equation*}
P_{J_{i}^{\ast }}(x)=P_{J_{i}}(x)^{N_{i}}.
\end{equation*}%
\vspace{-18pt}

\begin{theorem}
\label{Prababilitya}Given the best start node $v_{b}$, the probability that $%
J_{i}^{\ast }$ exceeds $J_{b}^{\ast }$ is at most $\frac{1}{2}(\frac{%
d_{i}-c_{b}}{d_{b}-c_{b}})^{N_{b}}.$
\end{theorem}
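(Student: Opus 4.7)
The plan is to exploit the deterministic bound $J_i^{\ast}\le d_i$ to localize the event. Since $\{J_i^{\ast}>J_b^{\ast}\}\subseteq\{J_b^{\ast}\le d_i\}$, I would factor
\[
P(J_i^{\ast}>J_b^{\ast}) \;=\; P(J_b^{\ast}\le d_i)\cdot P\!\left(J_i^{\ast}>J_b^{\ast}\;\middle|\;J_b^{\ast}\le d_i\right),
\]
so that the first factor is obtained in closed form from the stated CDF as $P_{J_b^{\ast}}(d_i)=P_{J_b}(d_i)^{N_b}=((d_i-c_b)/(d_b-c_b))^{N_b}$. This already produces the geometric factor in the theorem, and the remaining task is to bound the conditional probability by $\tfrac{1}{2}$.

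For the conditional step, I would observe that, given $\{J_b^{\ast}\le d_i\}$, the variable $J_b^{\ast}$ is distributed as the maximum of $N_b$ i.i.d.\ uniform samples on $[c_b,d_i]$, with conditional CDF $((y-c_b)/(d_i-c_b))^{N_b}$, while $J_i^{\ast}$ retains its unconditional CDF $((y-c_i)/(d_i-c_i))^{N_i}$ on $[c_i,d_i]$, and both share the same upper endpoint $d_i$. Using the standard identity $P(J_i^{\ast}>J_b^{\ast})=\int F_{J_b^{\ast}}\,dF_{J_i^{\ast}}$ for independent continuous variables, together with $\int F_{J_i^{\ast}}\,dF_{J_i^{\ast}}=\tfrac{1}{2}$ (by substituting $u=F_{J_i^{\ast}}$), the desired bound reduces to the stochastic-dominance inequality $F_{J_b^{\ast}\mid J_b^{\ast}\le d_i}(y)\le F_{J_i^{\ast}}(y)$ on $[c_b,d_i]$, i.e., the champion's conditioned maximum is pulled toward $d_i$ at least as strongly as $J_i^{\ast}$.

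The main obstacle is precisely this one-variable inequality between $((y-c_b)/(d_i-c_b))^{N_b}$ and $((y-c_i)/(d_i-c_i))^{N_i}$, which does not hold for arbitrary parameters and implicitly needs the OCBA allocation regime in which $v_b$ has been given at least as many samples as any competing $v_i$ (so $N_b\ge N_i$), together with $c_i\le c_b$ that is consistent with $v_b$ being the best observed start node in the previous stages. Under these conditions the inequality collapses, after taking logs, to a single-variable convexity comparison on $[c_b,d_i]$. A cleaner but more computational alternative I would also attempt is to carry out the substitution $u=(y-c_b)/(d_b-c_b)$ directly in $\int_{c_b}^{d_i}(1-F_{J_i^{\ast}}(y))\,dF_{J_b^{\ast}}(y)$ and extract the factor $((d_i-c_b)/(d_b-c_b))^{N_b}$ outside the integral, leaving a beta-type integral that can be bounded by $\tfrac{1}{2}$ using the identity $\int_0^1 v^{N_b-1}(1-v)\,dv=1/(N_b(N_b+1))$; this route bypasses the stochastic-dominance step but pushes the delicacy into checking that the residual integral indeed stays below $\tfrac{1}{2}$.
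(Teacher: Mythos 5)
Your main (stochastic-dominance) route is correct and genuinely different from the paper's. The paper evaluates $p(J_b^{\ast}\le J_i^{\ast})=1-\int_{c_b}^{d_b}p_{J_b^{\ast}}(x)P_{J_i^{\ast}}(x)\,dx$ head-on: it splits the integral at $d_i$, substitutes $u=(x-c_b)/(d_b-c_b)$, expands $(g_{b,i}u+h_{b,i})^{N_i}$ by the binomial theorem, and then bounds each $\tfrac{1}{N_b+q}$ below by $\tfrac{1}{2N_b}$ so that the binomial sum recollapses to $\tfrac12(h_{b,i}+g_{b,i}\tfrac{d_i-c_b}{d_b-c_b})^{N_i}=\tfrac12$. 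Your factorization through the event $\{J_b^{\ast}\le d_i\}$ extracts the geometric factor $((d_i-c_b)/(d_b-c_b))^{N_b}$ in one line and replaces the entire expansion by the dominance inequality plus $\int F_{J_i^{\ast}}\,dF_{J_i^{\ast}}=\tfrac12$; that is cleaner and more conceptual. Two remarks. First, the dominance inequality is easier than you anticipate: with $a=\tfrac{y-c_b}{d_i-c_b}$ and $b=\tfrac{y-c_i}{d_i-c_i}$ one has $a\le b$ in $[0,1]$ whenever $c_i\le c_b$ (monotonicity of $c\mapsto(y-c)/(d_i-c)$ for $y\le d_i$), so $a^{N_b}\le a^{N_i}\le b^{N_i}$ when $N_b\ge N_i$ --- no log/convexity argument is needed. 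Second, the side conditions you flag ($N_b\ge N_i$, $c_i\le c_b\le d_i\le d_b$) are not extra hypotheses of your route alone: the paper's own proof uses $d_i>c_b$ explicitly, needs $c_i\le c_b$ for $h_{b,i}$ and $(\tfrac{d_i-c_b}{c_b-c_i})^{q}$ to be nonnegative, and needs $N_i\le N_b$ for the step $\tfrac{1}{N_b+q}\ge\tfrac{1}{2N_b}$; these hold in the OCBA regime where $v_b$ receives the largest budget, and your writeup has the virtue of making them explicit. Your second, ``beta-integral'' alternative should be dropped: bounding the residual integral by $N_b\int_0^1 v^{N_b-1}(1-v)\,dv$ requires $F_{J_i^{\ast}}(c_b+v(d_i-c_b))\ge v$, which fails in general, so only the dominance route is sound.
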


\vspace{-2mm}
\begin{proof}
For $p(J_{b}^{\ast }\leq J_{i}^{\ast }),$ 
\begin{eqnarray*}
p(J_{b}^{\ast }-J_{i}^{\ast } &\leq &z) \\
&=&\int_{c_{b}}^{d_{b}}p_{J_{b}^{\ast }}(x)(1-P_{J_{i}^{\ast }}(x-z))dx \\
\text{ \ \ \ } &=&\int_{c_{b}}^{d_{b}}p_{J_{b}^{\ast
}}(x)dx-\int_{c_{b}}^{d_{b}}p_{J_{b}^{\ast }}(x)P_{J_{i}^{\ast }}(x-z)dx.
\end{eqnarray*}%
Let $z$ equal zero. $p$($J_{b}^{\ast }-J_{i}^{\ast }$ $\leq$ $0$) 
\begin{eqnarray*}
&=&1-%
\int_{c_{b}}^{d_{b}}N_{b}P_{J_{b}}(x)^{N_{b}-1}p_{J_{b}}(x)P_{J_{i}}(x)^{N_{i}}dx
\end{eqnarray*}%
\begin{eqnarray*}
&=&-(\int_{c_{b}}^{d_{i}}N_{b}(\frac{x-c_{b}}{d_{b}-c_{b}})^{N_{b}-1}\frac{1%
}{d_{b}-c_{b}}(\frac{x-c_{i}}{d_{i}-c_{i}})^{N_{i}}dx \\
&&+\int_{d_{i}}^{d_{b}}N_{b}(\frac{x-c_{b}}{d_{b}-c_{b}})^{N_{b}-1}\frac{1}{%
d_{b}-c_{b}}dx)+1.
\end{eqnarray*}%
It is worth noting that $d_{i}>c_{b}$ holds in the above equation. Otherwise, the
probability that $J_{b}^{\ast }$ is smaller than $J_{i}^{\ast }$ will be
zero, i.e., $p(J_{b}^{\ast }<J_{i}^{\ast })=0$. We further change the
variables by letting $\frac{x-c_{b}}{d_{b}-c_{b}}$ be $u$, and $%
(d_{b}-c_{b})du=dx.$ 
\vspace{-6pt} 
\begin{eqnarray*}
&&\frac{1}{d_{b}-c_{b}}\int_{c_{b}}^{d_{i}}N_{b}(\frac{x-c_{b}}{d_{b}-c_{b}}%
)^{N_{b}-1}(\frac{x-c_{i}}{d_{i}-c_{i}})^{N_{i}}dx \\
&&+\int_{d_{i}}^{d_{b}}N_{b}(\frac{x-c_{b}}{d_{b}-c_{b}})^{N_{b}-1}dx
\end{eqnarray*}%
\vspace{-10pt} 
\begin{eqnarray*}
&=&N_{b}\int_{0}^{\frac{d_{i}-c_{b}}{d_{b}-c_{b}}}u^{N_{b}-1}(\frac{%
u(d_{b}-c_{b})+c_{b}-c_{i}}{d_{i}-c_{i}})^{N_{i}}du \\
&&+\frac{1}{d_{b}-c_{b}}\int_{\frac{d_{i}-c_{b}}{d_{b}-c_{b}}%
}^{1}N_{b}u^{N_{b}-1}(d_{b}-c_{b})du \\
&=&\int_{0}^{\frac{d_{i}-c_{b}}{d_{b}-c_{b}}}N_{b}u^{N_{b}-1}(\frac{%
u(d_{b}-c_{b})+c_{b}-c_{i}}{d_{i}-c_{i}})^{N_{i}}du \\
&&+(1-(\frac{d_{i}-c_{b}}{d_{b}-c_{b}})^{N_{b}}).
\end{eqnarray*}
For ease of reading, we denote $g_{b,i}$ as $\frac{d_{b}-c_{b}}{d_{i}-c_{i}}$
and $h_{b,i}$ as $\frac{c_{b}-c_{i}}{d_{i}-c_{i}}$. Then the binomial
theorem is employed for expanding the polynomial term. \vspace{-2mm} 
\begin{eqnarray*}
&&\int_{0}^{\frac{d_{i}-c_{b}}{d_{b}-c_{b}}}N_{b}u^{N_{b}-1}(\frac{%
u(d_{b}-c_{b})+c_{b}-c_{i}}{d_{i}-c_{i}})^{N_{i}}du \\
&=&N_{b}\int_{0}^{\frac{d_{i}-c_{b}}{d_{b}-c_{b}}%
}u^{N_{b}-1}(g_{b,i}u+h_{b,i})^{N_{i}}du \\
&=&N_{b}\int_{0}^{\frac{d_{i}-c_{b}}{d_{b}-c_{b}}%
}\sum_{q=0}^{N_{i}}C_{q}^{N_{i}}g_{b,i}{}^{q}(h_{b,i})^{N_{i}-q}u^{N_{b}-1+q}du\\
&=&N_{b}%
\sum_{q=0}^{N_{i}}C_{q}^{N_{i}}g_{b,i}{}^{q}(h_{b,i})^{N_{i}-q}u^{N_{b}+q}%
\left. \frac{1}{N_{b}+q} \right |^\frac{d_{i}-c_{b}}{d_{b}-c_{b}}_{0}
\end{eqnarray*}
Since $\frac{g_{b,i}{}}{h_{b,i}}\frac{d_{i}-c_{b}}{d_{b}-c_{b}}=\frac{%
d_{i}-c_{b}}{c_{b}-c_{i}}$, the above equation can be further simplified to \vspace*{-2mm} 
\begin{equation*}
(h_{b,i})^{N_{i}}(\frac{d_{i}-c_{b}}{d_{b}-c_{b}})^{N_{b}}N_{b}%
\sum_{q=0}^{N_{i}}C_{q}^{N_{i}}(\frac{g_{b,i}{}}{h_{b,i}})^{q}(\frac{%
d_{i}-c_{b}}{d_{b}-c_{b}})^{q}\frac{1}{N_{b}+q}
\end{equation*}%
\begin{equation}
=(h_{b,i})^{N_{i}}(\frac{d_{i}-c_{b}}{d_{b}-c_{b}})^{N_{b}}N_{b}%
\sum_{q=0}^{N_{i}}C_{q}^{N_{i}}(\frac{d_{i}-c_{b}}{c_{b}-c_{i}})^{q}\frac{1}{%
N_{b}+q}.  \label{Eq:NotSimp}
\end{equation}%
\vspace{-2mm} Then, the probability that $J_{i}^{\ast }$ is better than $%
J_{b}^{\ast }$, i.e., $p(J_{b}^{\ast }\leq J_{i}^{\ast })$ 
\begin{eqnarray*}
&=&(\frac{d_{i}-c_{b}}{d_{b}-c_{b}})^{N_{b}}(1-(h_{b,i})^{N_{i}}N_{b}%
\sum_{q=0}^{N_{i}}C_{q}^{N_{i}}(\frac{d_{i}-c_{b}}{c_{b}-c_{i}})^{q}\frac{1}{%
N_{b}+q}) \\
\end{eqnarray*}
\vspace{-26pt}
\begin{eqnarray*}
&\leq &(\frac{d_{i}-c_{b}}{d_{b}-c_{b}})^{N_{b}}(1-(h_{b,i})^{N_{i}}N_{b}%
\frac{1}{2N_{b}}\sum_{q=0}^{N_{i}}C_{q}^{N_{i}}(\frac{d_{i}-c_{b}}{%
c_{b}-c_{i}})^{q}) \\
&=&(\frac{d_{i}-c_{b}}{d_{b}-c_{b}})^{N_{b}}(1-\frac{1}{2}%
(h_{b,i})^{N_{i}}(1+\frac{d_{i}-c_{b}}{c_{b}-c_{i}})^{N_{i}}) \\
&=&\frac{1}{2}(\frac{d_{i}-c_{b}}{d_{b}-c_{b}})^{N_{b}}.\end{eqnarray*}\vspace{-8pt}
\end{proof}

With the result above, we allocate the computational budgets by
\begin{equation}
\frac{N_{i}}{N_{j}}=\frac{P(J_{i}^{\ast }\geq J_{b}^{\ast })}{P(J_{j}^{\ast
}\geq J_{b}^{\ast })}=(\frac{d_{i}-c_{b}}{d_{j}-c_{b}})^{N_{b}}.
\label{Eq:Thm1Eq}
\end{equation}

Since it is impossible to enumerate every final solution expanded from a
start node, the ratio of the computational budget allocation is optimal in
OCBA \cite{OCBA10} if the first equality in Eq. (\ref{Eq:Thm1Eq})
holds. Thus, it is optimal to allocate the computational budgets to $N_{i}$
and $N_{j}$ according to the ratio $(\frac{d_{i}-c_{b}}{d_{j}-c_{b}}%
)^{N_{b}} $. Notice that if $d_{i}$ is smaller than $c_{b}$, the probability
that $J_{b}^{\ast }$ is smaller than $J_{i}^{\ast }$ is zero.

Intuitively, the above result indicates that if the best random sample,
i.e., $d_{i}$, from a start node is small, it is unnecessary to repeat the
sampling process too many times since the users nearby the start node are
not really interested in the activity or they have an estranged friendship. On the
other hand, as the number of sample times increases, it is expected that the
identified best start node enjoys the highest willingness.

The following theorem first analyzes the probability $P_{b}$ that $v_{b}$, as decided according to the samples in the previous stages, is actually the
start node that generates the highest willingness. Let $\alpha$ denote the closeness ratio between the maximum of the start node with the highest willingness and the maximum of other start nodes, i.e., $\alpha=(d_{a}-c_{b})/(d_{b}-c_{b})$, where $v_{a}$ generates the maximum willingness among other start nodes. Therefore, in addition to $0$ and $1$, $\alpha$ is allowed to be any other value from $0$ to $1$.

\vspace{-2pt}
\begin{theorem}
\label{ControlComputationThm1} For WASO with parameter $(m,T)$, where $m$ is
the number of start nodes and $T$ is the total computational budgets, the
probability $P_{b}$ that $v_{b}$ selected according to the previous stages
is actually the start node with the highest willingness is at least $1-\frac{%
1}{2}(m-1)\alpha ^{\frac{T}{mr}}$.
\end{theorem}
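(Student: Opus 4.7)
The plan is to decompose the event ``the identified start node is the true best'' into its complement and then apply a union bound that reduces the analysis to $m-1$ pairwise comparisons, each of which is already controlled by Theorem \ref{Prababilitya}.

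First, let $v_{b^\ast}$ denote the true best start node, i.e., the one whose random expansion can in principle attain the largest willingness. The selection procedure in \emph{CBAS} picks the start node whose largest sampled value $J_i^\ast$ is maximum, so the event ``the selection is correct'' is equivalent to $\{J_i^\ast \leq J_{b^\ast}^\ast \text{ for all } i\neq b^\ast\}$. Applying the union bound to its complement gives
\begin{equation*}
P_b \;\geq\; 1 - \sum_{i\neq b^\ast} P(J_i^\ast \geq J_{b^\ast}^\ast).
\end{equation*}
By Theorem \ref{Prababilitya}, each term on the right-hand side is at most $\tfrac{1}{2}\bigl((d_i-c_{b^\ast})/(d_{b^\ast}-c_{b^\ast})\bigr)^{N_{b^\ast}}$.

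Next I would invoke the definition of $\alpha$. Since $v_a$ is the runner-up among all start nodes, every competing $v_i$ satisfies $d_i \leq d_a$, and therefore $(d_i-c_{b^\ast})/(d_{b^\ast}-c_{b^\ast}) \leq \alpha$ for every $i\neq b^\ast$. Substituting this uniform bound into the union-bound inequality collapses the sum into $(m-1)\cdot\tfrac{1}{2}\alpha^{N_{b^\ast}}$, giving $P_b \geq 1 - \tfrac{1}{2}(m-1)\alpha^{N_{b^\ast}}$.

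It remains to translate $N_{b^\ast}$ into the parameters $(T,m,r)$ that appear in the statement. Because \emph{CBAS} partitions the total budget $T$ into $r$ stages and the first stage distributes its budget $T/r$ uniformly across the $m$ start nodes, every start node, including $v_{b^\ast}$, is sampled at least $T/(mr)$ times before the first reallocation. Since $\alpha\in(0,1)$, the function $x\mapsto \alpha^x$ is decreasing, so $\alpha^{N_{b^\ast}} \leq \alpha^{T/(mr)}$, yielding the claimed bound $P_b \geq 1 - \tfrac{1}{2}(m-1)\alpha^{T/(mr)}$.

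The main obstacle I anticipate is the last step: one must argue carefully that $v_{b^\ast}$ really does accumulate at least $T/(mr)$ samples at the stage where the selection probability is being evaluated. The uniform first-stage allocation handles the base case cleanly, but later stages allocate adaptively based on noisy sampled maxima, so care is needed to ensure the bound $N_{b^\ast}\geq T/(mr)$ is not undermined when $v_{b^\ast}$ is mis-identified early and subsequently receives a reduced allocation. A clean way to sidestep this is to apply the analysis stage-by-stage, invoking the first-stage uniform allocation as the worst case; the bound $\alpha^{T/(mr)}$ then represents the guarantee one has at the earliest evaluation point, and the allocation rule of Eq.~(\ref{Eq:Thm1Eq}) only improves it thereafter.
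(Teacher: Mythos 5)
Your proposal is correct and follows essentially the same route as the paper: a Bonferroni/union bound over the $m-1$ competitors, Theorem \ref{Prababilitya} applied to each pairwise term, the definition of $\alpha$ via the runner-up $v_a$ to uniformly bound each ratio, and finally $N_b \geq \frac{T}{mr}$ from the uniform first-stage allocation. If anything, you are slightly more careful than the paper at the two points where it is terse — using $d_i \leq d_a$ rather than setting $d_i = c_b + \alpha(d_b - c_b)$ outright, and explicitly justifying why $v_b$ accumulates at least $\frac{T}{mr}$ samples — but the argument is the same.
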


\begin{proof}
According to the Bonferroni inequality, $p\{\cap _{i=1}^{m}(Y_{i}<0)\}\geq
1-\sum_{i=1}^{m}[1-p(Y_{i}<0)]$. In our case, $Y_{i}$ is replaced by $%
J_{i}^{\ast }-J_{b}^{\ast }$ to acquire a lower bound for the probability that $v_{b}$ enjoys the highest willingness. Therefore, by using Theorem \ref%
{Prababilitya},
\vspace{-1mm}
\begin{eqnarray*}
P_{b} &=&p\{\cap _{i=1,i\neq b}^{m}(J_{i}^{\ast }-J_{b}^{\ast }\leq 0)\} \\
&\geq &1-\sum_{i=1,i\neq b}^{m}[1-p(J_{i}^{\ast }-J_{b}^{\ast }\leq 0)] \\
\end{eqnarray*}%
\vspace{-8mm}
\begin{eqnarray*}
&=&1-\sum_{i=1,i\neq b}^{m}p(J_{b}^{\ast }\leq J_{i}^{\ast }) \\
&\geq &1-\frac{1}{2}\sum_{i=1,i\neq b}^{m}(\frac{d_{i}-c_{b}}{d_{b}-c_{b}}%
)^{N_{b}}.
\end{eqnarray*}\vspace{-0.2mm} 
Let $d_{i}=c_{b}+\alpha (d_{b}-c_{b})$, where $\alpha $ is close to 1. The above equation can be further simplified to $P_{b}\geq 1-\frac{1}{2}(m-1)\alpha^{N_{b}}$ $\geq$ $1-\frac{1}{2}(m-1)\alpha ^{\frac{T}{rm}}$.
\end{proof}\vspace{-2mm} Given the total budgets $T$, the following theorem
derives a lower bound of the solution obtained by \emph{CBAS}. \vspace{-2mm}

\begin{theorem}
\label{ControlComputationThm2} For a WASO optimization problem with $r$%
-stage computational budget allocation, the maximum willingness $E[Q]$ from
the solution of \emph{CBAS} is at least $N_{b}(\frac{1}{N_{b}+1})^{\frac{%
N_{b}+1}{N_{b}}}\cdot Q^{\ast }$, where $N_{b}$ after $r$ stages is $\frac{4+m(r-1)}{4rm}T$,
and $Q^{\ast }$ is the optimal solution.
\end{theorem}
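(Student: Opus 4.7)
The plan is to decompose the theorem into two independent pieces: a probabilistic lower bound on $E[J_b^*]$ in terms of $Q^*$ and $N_b$, and a deterministic lower bound on $N_b$ given the $r$-stage allocation strategy. First, since $Q=\max_{i}J_i^*\geq J_b^*$ by the definition of $Q$, it suffices to lower-bound $E[J_b^*]$. Because by definition $v_b$ is the start node whose expansions include the optimal solution, the support of the willingness random variable $J_b$ reaches $Q^*$; under the uniform-distribution model already adopted in Section~\ref{CBA_Thm} (in the worst case, pushing the lower endpoint down to $0$), the maximum of $N_b$ i.i.d.\ samples satisfies $P(J_b^*\geq \theta Q^*)=1-\theta^{N_b}$ for every $\theta\in[0,1]$.

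Applying the tail-to-expectation inequality $E[J_b^*]\geq \theta Q^*\cdot P(J_b^*\geq \theta Q^*)$ yields $E[J_b^*]\geq (\theta-\theta^{N_b+1})Q^*$ for every $\theta\in[0,1]$. Differentiating $\theta-\theta^{N_b+1}$ and setting the derivative to zero gives the optimizer $\theta^*=(1/(N_b+1))^{1/N_b}$; substituting back simplifies to
\[
E[J_b^*]\;\geq\;\frac{N_b}{N_b+1}\cdot\Bigl(\frac{1}{N_b+1}\Bigr)^{1/N_b}Q^*\;=\;N_b\Bigl(\frac{1}{N_b+1}\Bigr)^{(N_b+1)/N_b}Q^*,
\]
which is exactly the claimed approximation ratio.

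It remains to lower-bound $N_b$. Stage~$1$ allocates its budget $T/r$ uniformly across the $m$ start nodes (no prior samples are yet available), contributing $T/(rm)$ draws to $v_b$. In each of the subsequent $r-1$ stages, combining the allocation ratio of Theorem~\ref{Prababilitya} with the confidence $P_b$ of Theorem~\ref{ControlComputationThm1} that the sampled-best is genuinely $v_b$, a worst-case analysis shows that $v_b$ receives at least one-quarter of each stage's $T/r$ budget. Summing,
\[
N_b\;\geq\;\frac{T}{rm}+(r-1)\cdot\frac{T}{4r}\;=\;\frac{4+m(r-1)}{4rm}\,T,
\]
and combining with the expectation bound above completes the argument.

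The main obstacle will be rigorously justifying the $1/4$ per-stage floor: the quantities $c_b$, $d_b$, and the $d_i$ that drive the allocation ratio $((d_i-c_b)/(d_b-c_b))^{N_b}$ are themselves random and updated stage by stage, so pinning down an unconditional worst-case share requires coupling Theorem~\ref{Prababilitya}'s tail bound across the $m-1$ competitor nodes with Theorem~\ref{ControlComputationThm1}'s identification probability and translating both into a distribution-free floor. A secondary, milder subtlety is that the uniform-distribution assumption serves only as a modelling device (the normal case being deferred to the appendix), but the concavity-based optimization that pins down $\theta^*$ is distribution-agnostic up to cosmetic changes, so the approximation ratio itself is robust.
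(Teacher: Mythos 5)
Your proposal follows essentially the same route as the paper: the paper bounds $F_Q(\Delta)=\prod_i F_{J_i^*}(\Delta)$ by the single factor for $v_b$ (which is exactly your reduction $Q\geq J_b^*$), applies the same tail-to-expectation inequality, optimizes over the threshold (your $\theta$ is the paper's $1-\rho$) to get $N_b(\tfrac{1}{N_b+1})^{(N_b+1)/N_b}$, and obtains $N_b=\tfrac{T}{rm}+(r-1)\tfrac{T}{4r}$ by the identical uniform-first-stage-plus-quarter-share accounting. The two points you flag as obstacles --- the per-stage $1/4$ floor and the identification of $d_b$ with $Q^*$ --- are asserted rather than proved in the paper as well, so your writeup is at the same level of rigor as the original.
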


\vspace{-2mm} 
\begin{proof}
We first derive the lower bound of $E[Q]$ as follows.
The random variable $Q$ is
denoted as$\ max\{J_{1}^{\ast },...,J_{m}^{\ast }\}$. The cumulative density
function is%
\begin{eqnarray*}
F_{Q}(Q &\leq &\Delta )=F(max\{J_{1}^{\ast },...,J_{m}^{\ast }\}\leq \Delta )
\\
&=&F(J_{1}^{\ast }\leq \Delta ,J_{2}^{\ast }\leq \Delta ,...,J_{m}^{\ast
}\leq \Delta ) \\
\end{eqnarray*}
\vspace{-10mm} 
\begin{eqnarray*}
&=&F_{J_{1}^{\ast }}(\Delta )F_{J_{2}^{\ast }}(\Delta )...F_{J_{m}^{\ast
}}(\Delta ) \\
&=&(\frac{\Delta -c_{1}}{d_{1}-c_{1}})^{N_{1}}(\frac{\Delta -c_{2}}{%
d_{2}-c_{2}})^{N_{2}}...(\frac{\Delta -c_{m}}{d_{m}-c_{m}})^{N_{m}},
\end{eqnarray*}%
where $F_{J_{i}^{\ast }}(\Delta )=1,$ for $\Delta \geq d_{i}$. After exploiting
Markov's Inequality,\vspace{-2mm} 
\begin{eqnarray*}
F_{Q}(Q &\geq &\Delta )\leq \frac{E[Q]}{\Delta }. \\
E[Q] &\geq &\Delta F_{Q}(Q\geq \Delta ) \\
&=&\Delta (1-(\frac{\Delta -c_{1}}{d_{1}-c_{1}})^{N_{1}}(\frac{\Delta -c_{2}%
}{d_{2}-c_{2}})^{N_{2}}...(\frac{\Delta -c_{m}}{d_{m}-c_{m}})^{N_{m}}) \\
&\geq &\Delta (1-(\frac{\Delta -c_{b}}{d_{b}-c_{b}})^{N_{b}}).
\end{eqnarray*}%
We normalize the lower bound and upper bound with $%
c_{b}=0$ and $d_{b}=1$. Let $\Delta $ be the top-$\rho $ percentile solution
value, i.e. $\Delta =c_{b}+(1-\rho )(d_{b}-c_{b}).$ Therefore,
\begin{equation*}
E[\widetilde{Q}]\geq (1-\rho )(1-(1-\rho )^{N_{b}}).
\end{equation*}%
To find the maximum $(1-\rho )(1-(1-\rho )^{N_{b}})$, we let\vspace{-1mm} 
\begin{equation*}
\frac{\partial (1-\rho )(1-(1-\rho )^{^{N_{b}}})}{\partial \rho }=0.
\end{equation*}%
The maximum $(1-\rho )(1-(1-\rho )^{N_{b}})$ is acquired when $\rho $ is $%
1-(N_{b}+1)^{-\frac{1}{N_{b}}}$. Therefore, 
\begin{equation*}
E[\widetilde{Q}]\geq N_{b}(\frac{1}{N_{b}+1})^{\frac{N_{b}+1}{N_{b}}}.
\end{equation*}%
Since $\widetilde{Q}$ is a lower bound of $\frac{Q}{Q^{\ast }}$,
\begin{equation*}
E[Q]\geq N_{b}(\frac{1}{N_{b}+1})^{\frac{N_{b}+1}{N_{b}}}\cdot Q^{\ast }.
\end{equation*}%
If the computational budget allocation is $r-$stages with $T\geq $\\ $mr\frac{\ln (m-1)}{\ln (\frac{1}{\alpha })}$, $N_{b}$\emph{\ }is $\frac{T}{r}/m+\frac{1}{2}\frac{r-1}{2r}T$, which
is \\$\frac{4+m(r-1)}{4rm}T$.
\end{proof}

\textbf{Time Complexity of CBAS.} The time complexity of \emph{CBAS }contains two
parts. The first phase selects $m$ start nodes with $O( E+n+$ $m\log
n)$ time, where $O(E)$ is to sum up the interest and social tightness
scores, $O(n+m\log n)$ is to build a heap and extract $m$ nodes with the
largest sum. Afterward, the second phase of \emph{CBAS }includes $r$ stages,
and each stage allocates the computation resources with $O(m)$ time and
generates $O(\frac{T}{r})$ new partial solutions with $k$ nodes for all start nodes.
Therefore, the time complexity of the second phase is $O\left( r(m+\frac{T}{r%
}k)\right) =O(kT)$, and \emph{CBAS }therefore needs $O(E+m\log n+kT)$ running
time.

\section{Neighbor Differentiation in \newline
Randomization}

\label{CE_method}

\subsection{Greedy Neighbor Differentiation}

In Section \ref{CBAAlgorithmdescription}, \emph{CBAS} includes two phases.
The first phase initiates the start nodes, while the second phase allocates
different computational budgets to each start node to generate different
numbers of final solutions. During the growth of a partial solution, \emph{%
CBAS} chooses a neighboring node uniformly at random at each iteration. In
other words, each neighboring node of the partial solution is treated
equally. It is expected that this homogeneous strategy needs more
computational budgets, because a neighboring node inclined to generate a
final solution with high willingness is not associated with a higher
probability.

To remedy this issue, a simple algorithm \emph{RGreedy} (randomized greedy)%
\textit{\ }associates each neighboring node with a different
probability according to its interest score and social tightness scores of
the edges incident to the partial solution $S_{t-1}$ obtained in the
previous stage, which is similar to the concept in the greedy algorithm. Given $%
S_{t-1}$, the ratio of the probabilities that \emph{RGreedy} selects nodes $%
v_{i}$ and $v_{j}$ at iteration $t$ is \vspace{-2mm} 
\begin{equation*}
\frac{P(v_{i}|S_{t-1})}{P(v_{j}|S_{t-1})}=\frac{W(\{v_{i}\}\cup S_{t-1})}{%
W(\{v_{j}\}\cup S_{t-1})},  \label{Eq:COND_ratio}
\end{equation*}%
where $W(\{v_{i}\}\cup S_{t-1})$ denotes the willingness of the node set $%
\{\{v_{i}\}\cup S_{t-1}\}$. At each iteration, \emph{RGreedy}
randomly selects a vertex in accordance with $W(\{v_{j}\}\cup S_{t-1})$,
until $k$ nodes are included.

Intuitively, \emph{RGreedy} can be regarded as a \textit{randomized version}
of the greedy algorithm with $m$ start nodes, while the greedy algorithm is
a deterministic algorithm with only one start node. Thus, similar to the
greedy algorithm, the assignment of the probability limits the scope to only
the local information associated with each node and incident edges. It is
envisaged that \textit{RGreedy} is difficult to generate a final solution
with high willingness, which is also demonstrated in Section \ref{Exp}. In
contrast, we propose \emph{CBAS-ND} by exploiting the cross entropy method
according to the sampling partial solutions in previous stages, in order to
optimally assign a probability to each neighboring node.

\subsection{Neighbor Differentiation with Cross Entropy}

\label{CBACEAlgorithmdescription}

We enhance \emph{CBAS} to \emph{CBAS-ND} to differentiate the selection of a
node neighboring each partial solution. Algorithm \emph{CBAS} is divided
into $r$ stages. In each stage, it optimally adjusts the computational
budgets allocated to each start node according to the sampled maximum and
minimum willingness in previous stages. To effectively improve \emph{CBAS}, 
\emph{CBAS-ND} takes advantage of the cross entropy method \cite{RubinsteinCE01} to
achieve importance sampling by adaptively assigning a different probability to each neighboring node from the
sampled results in previous stages. In contrast to \emph{RGreedy} with a
greedy-based probability vector assigned to the neighboring nodes, it is
expected that \emph{CBAS-ND} is able to obtain final solutions with better quality. Indeed, later in Section \ref{CBACE_Thm}, we
prove that the solution quality of \emph{CBAS-ND} is better than \emph{CBAS} with the same computational budget.

The flowchart of \emph{CBAS-ND} is shown in Figure \ref{CBACE_flowchart}. We
first define the node selection probability vector in \emph{CBAS-ND}, which
specifies the probability to add a node in $G$ to the current partial
solution expanded from a start node.

\begin{definition}
Let $\overrightarrow{p}_{i,t}$ denote the node selection probability vector
for start node $v_{i}$ in stage $t$.\newline
\centerline{$\overrightarrow{p}_{i,t}$= $\langle
p_{i,t,1}$,...,$p_{i,t,j}$,..., $p_{i,t,n} \rangle$,}\newline
where $p_{i,t,j}$ is the probability of selecting node $v_{j}$ for start
node $v_{i}$ in the $t$-th stage.
\end{definition}

In the first stage, the node selection probability vector $\overrightarrow{p}%
_{i,1}$ for each start node $v_{i}$ is initialized homogeneously for every
node, i.e. $\overrightarrow{p}_{i,1,j}=(k-1)/|V|$, $\forall v_{j}\in G$, $%
v_{j}\neq v_{i}$. That is, computational budgets $\frac{T_{1}}{m}$
are identically assigned to each start node, and the probability associated
with every node is also the same. However, different from \emph{CBAS} and 
\emph{RGreedy}, \emph{CBAS-ND} here examines the top-$\rho $ samples for
each start node $v_{i}$ to generate $\overrightarrow{p}_{i,2}$, so that
the node probability will be differentiated according to sampled result in stage $1$.

\begin{figure}[t]
\centering
\includegraphics[height=1.6036in, width=2.7809in]{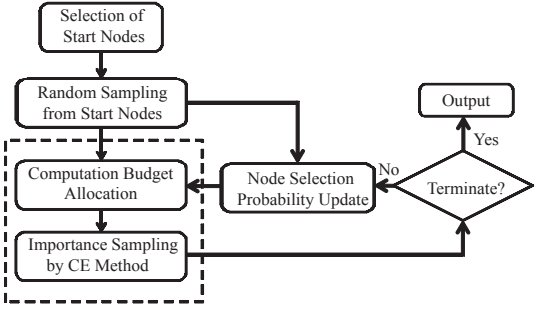} 
\caption{Flowchart of CBAS-ND}
\vspace{-18pt}
\label{CBACE_flowchart}
\end{figure}

\begin{definition}
A Bernoulli sample vector, denoted as $X_{i,q}=\langle x_{i,q,1},...,$ $%
x_{i,q,j},...,x_{i,q,n}\rangle$, is defined to be the $q$-th sample vector
from start node $v_{i}$, where $x_{i,q,j}$ is $1$ if node $v_{j}$ is
selected in the $q$-th sample and $0$ otherwise.
\end{definition}

\vspace{-2mm}

\begin{definition}
$\gamma_{i,t}$ is denoted as the top-$\rho $ sample quantile of the
performances in the $t$-th stage of start node $v_{i}$, i.e., $\gamma_{i,t}$=%
\newline
$W_{(\left\lceil \rho N_{i,t}\right\rceil )}$.
\end{definition}

Specifically, after collecting $N_{i,1}$ samples $X_{i,1},X_{i,2},...,$ $%
X_{i,q},...,$ $X_{i,N_{i,1}}$ generated from $\overrightarrow{p}_{i,1}$ for
start node $v_{i}$, Node Selection Probability Update\textbf{\ }in Figure %
\ref{CBACE_flowchart} calculates the total willingness \newline
$W(X_{i,q})$ for each sample, and sorts them in the descending order, $%
W_{(1)}\geq ...\geq W_{(N_{i,1})}$, while $\gamma _{i,1}$ denotes the
willingness of the top-$\rho $ performance sample, i.e. $\gamma
_{i,1}=W_{(\left\lceil \rho N_{i,1}\right\rceil )}$ . With those sampled
results, the selection probability $p_{i,2,j}$ of every node $v_{j}$ in the
second stage is derived according to the following equation, \vspace{-1.5mm} 
\begin{equation}
p_{i,t+1,j}=\frac{\sum_{q=1}^{N_{i,t}}I_{\{W(X_{i,q})\geq \gamma
_{i,t}\}}x_{i,q,j}}{\sum_{q=1}^{N_{i,r}}I_{\{W(X_{i,q})\geq \gamma _{i,t}\}}}%
,
\label{Eq:ce-opt}
\end{equation}%
where the indicator function $I_{\{W(X_{i,q})\geq \gamma
_{i,t}\}}$ is defined on the feasible solution space $\chi $ such that $%
I_{\{W(X_{i,q})\geq \gamma _{i,t}\}}$ is $1$ if the willingness of sample $%
X_{i,q}$ exceeds a threshold $\gamma _{i,t}$ $\in $ $\mathbb{R}$, and $0$
otherwise. Eq. (\ref{Eq:ce-opt}) derives the node selection probability
vector by fitting the distribution of top-$\rho $ performance samples.
Intuitively, if node $v_{j}$ is included in most top-$\rho $ performance
samples in $t$-th stage, $p_{i,t+1,j}$ will approach 1 and be selected in $(t+1)$-th stage.

Later in Section \ref{CBACE_Thm}, we prove that the above probability
assignment scheme is optimal from the perspective of cross entropy. Eq.
(\ref{Eq:ce-opt})\textbf{\ }minimizes the Kullback-Leibler cross entropy
(KL) distance \cite{RubinsteinCE01} between node selection probability $%
\overrightarrow{p}_{i,t}$ and the distribution of top-$\rho $ performance
samples, such that the performance of random samples in $t+1$ is guaranteed
to be closest to the top-$\rho $ performance samples in $t$. Therefore, by
picking the top-$\rho $\ performance samples to generate the partial
solutions in the next stage, the performance of random samples is expected
to be improved after multiple stages. Most importantly, by minimizing the KL
distance, the convergence rate is maximized.

Moreover, it is worth noting that a smoothing technique is necessary to be
included in adjusting the selection probability vector, \vspace{-1.5mm} 
\begin{equation*}
\overrightarrow{p}_{i,t+1}=w\overrightarrow{p}_{i,t+1}+(1-w)\overrightarrow{p%
}_{i,t},
\end{equation*}%
to avoid setting $0$ or $1$ in the selection probability for any node $v_{j}$%
, because $v_{j}$ will no longer appear or always appear in this case. An
example illustrating \emph{CBAS-ND} is provided as follows. As
demonstrated in Section \ref{CBACE_Thm}, the solution quality of \emph{CBAS-ND} is better than \emph{CBAS} with the same computation budget.

\begin{figure}[t]
\centering
\includegraphics[height=1.2182in, width=3.009in]{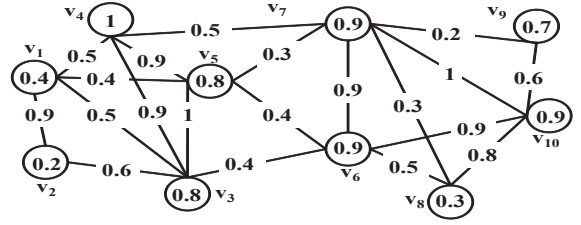} \vspace{-2pt}
\caption{An illustrative example for CBAS and CBAS-ND}
\vspace{-12pt}
\label{IllustratingExample}
\end{figure}

\begin{example}
\vspace{-2mm} Take Figure \ref{IllustratingExample} as an illustrating example of \emph{CBAS-ND}. Since \emph{CBAS-ND} is different from \emph{CBAS} in the second phase to
obtain the node selection probability vector, we continue from the result of the first phase in Section 3, i.e., the allocated computational budgets for start node $v_{3}$ and $v_{10}$ are $7$ and $3$ respectively, and illustrate the second phase of \emph{CBAS-ND} with Figure \ref{IllustratingExample}. 

By sorting the willingness samples $X_3$ to $W=\langle9.2,8.9,8.9,7.9$,\ $5.9\rangle$, $\gamma _{1}$\ is equal to $W_{(\left\lceil \frac{1}{2}%
5\right\rceil )}=8.9.$\ Therefore, the samples with the total willingness
exceeding $8.9$ include $X_{1}$, $X_{2}$, and $X_{5}$, which are
used to update\ the node selection probability $\overrightarrow{p}_{3,2}$\
to $\langle\frac{2}{3},\frac{1}{3},1,\frac{2}{3}$,\ $1,\frac{2}{3},\frac{1}{3},0,0,0\rangle$. Then, the smoothing technique is adopted with $w=0.6$, and the node
selection probability $\overrightarrow{p}_{3,2}$\ becomes \vspace{-4pt} 
\begin{eqnarray*}
\widehat{p\text{ }}_{3,2} &=&0.6\langle\frac{2}{3}\emph{,}\frac{1}{3},1,%
\frac{2}{3},1,\frac{2}{3},\frac{1}{3},0,0,0\rangle \\
&&+0.4\langle\frac{4}{9}\text{\emph{,}}\frac{4}{9}\text{,}1\text{,}\frac{4}{9%
}\text{,}\frac{4}{9}\text{,}\frac{4}{9}\text{,}\frac{4}{9}\text{,}\frac{4}{9}%
\text{,}\frac{4}{9}\text{,}\frac{4}{9}\rangle \\
&=&\langle\frac{5.2}{9}\emph{,}\frac{3.4}{9},1,\frac{5.2}{9},\frac{7}{9},%
\frac{5.2}{9},\frac{3.4}{9},\frac{1.6}{9},\frac{1.6}{9},\frac{1.6}{9}\rangle.
\end{eqnarray*}%
After sampling from node $v_{3}$, we repeat the above process for start node 
$v_{10}$. The worst result is $c_{10}=6.9$, the best result is $d_{10}=8.9$, and
the node selection probability is $\overrightarrow{p}_{10,2}=\langle\frac{1.6}{9%
},\frac{1.6}{9},\frac{1.6}{9},\frac{3.4}{9},$\ $\frac{5.2}{9},\frac{5.2}{9},%
\frac{7}{9},\frac{5.2}{9},\frac{5.2}{9},1\rangle.$ At the second stage, the best results of $v_{3}$ and $v_{10}$\ are $9.7$ and $8.9$, respectively. Finally, we obtain the solution $\{v_{3},v_{4},v_{5},v_{6},v_{7}\}$ with the total willingness $%
9.7 $, which is also the optimal solution in this example and outperforms the solution obtained from \emph{CBAS}.
\end{example}
\vspace{-2mm}

\subsection{Theoretical Result of CBAS-ND}

\label{CBACE_Thm}
In the following, we prove that the probability assignment with
the cross-entropy method \cite{RubinsteinCE01} in Eq. (\ref{Eq:ce-opt}) is optimal. The idea of
cross-entropy method originates from importance sampling\footnote{%
Importance sampling \cite{RubinsteinCE01} is used to estimate the properties
of a target distribution by using the observations from a different
distribution. By changing the distribution, the "important" values can be
effectively extracted and emphasized by sampling more frequently to reduce
the sample variance.}, i.e., by changing the distribution of sampling on
different neighbors such that the neighbors having the potential to
boost the willingness are able to be identified and included. Therefore, we
first derive the probability of a random sample according to the sampling
results in previous stages. After this, we introduce importance sampling and
derive the node selection probability vector in the WASO problem to
replace the original sampling vector such that the Kullback-Leibler cross
entropy (KL) distance between the sampling vector and the optimal importance
sampling vector is minimized. Intuitively, a small KL distance
ensures that two distributions are very close and implies that the node selection probability vector is optimal because the KL distance between the node selection probability vector in \emph{CBAS-ND} and optimal node selection probability vector is minimized. Equipped with importance sampling vector, later in this
section we prove that the solution quality of \emph{CBAS-ND }is better than 
\emph{CBAS}.

More specifically, let $\chi $ denote the feasible solution space, and $X$
is a feasible solution in $\chi $, i.e., $X\in \chi $. WASO chooses a
group of attendee $X^{\ast }$ to find the maximum willingness $\gamma
^{\ast }$,\vspace{-1.5mm} 
\begin{equation*}
W(X^{\ast })=\gamma ^{\ast }=\max_{X\in \chi }W(X).
\end{equation*}%
To derive the probability that the willingness of a random sample $X$
exceeds a large value $\gamma $, i.e. $W(X)\geq \gamma $, it is necessary
for \emph{CBAS} to generate many samples given that it uniformly selects a neighboring
node at random. In contrast, \emph{CBAS-ND} leverages the notion
of importance sampling to change the distribution of sampling on different
neighbors. In the following, we first derive the optimal distribution of
sampling. First, for the initial partial solution with one start node, let $%
f(X;\overrightarrow{p})$ denote the probability density function of
generating a sample $X$ according a real-valued vector $\overrightarrow{p}$,
and $f(\cdot ;\overrightarrow{p})$ is a family of probability density
functions on $\chi $, i.e., \vspace{-1.2mm} 
\begin{equation*}
f(\cdot ;\overrightarrow{p})=\{f(X;\overrightarrow{p})|X\in \chi \}.
\end{equation*}%
\emph{CBAS} can be regarded as a special case of \emph{CBAS-ND} with the
homogeneous assignment on the above vector. A random sample $X(%
\overrightarrow{p})$ for $\overrightarrow{p}=\{p_{1}$,...,$p_{j}$,...,$%
p_{n}\}$ is generated with probability $f(X(\overrightarrow{p});%
\overrightarrow{p})$, where $p_{j}$ denotes the probability of selecting
node $v_{j}$ and is the same for all $j$ in \emph{CBAS}. The probability $%
P_{_{\overrightarrow{p}}}(\gamma )$ that the willingness of $X(%
\overrightarrow{p})$ exceeds the threshold $\gamma $ is \vspace{-0.5mm} 
\begin{equation*}
P_{\overrightarrow{p}}(\gamma )=\mathbb{P}_{\overrightarrow{p}}(W(X(%
\overrightarrow{p}))\geq \gamma )
\end{equation*}%
\vspace{-5mm} 
\begin{equation*}
\hspace{80.5pt}=\sum\limits_{X\in \chi }I_{\{W(X(\overrightarrow{p}))\geq
\gamma \}}f(X(\overrightarrow{p});\overrightarrow{p}).
\end{equation*}

However, the above equation is impractical and inefficient for a large
solution space, because it is necessary to scan the whole solution space $%
\chi $ and sum up the probability $f(X(\overrightarrow{p});\overrightarrow{p}%
)$ of every sample $X$ with $W(X(\overrightarrow{p}))\geq \gamma $. To more efficiently address this issue, a direct way to derive the estimator $%
\widehat{P}_{\overrightarrow{p}}(\gamma )$ of $P_{\overrightarrow{p}}(\gamma )$ is by employing a crude Monte-Carlo simulation and drawing $N$ random
samples $X_{1}(\overrightarrow{p})$,..., $X_{N}(\overrightarrow{p})$ by $%
f(\cdot ,\overrightarrow{p})$ to find $P_{\overrightarrow{p}}(\gamma )$%
, \vspace*{-2mm} 
\begin{equation*}
\widehat{P}_{\overrightarrow{p}}(\gamma )=\frac{1}{N}%
\sum\limits_{i=1}^{N}I_{\{W(X_{i}(\overrightarrow{p}))\geq \gamma \}}.
\end{equation*}%
However, the crude Monte-Carlo simulation poses a serious problem when $%
\{W(X(\overrightarrow{p}))\geq \gamma \}$ is a rare event since rare events
are difficult to be sampled, and thus a large sample number $N$ is
necessary to estimate $P_{\overrightarrow{p}}(\gamma )$ correctly.

Based on the above observations, \emph{CBAS-ND} attempts to find the distribution $%
f(X(\overrightarrow{p});\overrightarrow{p})$ based on another importance
sampling pdf $f(X(\overrightarrow{p_{g}});\overrightarrow{p_{g}})$ to reduce the required
sample number. For instance, consider a network with $3$ nodes, i.e. $%
V=\{v_{1},v_{2},v_{3}\}$, and the 2-node group where the maximum willingness $%
\gamma ^{\ast }$ is $\{v_{1},v_{2}\}$. The expected number of samples with
node selection vector $\{\frac{2}{3},\frac{2}{3},\frac{2}{3}\}$ in \emph{CBAS} is larger than the node selection vector of $\{1,1,0\}$ in \emph{CBAS-ND}. In finer detail, let $X_{i}(\overrightarrow{p_{g}})$ denote
the $i$-th random sample generated by $f(X(\overrightarrow{p_{g}});%
\overrightarrow{p_{g}})$. \emph{CBAS-ND} first creates random samples $%
X_{1}(\overrightarrow{p_{g}})$,..., $X_{N}(\overrightarrow{p_{g}})$
generated by $\overrightarrow{p_{g}}$ on $\chi $ and then estimates $%
\widehat{P}_{\overrightarrow{p}}(\gamma )$ according to the likelihood ratio
(LR) estimator $\frac{f(X_{i}(\overrightarrow{p_{g}});\overrightarrow{p})}{%
f(X_{i}(\overrightarrow{p_{g}});\overrightarrow{p_{g}})}$, \vspace{-2.7mm} 
\begin{equation*}
\widehat{P}_{\overrightarrow{p}}(\gamma )=\frac{1}{N}\sum%
\limits_{i=1}^{N}I_{\{W(X_{i}(\overrightarrow{p}))\geq \gamma \}}
\end{equation*}%
\vspace{-3.5mm} 
\begin{equation}
\text{ \ \ \ \ \ \ \ \ \ \ \ \ \ \ \ \ \ \ \ \ \ \ \ }=\frac{1}{N}%
\sum\limits_{i=1}^{N}\{I_{\{W(X_{i}(\overrightarrow{p_{g}}))\geq \gamma \}}%
\frac{f(X_{i}(\overrightarrow{p_{g}});\overrightarrow{p})}{f(X_{i}(%
\overrightarrow{p_{g}});\overrightarrow{p_{g}})}.  \label{ab}
\end{equation}%
Notice that the above equation holds when $N$ is infinity, but in most cases 
$N$ only needs to be sufficiently large in practical implementation \cite{CostaCEConvergence07}. Now the question becomes how to derive $\overrightarrow{p_{g}%
}$ for importance sampling pdf $f(X(\overrightarrow{p_{g}});\overrightarrow{%
p_{g}})$ to reduce the number of samples. The optimal importance sampling pdf $%
f^{\ast }(X_{i}(\overrightarrow{p_{g}});\overrightarrow{p_{g}})$ to
correctly estimate $P_{\overrightarrow{p}}(\gamma )$ thus becomes 
\begin{equation}
f^{\ast }(X_{i}(\overrightarrow{p_{g}});\overrightarrow{p_{g}})=\frac{%
I_{\{W(X_{i}(\overrightarrow{p_{g}}))\geq \gamma \}}f(X_{i}(\overrightarrow{%
p_{g}});\overrightarrow{p})}{P_{\overrightarrow{p}}(\gamma )}.
\label{Eq:G_OPT}
\end{equation}%
In other words, by substituting $f(X_{i}(\overrightarrow{p_{g}});%
\overrightarrow{p_{g}})$ with $f^{\ast }(X_{i}(\overrightarrow{p_{g}});%
\overrightarrow{p_{g}})$ in Eq. (\ref{ab}), $\widehat{P}_{%
\overrightarrow{p}}(\gamma )=\frac{1}{N}\sum\limits_{i=1}^{N}P_{%
\overrightarrow{p}}(\gamma )$ holds, implying that only $1$ sample is
required to estimate the correct $P_{\overrightarrow{p}}(\gamma )$, i.e., $N=1
$. However, it is difficult to find the optimal $f^{\ast }(X(\overrightarrow{%
p_{g}});\overrightarrow{p_{g}})$ since it depends on $P_{\overrightarrow{p}%
}(\gamma )$, which is unknown \emph{a priori} and is therefore not practical
for WASO.

Based on the above observations, \emph{CBAS-ND} optimally finds $\overrightarrow{p_{g}}$ and the importance sampling pdf $f(X(\overrightarrow{p_{g}});\overrightarrow{p_{g}})$ to minimize the Kullback-Leibler cross
entropy (KL) distance between $f(X(\overrightarrow{p_{g}});\overrightarrow{%
p_{g}})$ and optimal importance sampling pdf $f^{\ast }(X(\overrightarrow{%
p_{g}});\overrightarrow{p_{g}})$, where the KL distance measures two
densities $f^{\ast }$ and $f$ as\vspace{-1.5mm} 
\begin{equation}
D(f^{\ast },f)=\sum\limits_{X\in \chi }f^{\ast }(X)\ln f^{\ast
}(X)-\sum\limits_{X\in \chi }f^{\ast }(X)\ln f(X).
\label{KLdistFunc}
\end{equation}%
The first term in the above equation is related to $f^{\ast }$ and is fixed, and
minimizing $D(f^{\ast },f)$ is equivalent to maximizing the second term,
i.e., $\sum_{X\in \chi }f^{\ast }(X)\ln f(X)$. It is worth noting that the
importance sampling pdf $f(X(\overrightarrow{p_{g}});\overrightarrow{p_{g}})$
is referenced to a vector $\overrightarrow{p_{g}}$. Thus, after substituting 
$f^{\ast }(X_{i}(\overrightarrow{p_{g}});\overrightarrow{p_{g}})$ in
Eq. (\ref{Eq:G_OPT}) into the Eq. (\ref{KLdistFunc}), the reference
vector $\overrightarrow{p_{g}}$ of importance sampling pdf $f(X(%
\overrightarrow{p_{g}});\overrightarrow{p_{g}})$ that maximizes the second
term of Eq. (\ref{KLdistFunc}) is the optimal reference vector $%
\overrightarrow{p_{g}}^{\ast }$ with the minimum KL distance\textbf{,}%
\vspace{-1mm} 
\begin{equation}
\overrightarrow{p_{g}}^{\ast }=\arg \max_{\overrightarrow{p_{g}}%
}\sum\limits_{X\in \chi }\frac{I_{\{W(X(\overrightarrow{p_{g}}))\geq \gamma
\}}f(X(\overrightarrow{p_{g}});\overrightarrow{p})}{P_{\overrightarrow{p}%
}(\gamma )}\ln f(X(\overrightarrow{p_{g}});\overrightarrow{p_{g}}).
\label{Eq:CE_FINALEQ}
\end{equation}%
Since $P_{\overrightarrow{p}}(\gamma )$ is not related to $\overrightarrow{p_{g}}$. Eq. (\ref{Eq:CE_FINALEQ}) is equivalent to 
\begin{equation*}
\arg \max_{\overrightarrow{p_{g}}}\mathbb{E}_{\overrightarrow{p_{g}}}
{I_{\{W(X(\overrightarrow{p_{g}}))\geq \gamma\}}\ln f(X(\overrightarrow{p_{g}});\overrightarrow{p_{g}})},
\label{Eq:CE_FINALEQ2}
\end{equation*}

Because it is computationally intensive to generate and compare every
feasible $\overrightarrow{p_{g}}$, we estimate $\mathbb{E}_{\overrightarrow{p_{g}}}
{I_{\{W(X(\overrightarrow{p_{g}}))\geq \gamma\}}}$ by drawing $N$ samples as \vspace{-3mm} 
\begin{equation*}
\arg \max_{\overrightarrow{p_{g}}}\frac{1}{N}\sum\limits_{i=1}^{N}I_{%
\{W(X_{i}(\overrightarrow{p_{g}}))\geq \gamma \}}\ln f(X_{i}(\overrightarrow{%
p_{g}});\overrightarrow{p_{g}}).
\end{equation*}%
Specifically, \emph{CBAS-ND} first generates random samples $%
X_{1}$,...,$X_{i}$,...,\newline
$X_{N},$ where $X_{i}$ is the $i$-th sample and is a Bernoulli vector
generated by a node selection probability vector $\overrightarrow{p_{g}}$, i.e., $%
X_{i}=(x_{i,1},...,$ $x_{i,n})$ $\sim$ $Ber(\overrightarrow{p_{g}})$, where $%
\overrightarrow{p_{g}}=\{p_{1}$,...,$p_{j}$,...,$p_{n}\}$ and $p_{j}$ denotes
the probability of selecting node $v_{j}$. Consequently, the pdf $f(X_{i}(\overrightarrow{p_{g}});\overrightarrow{p_{g}})$ is \vspace{-2mm}
\begin{equation*}
f(X_{i}(\overrightarrow{p_{g}});\overrightarrow{p_{g}})=\prod%
\limits_{j=1}^{N}p_{j}^{x_{i,j}}(1-p_{j})^{1-x_{i,j}}.
\end{equation*}%
\vspace{-0.2mm} To find the optimal reference vector $\overrightarrow{p}%
^{\ast }$ with Eq. (\ref{Eq:CE_FINALEQ}), we first calculate the first
derivative w.r.t. $p_{j}$, \vspace{-2mm} 
\begin{equation}
\frac{\partial }{\partial p_{j}}\ln f(X_{i}(\overrightarrow{p_{g}});%
\overrightarrow{p_{g}})=\frac{\partial }{\partial p_{j}}\ln
p_{j}^{x_{i,j}}(1-p_{j})^{1-x_{i,j}}.  \label{Eq:firstderivativeCE}
\end{equation}%
\vspace{-2mm} Since\ $x_{i,j}$ can be either 0 or 1, Eq. (\ref%
{Eq:firstderivativeCE}) is simplified to\vspace{-0.5mm} 
\begin{equation*}
\frac{\partial }{\partial p_{j}}\ln f(X_{i}(\overrightarrow{p_{g}});%
\overrightarrow{p_{g}})=\frac{1}{(1-p_{j})p_{j}}(x_{i,j}-p_{j}).
\end{equation*}%
The optimal reference vector $\overrightarrow{p}^{\ast }$ is obtained by
setting the first derivative of Eq. (\ref{Eq:CE_FINALEQ}) to zero. 
\vspace{-2.5mm} 
\begin{eqnarray*}
&&\frac{\partial }{\partial p_{j}}\sum\limits_{i=1}^{N}I_{\{W(X_{i,j})\geq
\gamma \}}\ln f(X_{i}(\overrightarrow{p_{g}});\overrightarrow{p_{g}}) \\
&=&\frac{1}{(1-p_{j})p_{j}}\sum\limits_{i=1}^{N}I_{\{W(X_{i})\geq \gamma
\}}(x_{i,j}-p_{j})=0.
\end{eqnarray*}%
Finally, the optimal $p_{j}$ assigned to each node $v_{j}$ is \vspace{-0.5mm}
\begin{equation*}
p_{j}=\frac{\sum_{i=1}^{N}I_{\{W(X_{i})\geq \gamma \}}x_{i,j}}{%
\sum_{i=1}^{N}I_{\{W(X_{i})\geq \gamma \}}}.
\end{equation*}

\begin{theorem}
\label{NE_Thm} The solution quality of CBAS-ND
is better than CBAS under the same computation budget $T$.
\end{theorem}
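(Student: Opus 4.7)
The plan is to exploit the fact that CBAS is a degenerate special case of CBAS-ND in which the node selection probability vector $\overrightarrow{p}_{i,t}$ is pinned to the uniform vector at every stage $t$, whereas CBAS-ND is free to adjust $\overrightarrow{p}_{i,t}$ stage-by-stage via the update rule in Eq.~(\ref{Eq:ce-opt}). Because both algorithms inherit the same first phase (selection and evaluation of the $m$ start nodes) and the same computational budget allocation rule derived in Section~\ref{CBA_Thm}, I only need to compare the two procedures at the level of generating the $N_{i,t}$ samples from each start node $v_i$ in each stage.

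First, I would recast both algorithms inside the importance sampling framework developed above, writing each sample $X_{i,q}$ as a Bernoulli vector drawn from $f(\cdot;\overrightarrow{p}_{i,t})$. For CBAS, the reference vector $\overrightarrow{p}_{i,t}$ is fixed (homogeneous) in every stage, so the induced pdf $f_{\text{CBAS}}$ never moves toward the optimal importance sampling pdf $f^{*}$ defined in Eq.~(\ref{Eq:G_OPT}); for CBAS-ND, the update in Eq.~(\ref{Eq:ce-opt}) is exactly the minimizer (over the Bernoulli family) of the KL distance $D(f^{*},f(\cdot;\overrightarrow{p}_{i,t+1}))$ based on the top-$\rho$ samples of stage $t$. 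Hence, for every $t\ge 2$ and every start node $v_i$, $D(f^{*},f_{\text{CBAS-ND}}(\cdot;\overrightarrow{p}_{i,t}))\le D(f^{*},f_{\text{CBAS}}(\cdot;\overrightarrow{p}_{i,t}))$, with equality only in the degenerate case when the uniform vector already coincides with the CE minimizer.

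Next, I would translate this KL inequality into a statement about sampled willingness. Since $f^{*}$ concentrates all its mass on high-willingness samples $\{X:W(X)\ge\gamma\}$, a smaller KL distance to $f^{*}$ implies a larger value of $\mathbb{P}_{\overrightarrow{p}_{i,t}}(W(X)\ge\gamma)$ for the thresholds $\gamma$ of interest; this is the standard consequence of cross-entropy importance sampling and follows from the likelihood ratio formula in Eq.~(\ref{ab}). Therefore, given the same per-stage budget $N_{i,t}$, the expected number of samples with willingness exceeding any threshold $\gamma$ is at least as large for CBAS-ND as for CBAS, and strictly larger once the update in Eq.~(\ref{Eq:ce-opt}) has deviated from uniform. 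Because the solution quality $Q=\max_i J_i^{*}$ is monotone in the tail of the sampling distribution and the budget allocation rule $N_{i,t}/N_{j,t}=((d_i-c_b)/(d_j-c_b))^{N_b}$ is identical in the two algorithms, this tail dominance lifts to $\mathbb{E}[Q_{\text{CBAS-ND}}]\ge\mathbb{E}[Q_{\text{CBAS}}]$ under the same total budget $T$, establishing the theorem.

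The main obstacle, which I would treat carefully rather than gloss over, is arguing that Eq.~(\ref{Eq:ce-opt}) remains the KL minimizer even though the thresholds $\gamma_{i,t}$ are empirical quantiles (random) rather than fixed; the convergence results for the CE method (e.g., the analysis of \cite{CostaCEConvergence07}) guarantee that for $N$ sufficiently large the empirical minimizer converges to the population minimizer, which is enough for the monotone comparison between the two algorithms to go through. A secondary subtlety is the smoothing step $\overrightarrow{p}_{i,t+1}\!\leftarrow w\overrightarrow{p}_{i,t+1}+(1-w)\overrightarrow{p}_{i,t}$: I would note that, as a convex combination of the previous and updated vectors, smoothing only slows the movement toward $f^{*}$ and therefore does not reverse the KL-inequality on which the argument rests.
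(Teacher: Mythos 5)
Your route is genuinely different from the paper's, and it contains a gap at its central step. The paper does not argue via KL distances at all: it defines $\phi_{v}^{t}$ as the probability of generating the optimal solution in stage $t$, invokes the convergence bound of \cite{CostaCEConvergence07}, namely $1-P(E_{r})\geq 1-P(E_{1})\exp(-\frac{N_{i}}{r}\phi_{u}^{1}\sum_{t=1}^{r-1}w^{tn})$, observes that this lower bound is increasing in the smoothing weight $w$, and identifies \emph{CBAS} with the degenerate case $w=0$, for which the sum vanishes. The whole comparison is thus carried out on the probability of hitting the optimal solution as a function of $w$, not on tail probabilities of the sampled willingness.

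The gap in your argument is the inference from $D(f^{\ast},f_{\text{CBAS-ND}})\leq D(f^{\ast},f_{\text{CBAS}})$ to $\mathbb{P}_{\text{CBAS-ND}}(W(X)\geq\gamma)\geq\mathbb{P}_{\text{CBAS}}(W(X)\geq\gamma)$. Minimizing the KL distance to $f^{\ast}$ over the Bernoulli family is equivalent to maximizing $\mathbb{E}_{f^{\ast}}[\ln f(X)]$, which rewards $f$ for covering the support of $f^{\ast}$ in a log-likelihood sense; it does not give a monotone relationship between the KL distance and the tail mass $\mathbb{P}_{f}(W(X)\geq\gamma)$, and certainly not uniformly over all thresholds $\gamma$, which is what your stochastic-dominance step (lifting the comparison to $\mathbb{E}[Q]$ via the maximum of the samples) requires. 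Moreover, the empirical update in Eq.~(\ref{Eq:ce-opt}) is fitted to the top-$\rho$ samples of the previous stage and can concentrate around a local optimum; it is exactly the known failure mode of cross-entropy methods that such an update may \emph{decrease} the probability of ever sampling the global optimum relative to uniform sampling, which is why the guarantee in \cite{CostaCEConvergence07} is phrased as a lower bound on the probability of reaching the optimum that depends on $w$, rather than as a per-stage improvement of the sampling distribution. To repair your argument you would need either to restrict attention to the probability of the single optimal solution (as the paper does) or to actually prove first-order stochastic dominance of the sampled willingness under the CE-updated vector; neither follows from the KL minimization alone.
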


\vspace{-2mm}
\begin{proof}
Let $\overrightarrow{v}_{t}$ be the node selection vector in the $t$-th
stage, where $\overrightarrow{v}_{t}=\{\overrightarrow{v}_{t,1},%
\overrightarrow{v}_{t,2},...\overrightarrow{v}_{t,n}\}$. We first define the
random variables $\phi _{v}^{_{t,i}}=v_{t,i}I_{\{x_{i}^{\ast
}=1\}}+(1-v_{t,i})I_{\{x_{i}^{\ast }=0\}}$ for all $i=1,...,n$, where $%
I_{\{x_{i}^{\ast }=1\}}$ is the indicator function with 1 if node $v_{i}$
is in the optimal solution, and $0$ otherwise. Then, let $\phi _{v}^{t}$ denote the probability to generate the optimal solution in the $t$-th
stage,
\vspace{-3mm}
\begin{equation*}
\phi _{v}^{t}=f(X^{\ast };\overrightarrow{p}_{v})=\prod\limits_{i=1}^{n}%
\phi _{v}^{t,i}\text{.}
\end{equation*}%
Let $E_{r}$ be the event that does not sample the optimal solution in the
final $r$-th stage. From the previous work \cite{CostaCEConvergence07}, the
probability for the willingness to converge to the optimal solution can be formulated as%
\vspace{-2mm}
\begin{equation}
1-P(E_{r})\geq 1-P(E_{1})\exp (-\frac{N_{i}}{r}\phi
_{u}^{1}\sum_{t=1}^{r-1}w^{tn}),
\label{willeq}
\end{equation}%
where $w$ in Eq. (\ref{willeq}) is the smoothing technique parameter. Therefore, since \emph{CBAS} is identical to \emph{CBAS-ND} with $w=0$,
the convergence rate that \emph{CBAS-ND} samples the optimal solution is
larger than \emph{CBAS.} Therefore, to achieve the same solution quality, \emph{CBAS-ND} requires less computation budget than \emph{CBAS}. When \emph{CBAS} runs out of computation budget, i.e., $T$, the computation budget that \emph{CBAS-ND} achieves the same quality is less than $T$. Let $r_{ND}$ denote the number of stage that \emph{CBAS-ND} achieves the same quality. Since $r_{ND} \leq r$, we have 
\vspace{-2mm}
\begin{equation*}
1-P(E_{1})\exp (-\frac{N_{i}}{r_{ND}}\phi_{u}^{1}\sum_{t=1}^{r_{ND}-1}w^{tn})
\end{equation*}
\vspace{-2mm}
\begin{equation*}
\leq 1-P(E_{1})\exp (-\frac{N_{i}}{r}\phi _{u}^{1}\sum_{t=1}^{r-1}w^{tn})
\end{equation*}
Therefore, the solution quality of \emph{CBAS-ND} is better than \emph{CBAS}. The theorem follows.
\end{proof}\vspace{-2mm}

\textbf{Time Complexity of CBAS-ND.} 
\emph{CBAS-ND }is different from \emph{CBAS} in the second phase to
find the node selection probability vector, which needs $O(r(mn\rho \frac{T}{r}%
+\frac{T}{r}k))=O(mn\rho T)$. Therefore, the time complexity of \emph{CBAS-ND} is $O(E+m\log n+mn\rho T)$. However, in reality we can directly set the
probability to $0$ for every node not neighboring a partial solution of a
start node. Therefore, as shown in Section 5, the experimental result manifests that the execution time of \emph{CBAS-ND} is not far from \emph{CBAS}, and both \emph{CBAS} and \emph{CBAS-ND} are much faster than \emph{RGreedy}.

\subsection{Discussion}
\subsubsection{Online computation}
In the process of social activity planning, some candidate attendees may not accept the invitations, and an online algorithm to adjust the solution according to user responses can help us handle the dynamic situation. If the online decision of multiple attendees are dependent, the situation is similar to the entangled transactions \cite{Gupta11} in databases, in which it is necessary that transactions be processed coordinately in multiple entangled queries.
Therefore, we extend \emph{CBAS-ND} to cope with the dynamic situation as follows. If a user can not attend the activity, it is necessary to invite new attendees. Nevertheless, we have already sent invitations, and some of them have already confirmed to attend. Therefore, \emph{CBAS-ND} regards those confirmed attendees as the initial solution in the second phase and removes the nodes that can not attend the activity from $G$. Therefore, the node selection probability vector $\overrightarrow{p}_{i,t}$ will be updated to identify the new neighbors leading to better solutions according to the confirmed attendees. It is worth noting that the above online computation is fast since the start nodes in the first phase have been decided.
\subsubsection{Backtracking}
In addition to online computation, we extend \emph{CBAS-ND} for backtracking to further improve the solution quality as follows. As shown in the previous work \cite{CostaCEConvergence07,RubinsteinCE01}, the criterion of convergence for Cross-Entropy method is that the node selection probability vector does not change over a number of iterations. Motivated by the above work, given the node selection probability $\overrightarrow{p}_{i,t}$ of \emph{CBAS-ND} at each stage $t$, we derive the difference $z_i$ between $\overrightarrow{p}_{i,t}$ and $\overrightarrow{p}_{i,t-1}$ as follows.
\begin{equation*}
z_i=\sum_j^n{(\overrightarrow{p}_{i,t,j}-\overrightarrow{p}_{i,t-1,j})^2}.
\end{equation*}%
When the difference $z_i$ between $\overrightarrow{p}_{i,t}$ and $\overrightarrow{p}_{i,t-1}$ is lower than a given threshold $z_t$, which indicates that the solution quality converges, we backtrack the solution by resetting the node selection probability $\overrightarrow{p}_{i,t}$ to $\overrightarrow{p}_{i,t-1}$ and re-sample.

\subsubsection{CBAS-ND for Different Scenarios}
For the scenarios of \textbf{couple and foe}, \textbf{invitation}, and \textbf{exhibition}, \emph{CBAS-ND} can be directly applied by modifying the node and edge weights of the graph. For the scenario of separate groups, the start nodes are selected first, and the virtual node $\overline{v}$ is then added to the selection set $V_{S}$ to relax the connectivity constraint.

\section{Experimental Results}

\label{Exp}

In this section, we first present the results of user study and then evaluate the performance of the proposed algorithms with different parameter settings on real datasets. 
\vspace{-3pt}

\subsection{Experiment Setup}

\label{DataPreparation} We implement \emph{CBAS-ND} in Facebook and invite
137 people from various communities, e.g., schools, government, technology
companies, and businesses to join our user study, to compare the solution
quality and the time to answer WASO with manual coordination and \emph{CBAS-ND} for demonstrating the need of an automatic group recommendation service. Each user is asked to plan 10 social
activities with the social graphs extracted from their social networks in
Facebook. The interest scores follow the power-law distribution according to the recent analysis \cite{PowerLawClauset09} on real datasets, which has found the power exponent $%
\beta =2.5$. The social tightness score between two friends is derived according to the widely adopted model based on the number of common friends that represent the proximity interaction \cite{ChaojiNnodeLink12}. Then, social tightness scores and interest scores are normalized. Nevertheless, after the scores are returned by the above renowned models, each user is still allowed to
fine-tune the two scores by themselves. The 10 problems explore
various network sizes and different numbers of attendees in two different
scenarios. In the first 5 problems, the user needs to participate
the group activity and is inclined to choose her close friends, while the
following 5 problems allow the user to choose an arbitrary group of people
with high willingness. In other words, \textit{CBAS-ND }in the first 5
problems always chooses the user as a start node. In addition to the user study, three real datasets are tested in the experiment.
The first dataset is crawled from Facebook with $90,269$\ users in the New
Orleans network\footnote{%
http://socialnetworks.mpi-sws.org/data-wosn2009.html.}. The second dataset
is crawled from DBLP dataset with $511,163$ nodes and $1,871,070$ edges. The
third dataset, Flickr\footnote{%
http://socialnetworks.mpi-sws.org/data-imc2007.html.}, with $1,846,198$
nodes and $22,613,981$ edges, is also incorporated to demonstrate the
scalability of the proposed algorithms. 

In the following, we compare \emph{DGreedy}, \emph{RGreedy}, \emph{CBAS}, 
\emph{CBAS-ND}, and \emph{IP (Integer Programming) }solved by IBM CPLEX in
an HP DL580 server with four Intel E7-4870 2.4 GHz CPUs and 128 GB RAM. IBM
CPLEX is regarded as the fastest general-purpose parallel optimizer, and we
adopt it to solve the Integer Programming formulation for finding the
optimal solution to WASO\footnote{Note that because WASO is NP-Hard, it is only possible to find the optimal
solutions to WASO with IBM CPLEX in small cases.}. The details of Integer Programming formulation is presented in Appendix \ref{integerprogrammingformulation}.It is worth noting that even though \emph{RGreedy} performs much better than its counterpart \emph{DGreedy} and is closer to \emph{CBAS} and \emph{CBAS-ND}, it is computation intensive and not scalable to support a large group size. Therefore, we can only plot a few results of \emph{RGreedy} in some figures. The default $m$ is set to be $n/k$ since $n/k$ different $k$-person groups can be partitioned from a network with $n$. With $m$ equal $n/k$, the start nodes averagely cover the whole network. Nevertheless, the experimental analysis manifests that $m$ can be set to be smaller than $n/k$ in WASO since the way we select start nodes efficiently prunes the start nodes which do not generate good solutions. The computational budget of \emph{CBAS-ND} is not wasted much since the start node that do not generate good solutions will be pruned after the first stage. The default cross-entropy parameters $\rho $ and $w$ are $0.3$ and $0.9$ respectively, and $\alpha $ is $0.99$ as recommended by the cross-entropy method \cite{RubinsteinCE01}. The results with different
settings of parameters will be presented. Since \emph{CBAS}\ and \emph{%
CBAS-ND} natively support parallelization, we also implemented them with
OpenMP for parallelization, to demonstrate the gain in
parallelization with more CPU cores.

\subsection{User Study}

\label{UserStudy}

\begin{figure}[tp]
\centering
\subfigure[] {\
\includegraphics[scale=0.15]{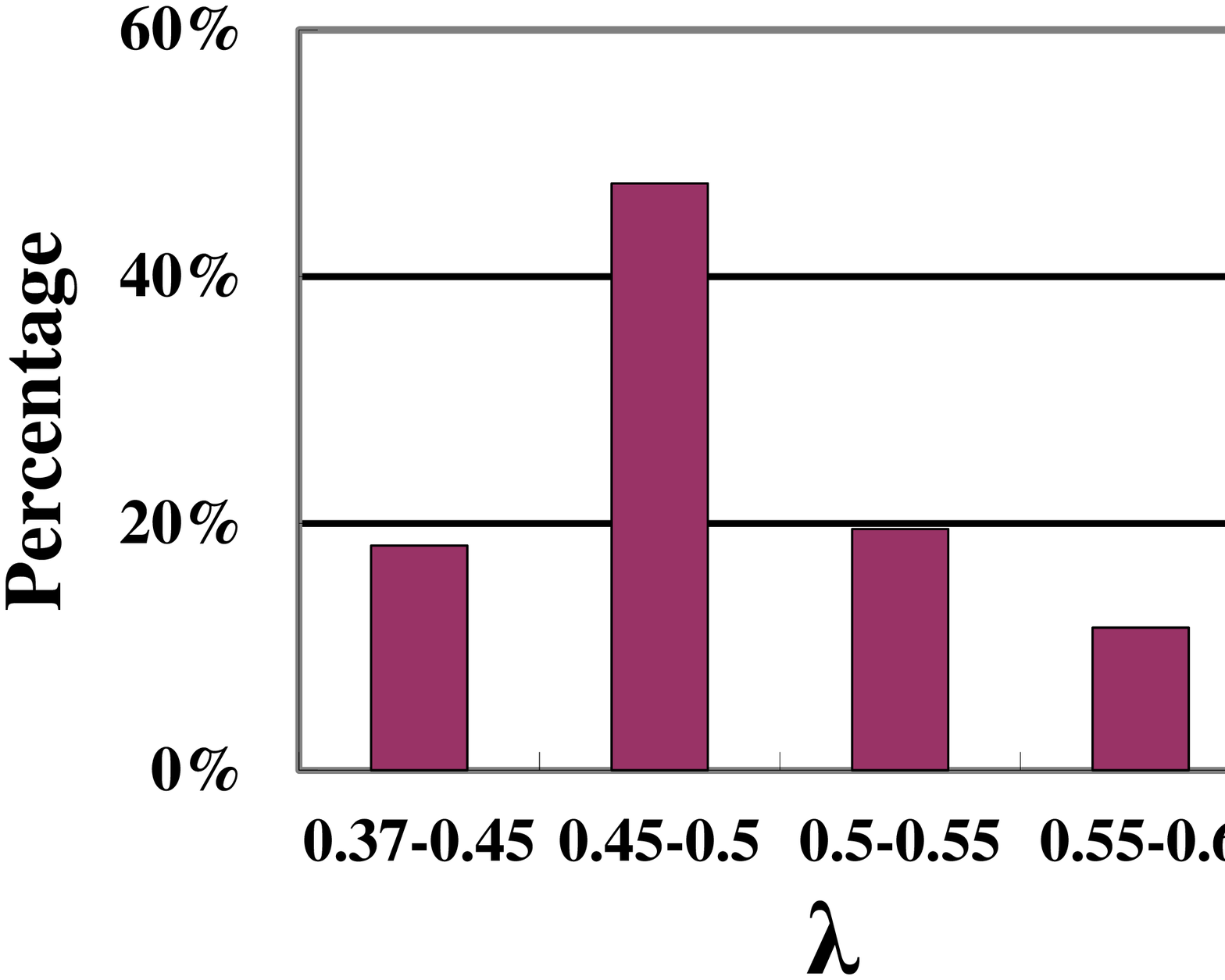} } \vspace{+25pt}
\subfigure[] {\
\includegraphics[scale=0.15]{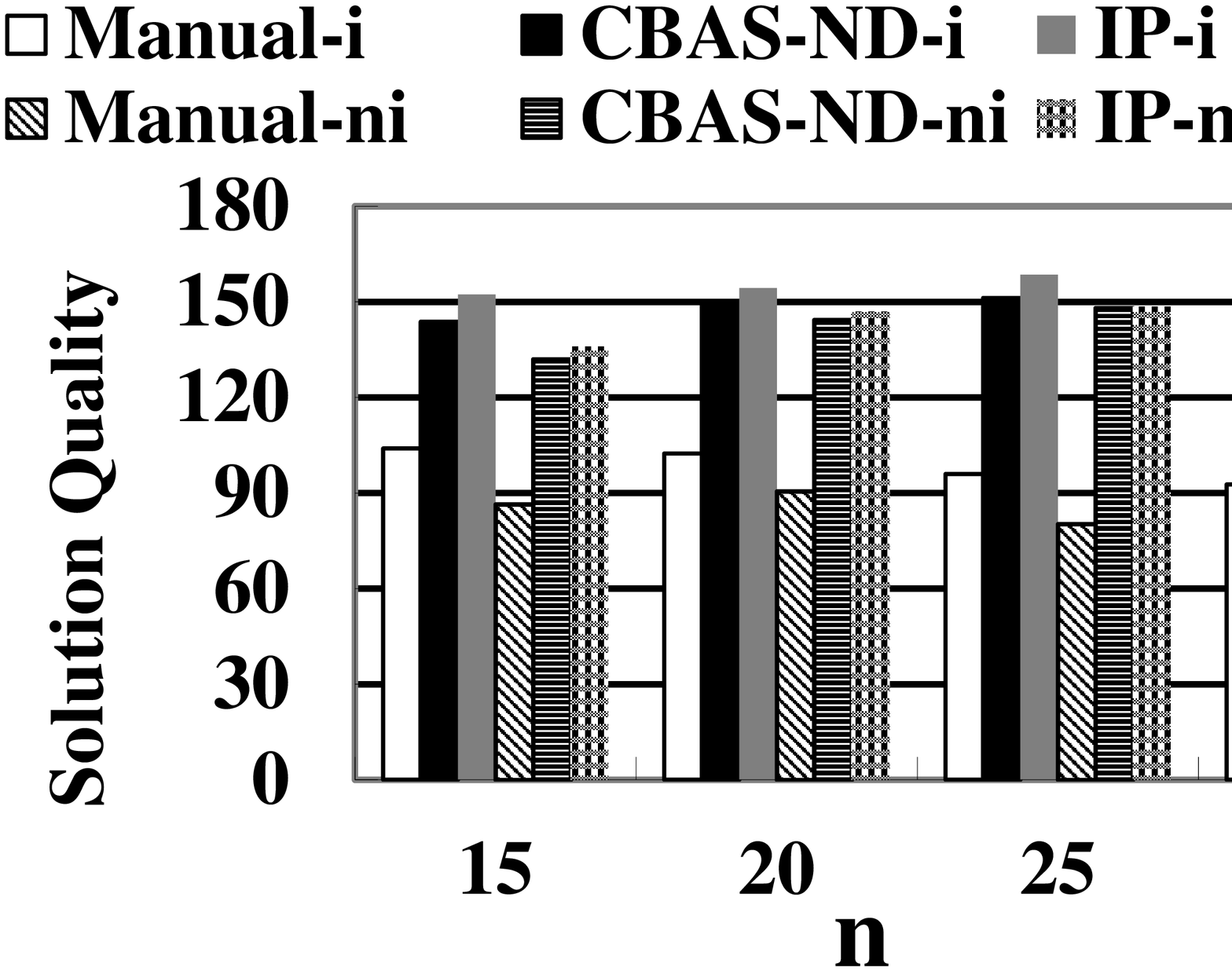} } 
\subfigure[] {\
\includegraphics[scale=0.15]{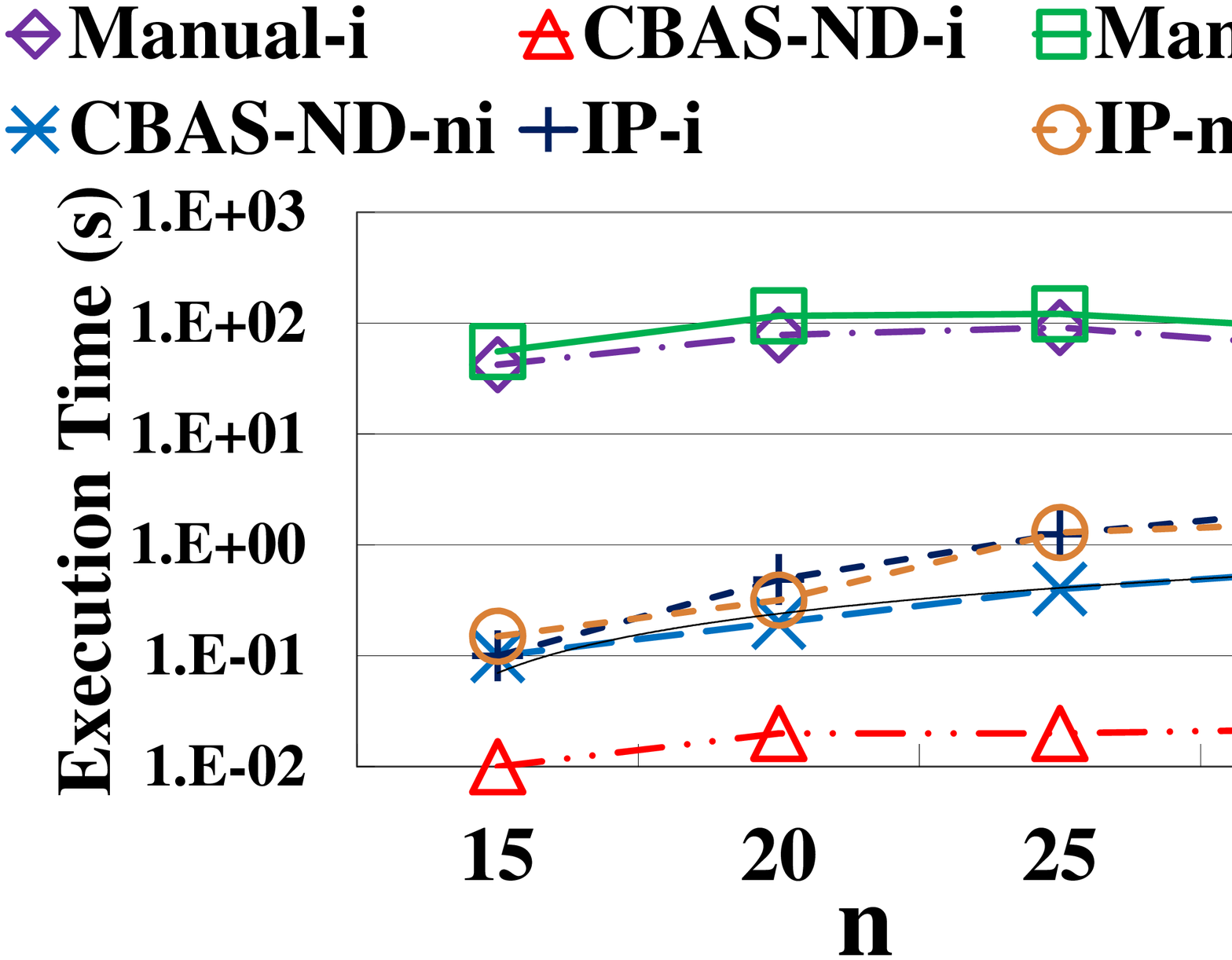} } \vspace{+25pt} 
\subfigure[] {\
\includegraphics[scale=0.15]{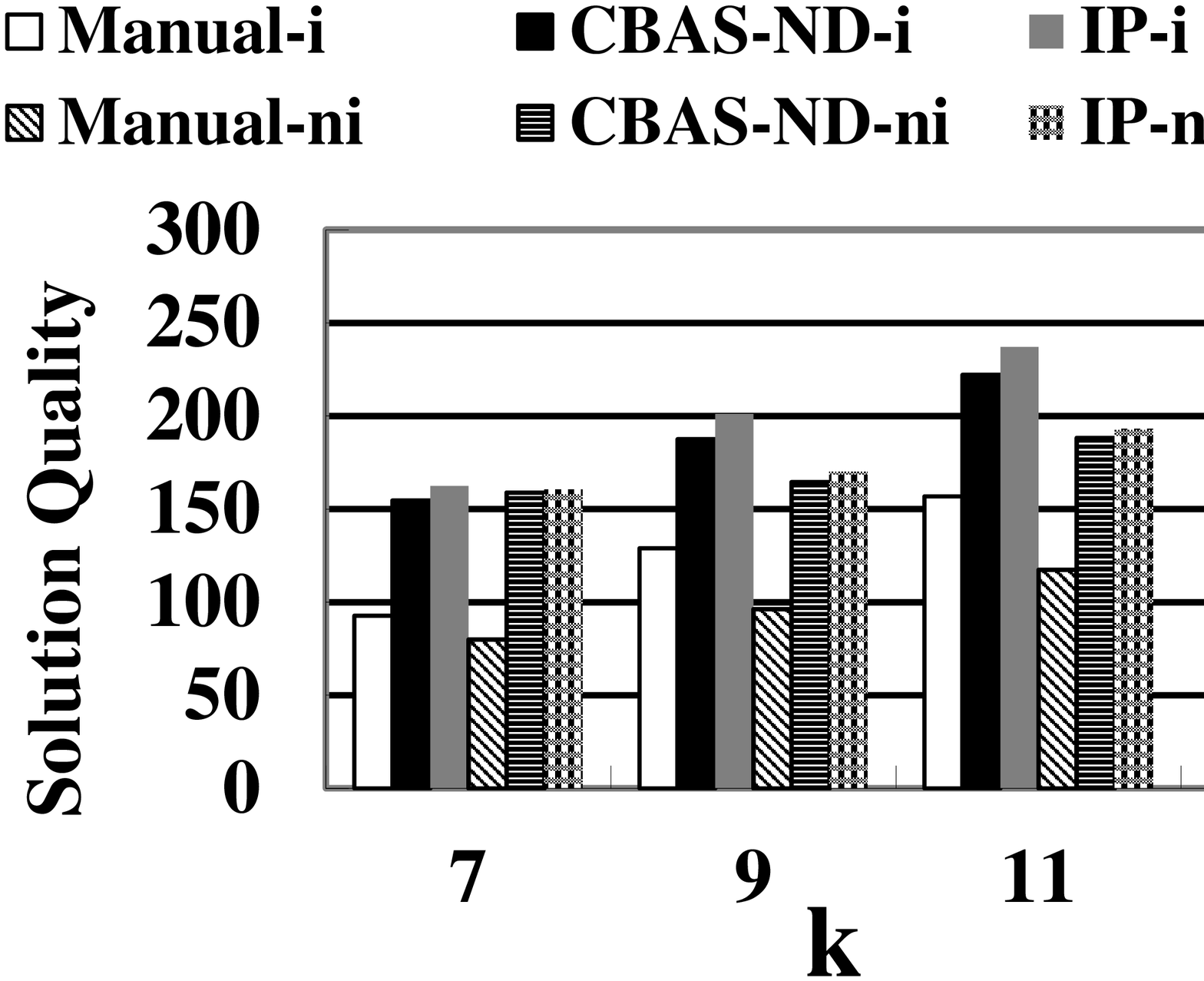} } 
\subfigure[] {\
\includegraphics[scale=0.15]{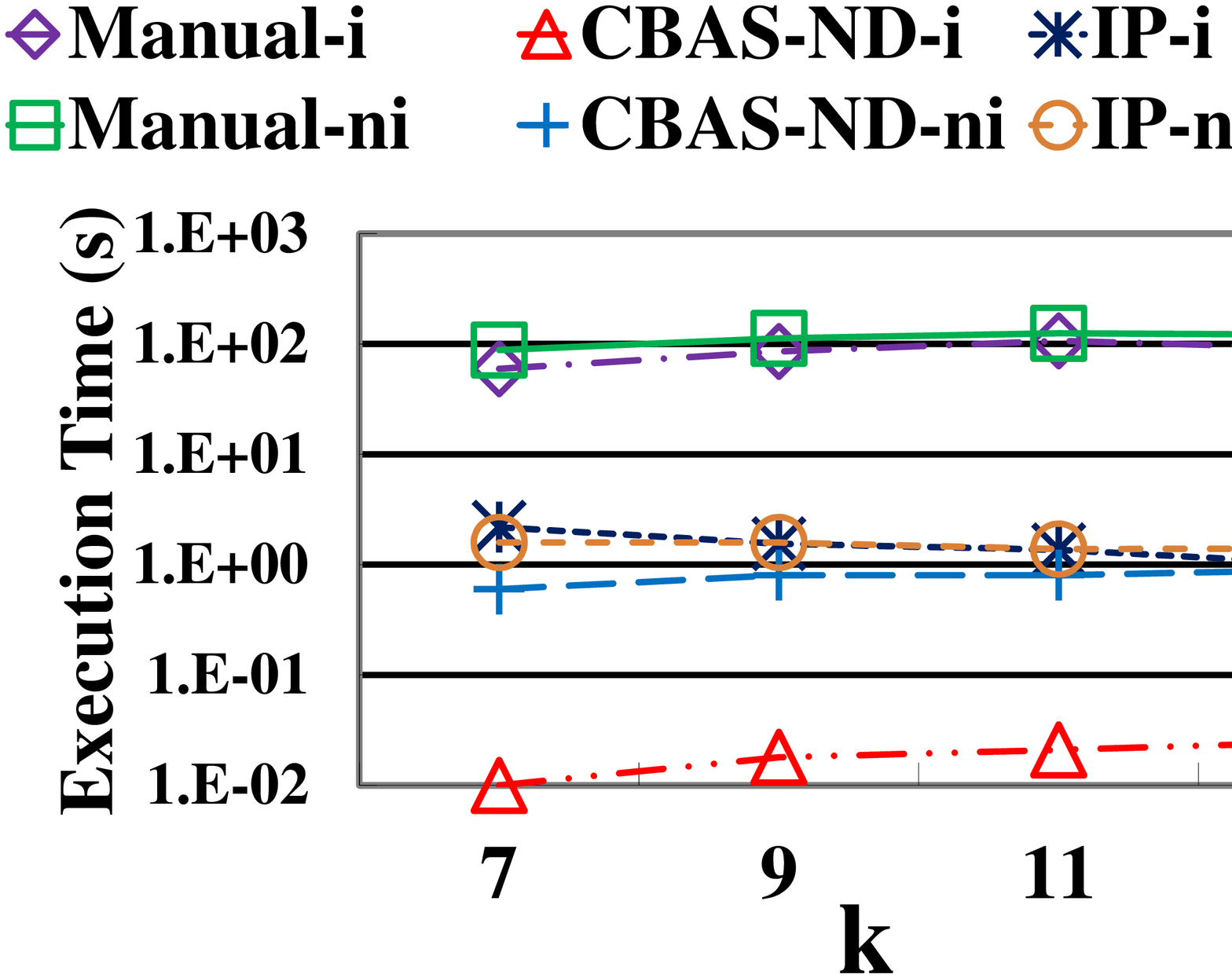} } \vspace{+25pt}
\subfigure[] {\
\includegraphics[scale=0.15]{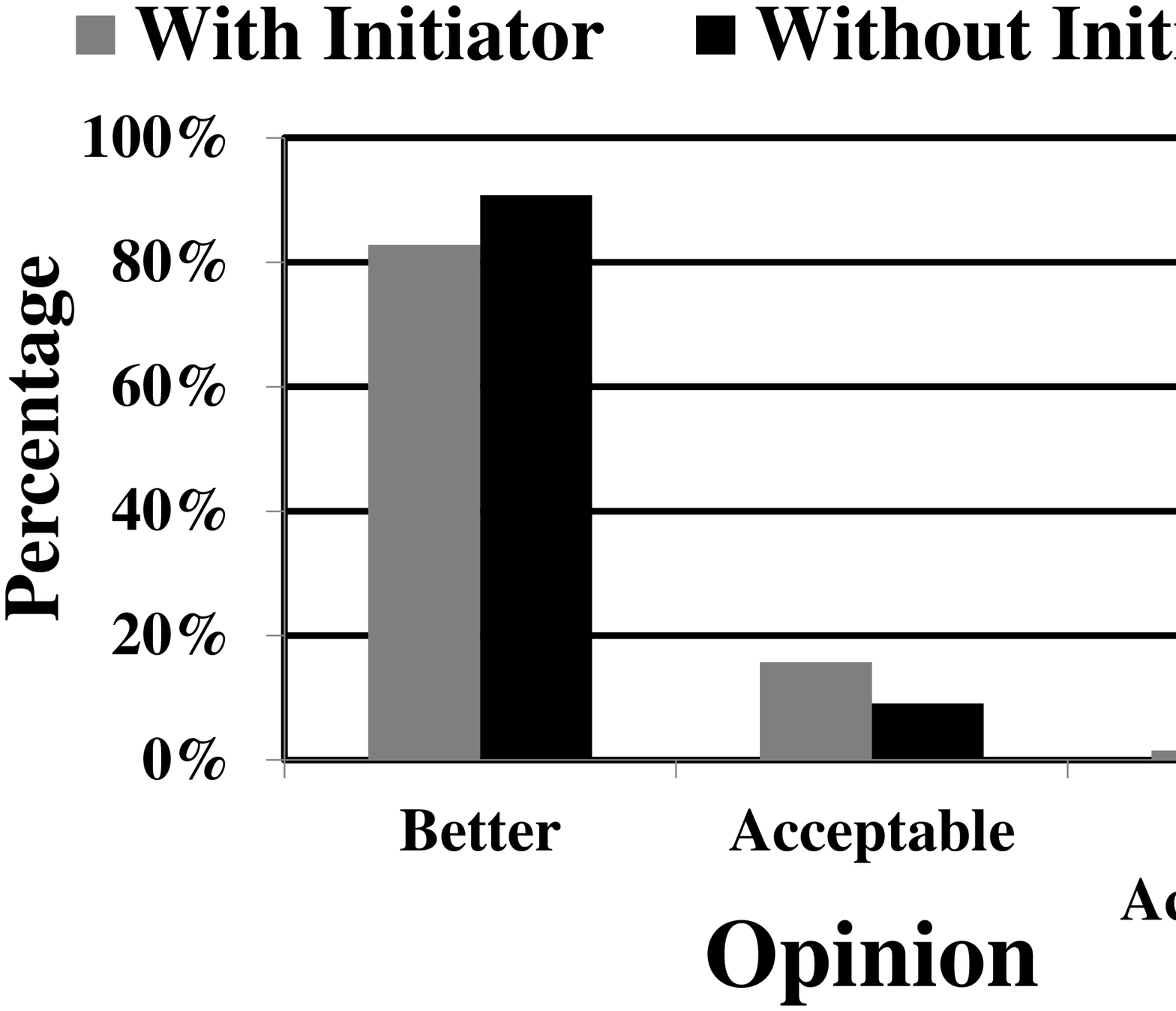} } \vspace{-35pt} 
\caption{Results of user study}
\label{FigUserStudy}
\vspace{-6mm}
\end{figure}

The weights $\lambda $ and (1-$\lambda $) in Section \ref{Prilim} for
interest scores and social tightness scores are directly specified by the users according to their preferences, and Figure \ref{FigUserStudy}(a) shows that the range of the weight
mostly spans from 0.37 to 0.66 with the average as 50.3, indicating that
both social tightness and interest are crucial factors in activity planning.
Figures \ref{FigUserStudy}(b)-(e) compare manual coordination and \emph{%
CBAS-ND} in the user study. It is worth noting that we generate the ground
truth of user study with \emph{IP} solved by IBM CPLEX to evaluate the
solution quality. Figures \ref{FigUserStudy}(b) and (c) present the solution
quality and running time with different network sizes, where the expected
number of attendees $k$ is $7$. The user must be included in the group for 
\textit{Manual-i} and \emph{CBAS-ND-i}, and in the other two cases the
user can arbitrarily choose a group with high willingness. The result
indicates that the solutions obtained by \emph{CBAS-ND }is very close to the
optimal solutions acquired from solving \emph{IP }with IBM CPLEX. WASO is
challenging for manual coordination, even when the network contains only
dozens of nodes. It is interesting that $n=30$ is too difficult for manual
coordination because some users start to give up thus require smaller time for finding a solution.
In addition, WASO is more difficult and more time-consuming in \textit{%
Manual-ni} because it considers many more candidate groups.

Figures \ref{FigUserStudy}(d) and (e) presents the results with different k. The results show that the
solution quality obtained by manual coordination with $k=7$ is only 66\% of \emph{%
CBAS-ND}, since it is challenging for a person to jointly maximize the
social tightness and interest.  Similarly, we discover that some users start
to give up when $k=13$, and the processing time of manual selection grows
when the user is not going to join the group activity. Finally, we return
the solutions obtained by \emph{CBAS-ND} to the users, and Figure \ref%
{FigUserStudy}(f) manifests that 98.5\% of users think the solutions are better
or acceptable, as compared to the solutions found by themselves. Therefore, it is desirable to deploy \emph{CBAS-ND }as an automatic group recommendation service, especially to address the need of a large group in a massive social network nowadays. \vspace{-5pt%
}

\subsection{Performance Comparison and Sensitivity Analysis}
\vspace{-8pt}
\subsubsection{Facebook}
\label{PerforAnalysis}Figure \ref{exp2}(a) first presents the running time
with different group sizes, i.e., $k$. \emph{RGreedy} is computationally
intensive since it is necessary to sum up the interest scores and social
tightness scores during the selection of a node neighboring each partial
solution. Therefore, \emph{RGreedy} is unable to return a solution within even $12$ hours when the group size is larger than $20$. In addition, the difference
between \emph{CBAS-ND} and \emph{RGreedy} becomes more significant as $k$
grows. Figure \ref{exp2}(b) presents the
solution quality with different activity sizes, where $m=\frac{n}{k}$, $\rho
=0.3$, and $w=0.9$, respectively. The results indicate that \emph{CBAS-ND}
outperforms \emph{DGreedy}, \emph{RGreedy}, and \emph{CBAS}, especially
under a large $k$. The willingness of \emph{CBAS-ND} is at least twice of
the one from \emph{DGreedy} when $k=100$. On the other hand, 
\emph{RGreedy} outperforms \emph{DGreedy} since it has a chance to jump out
of the local optimal solution.

In addition to the activity sizes, we compare the running time of \emph{%
RGreedy}, \emph{CBAS-ND}, and \emph{DGreedy} with different social network
sizes in Figure \ref{exp2}(c) with $k=10$. \emph{DGreedy} is always the
fastest one since it is a deterministic algorithm and generates only one
final solution, but \emph{CBAS} and \emph{CBAS-ND} both require less than $%
10$ seconds, whereas \emph{RGreedy }requires more than $10^{3}$ seconds. To evaluate the performance
of \emph{CBAS-ND} with multi-threaded processing, Figure \ref{exp2}(d) shows that we can accelerate the processing speed to around $%
7.6$ times with $8$ threads. The acceleration ratio is slightly lower than $8
$ because OpenMP forbids different threads to write at the same memory position
at the same time. Therefore, it is expected that \emph{CBAS-ND} with parallelization is promising to be deployed as a value-added \textit{cloud service}.

\begin{figure}[t]
\centering
\hspace{4pt} \subfigure[] {\
\includegraphics[scale=0.14]{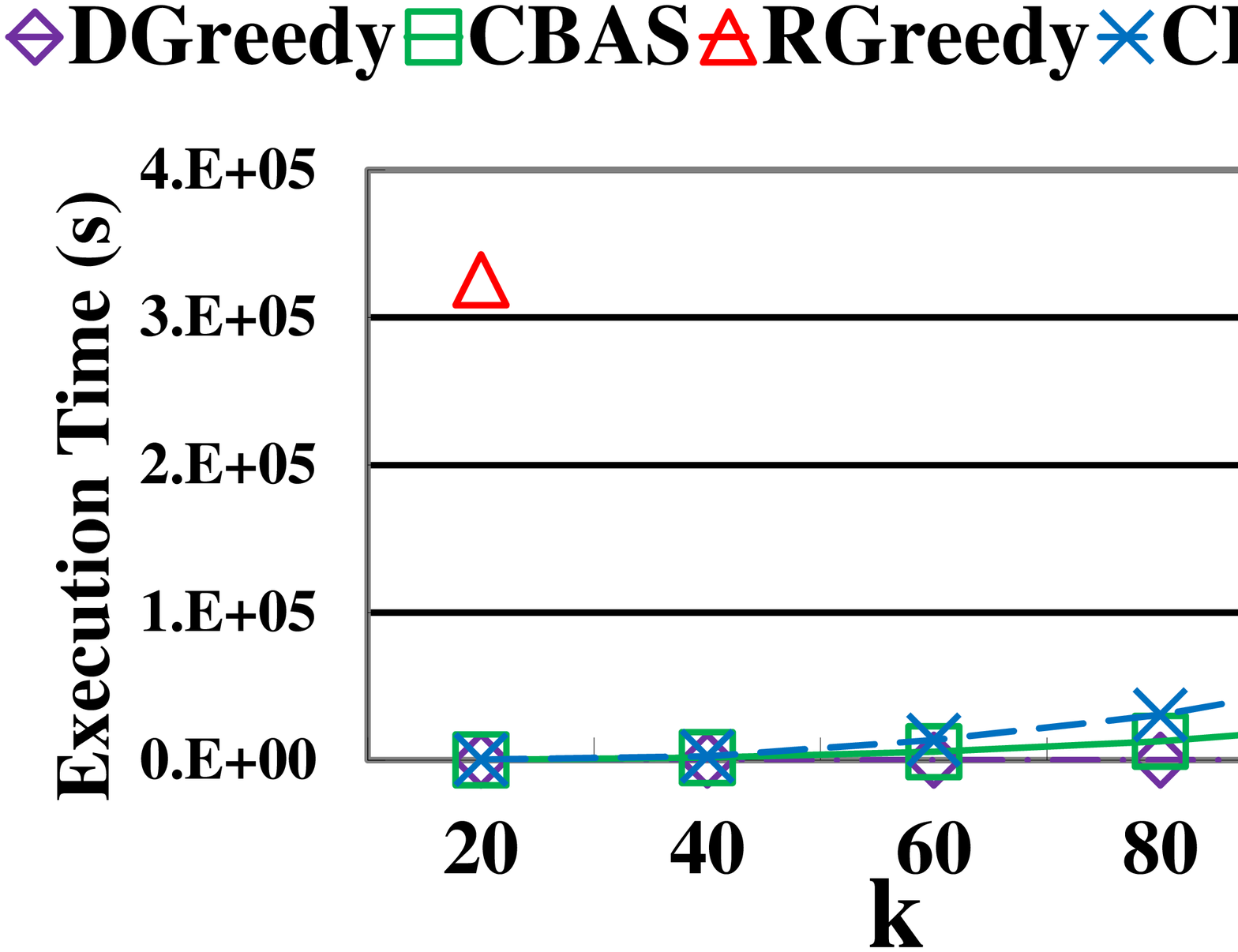} } \hspace{+22pt} \vspace{+25pt}
\subfigure[] {\
\includegraphics[scale=0.15]{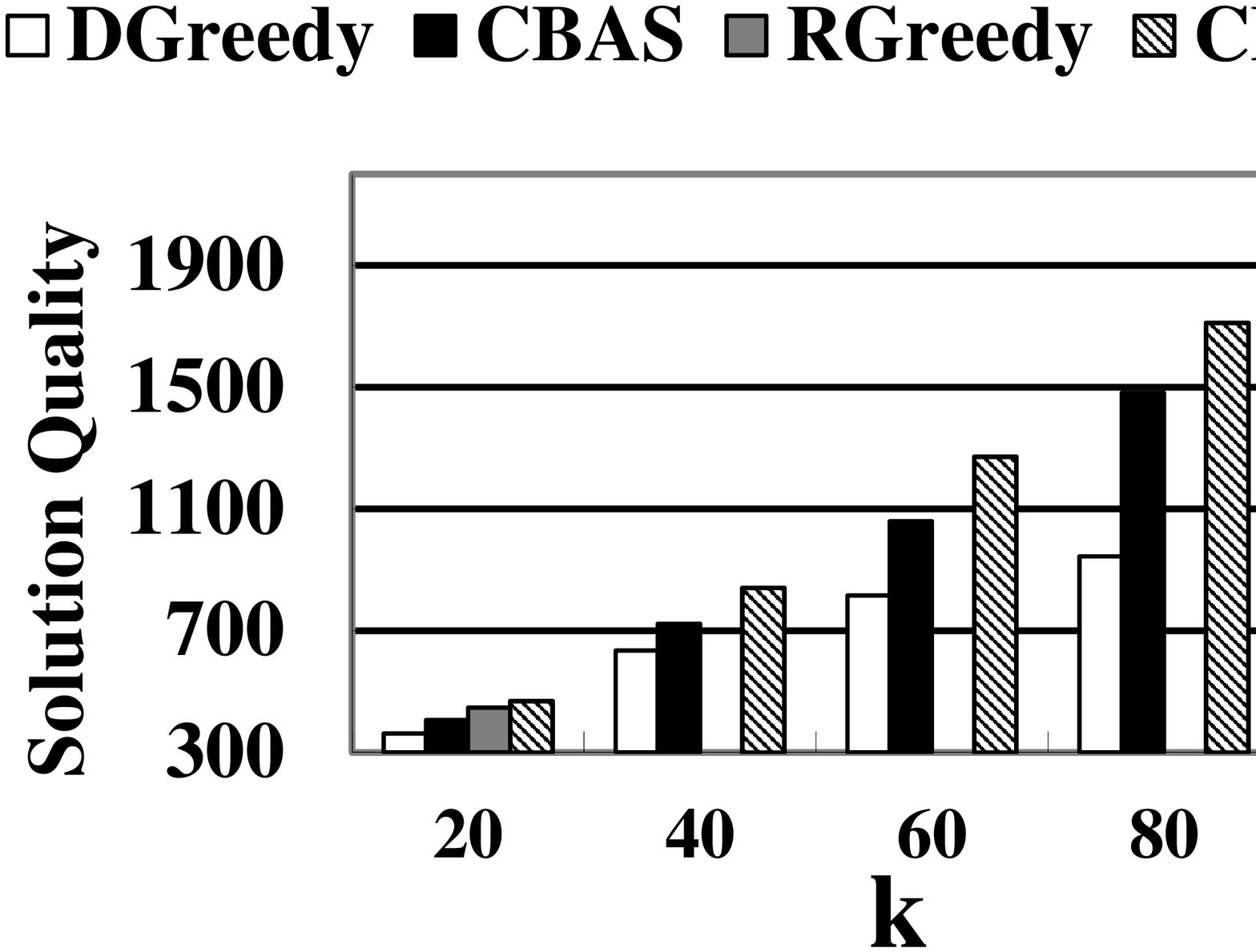} } 
\subfigure[] {\
\includegraphics[scale=0.14]{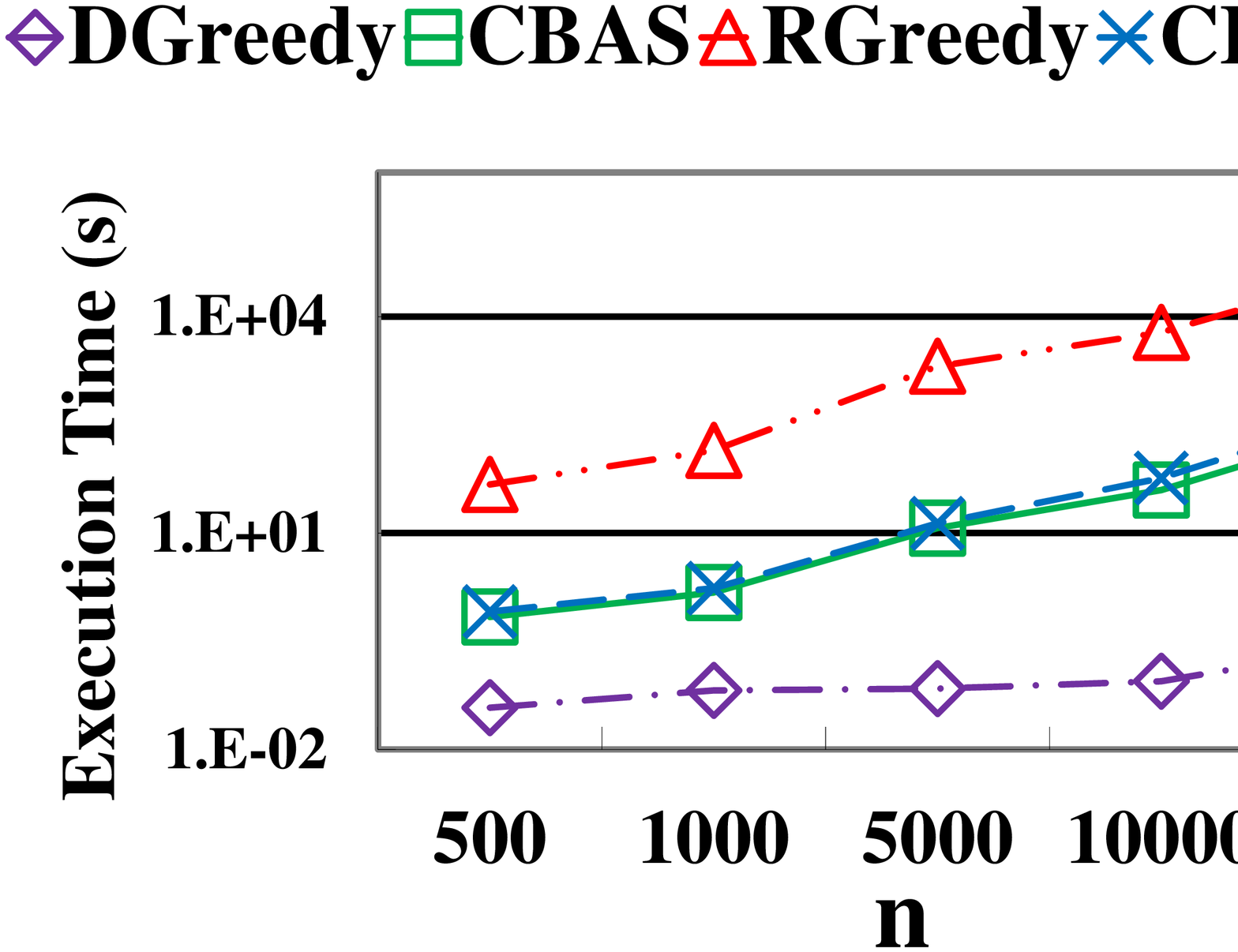} } \hspace{+22pt} \vspace{+25pt}
\subfigure[] {\
\includegraphics[scale=0.15]{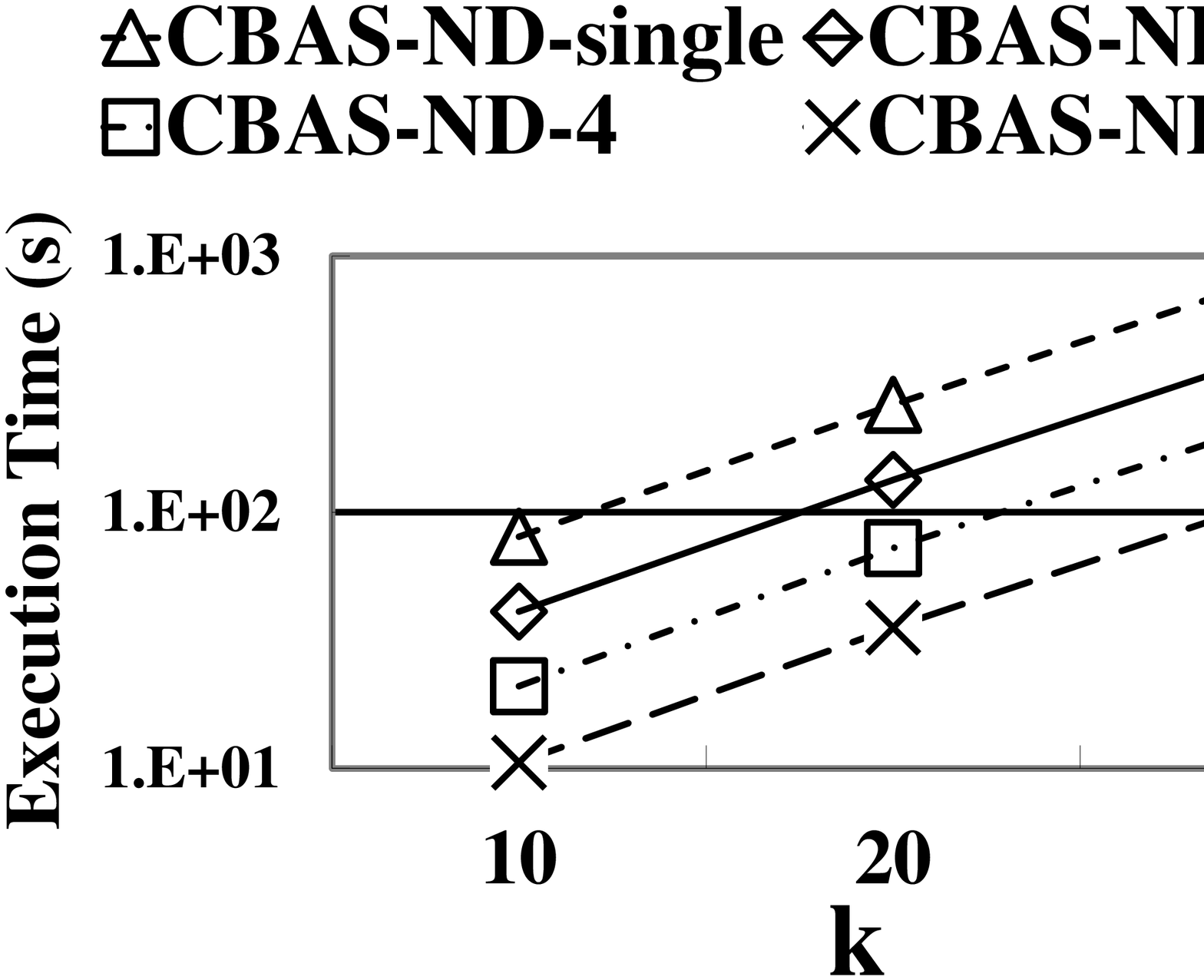} } 
\subfigure[] {\
\includegraphics[scale=0.15]{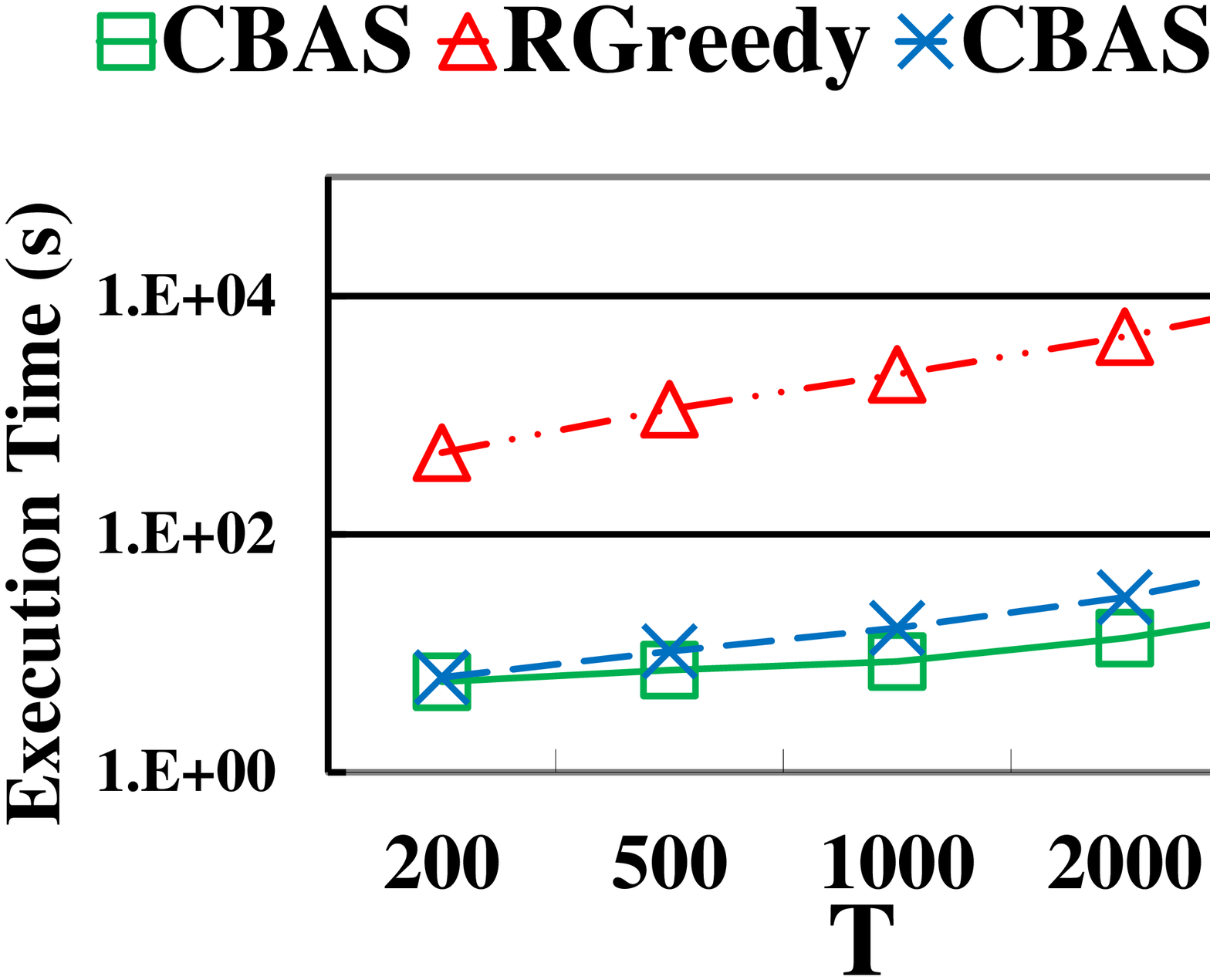} } \hspace{+22pt} \vspace{+25pt}
\subfigure[] {\
\includegraphics[scale=0.15]{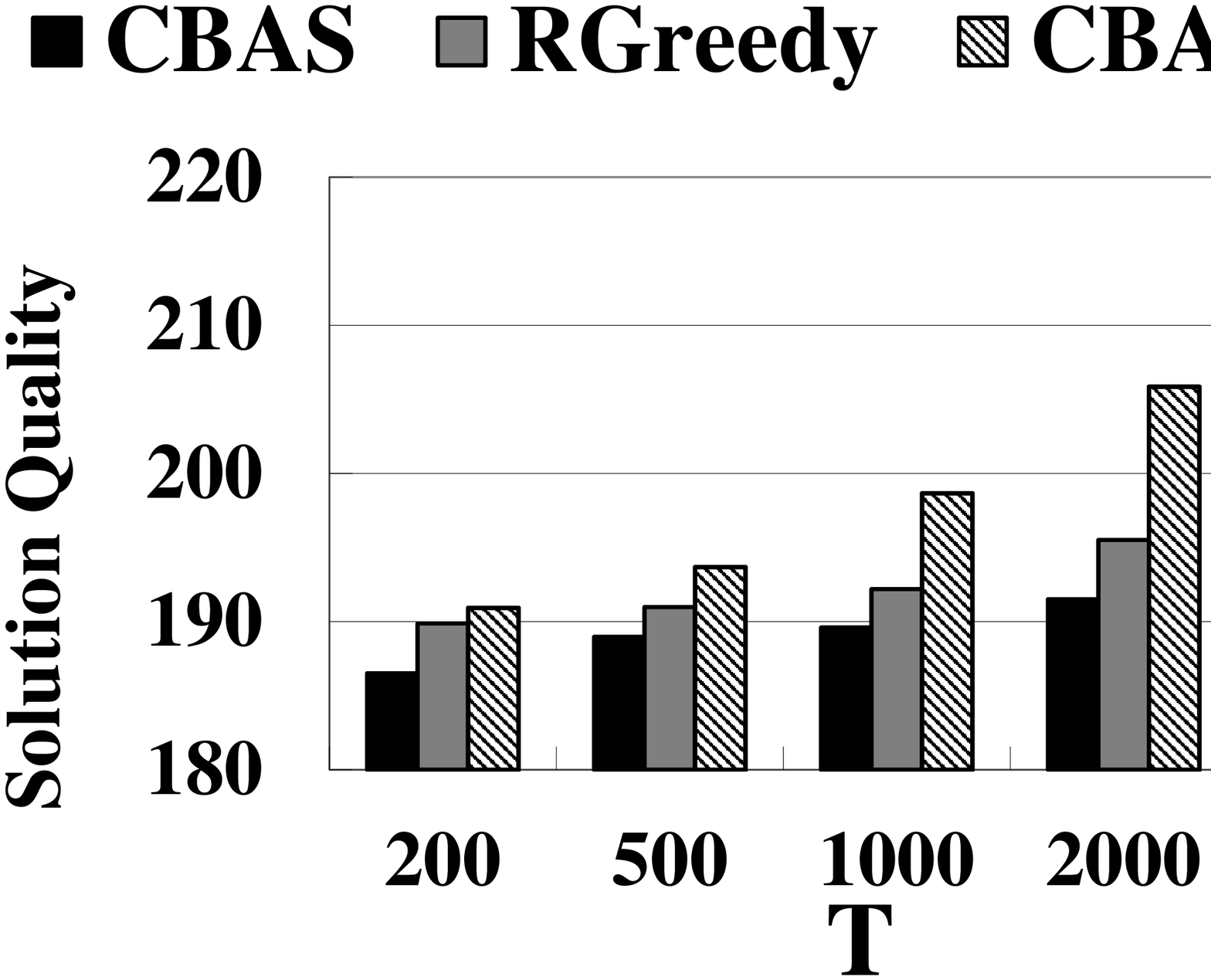} } 
\subfigure[] {\
\includegraphics[scale=0.15]{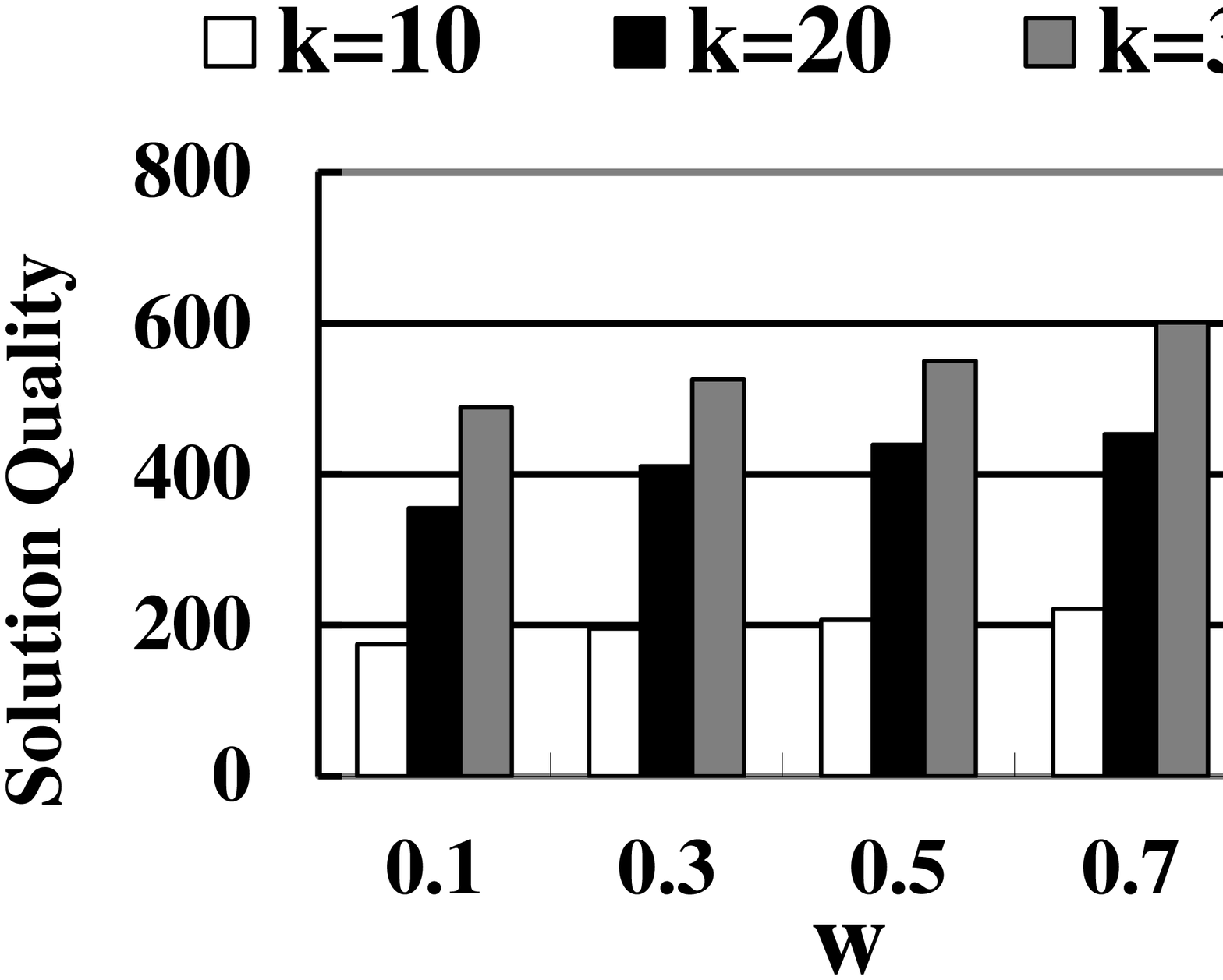} } \hspace{+22pt} \vspace{+25pt}
\subfigure[] {\
\includegraphics[scale=0.15]{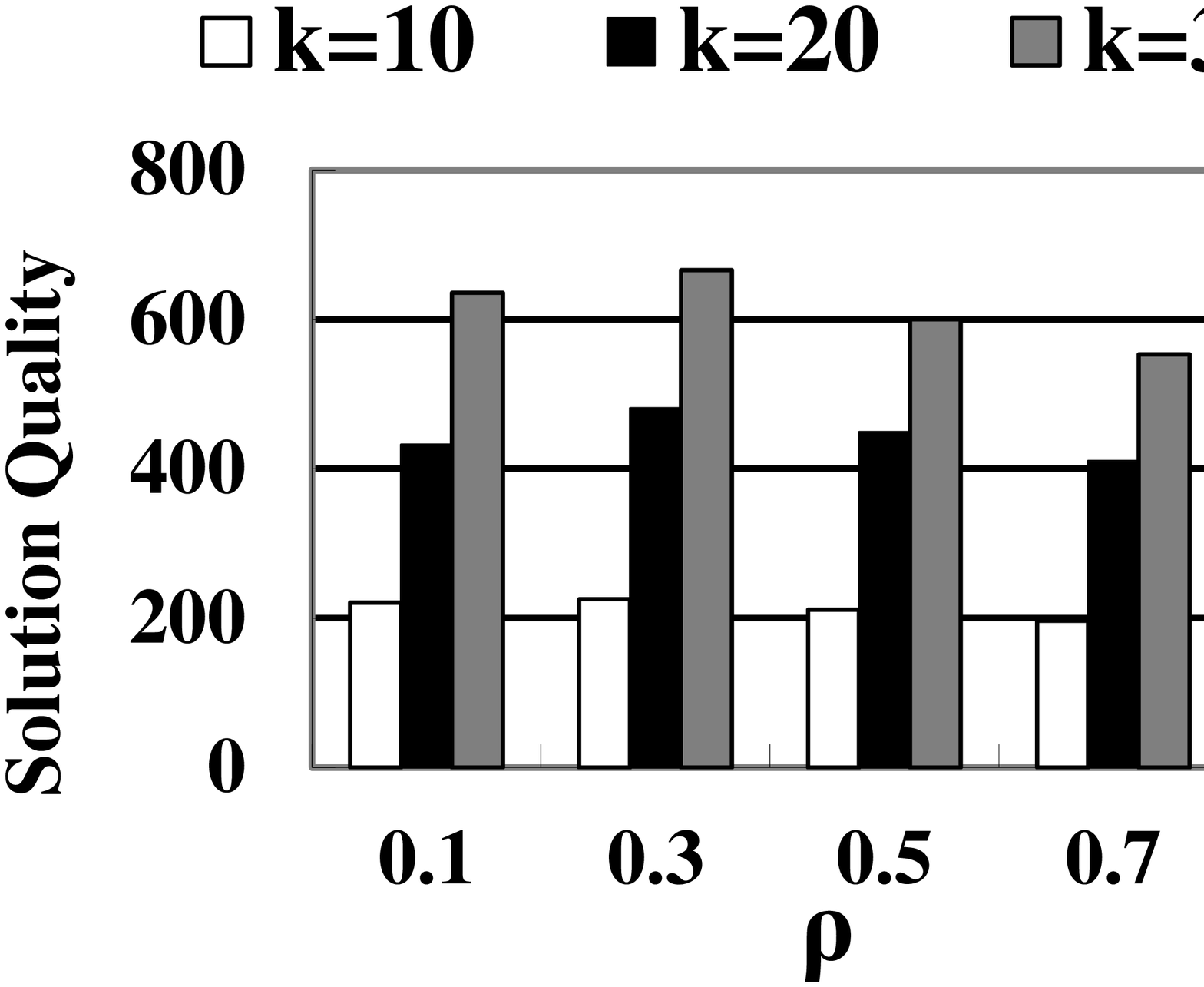} } 
\subfigure[] {\
\includegraphics[scale=0.15]{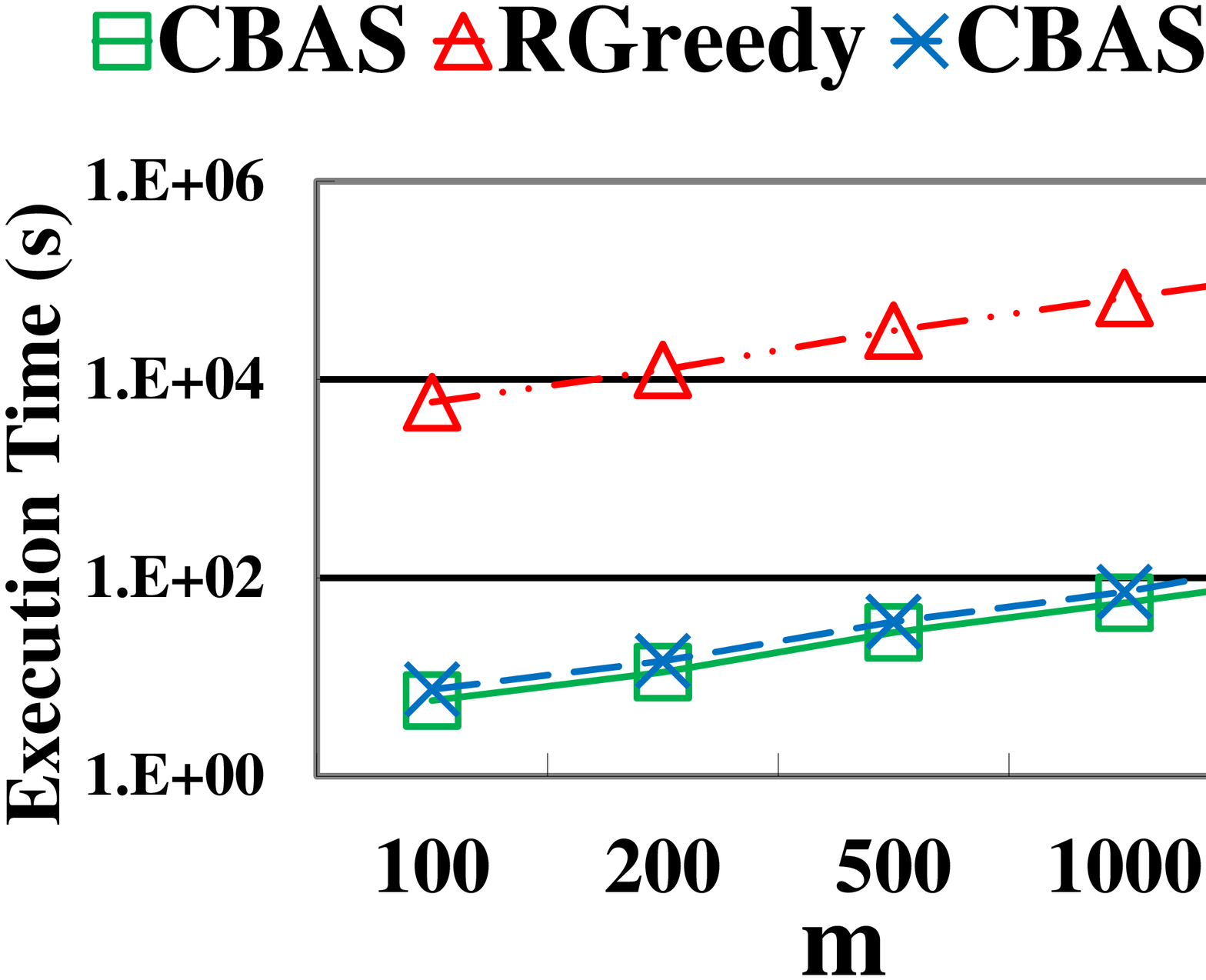} } \hspace{+22pt} \vspace{+25pt}
\subfigure[] {\
\includegraphics[scale=0.15]{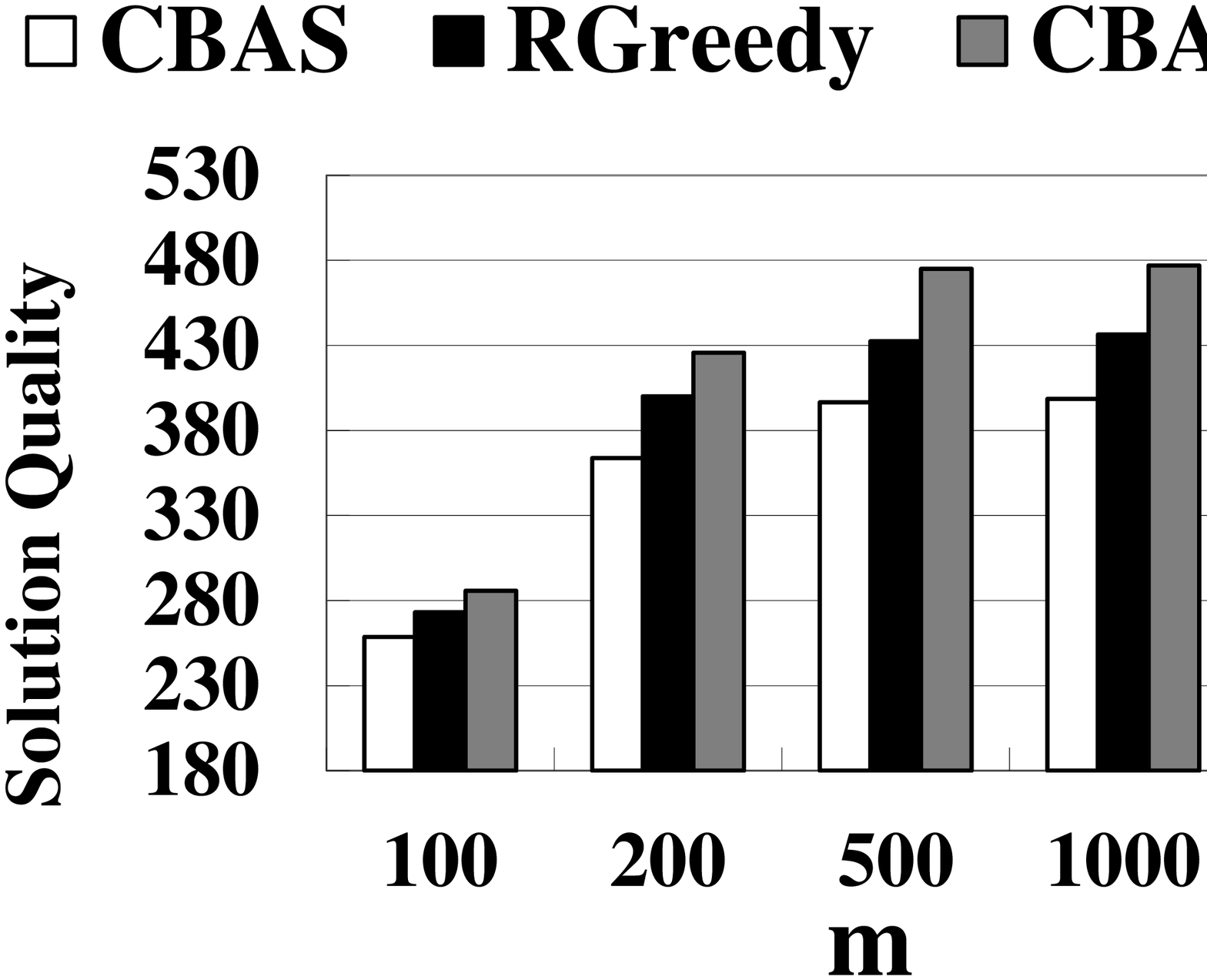} } \vspace{-37pt} 
\caption{Experimental results on Facebook dataset}
\vspace{-20pt}
\label{exp2}
\end{figure}

Figures \ref{exp2}(e) and (f) compare the running time and solution quality
of three randomized approaches under different total computational budgets,
i.e., $T$. As $T$ increases, the solution quality of \emph{CBAS-ND}
increases faster than that of the others because it can optimally allocate the
computation resources. The running time of \emph{CBAS-ND }is slightly larger than that of \emph{CBAS} since \emph{CBAS-ND }needs to sort and extract the samples with
high willingness in previous stages to generate better samples in the
following stage. Even though the solution quality of \emph{RGreedy} is closer to \emph{CBAS-ND} in some cases, both \emph{CBAS }and \emph{CBAS-ND} are faster than \emph{%
RGreedy} by an order of $10^{-2}$.

Figure \ref{exp2}(g) presents the solution quality of \emph{CBAS-ND%
} with different smoothing technique parameters, i.e., $w$. Notice that the
node selection probability vector is homogeneous if we set $w$ to zero. The
result shows that the best result is generated by $w=0.9$ for $k=10$, $20$,
and $30$, implying that the convergence rate with $w=0.9$ is most suitable for WASO in the Facebook dataset. Figure \ref{exp2}(h) compares the top
percentile of performance sample value $\rho $. The result manifests that
the solution quality is not inversely proportional to $\rho $, because for a
smaller $\rho $, the number of samples selected to generate the node
selection probability vector decreases, such that the result converges faster to a solution.

Figures \ref{exp2}(i) and (j) present the running time and solution quality
of \emph{RGreedy}, \emph{CBAS}, and \emph{CBAS-ND} with different numbers of
start nodes, i.e., $m$. The results show that the solution quality in Figure %
\ref{exp2}(j) converges when $m$ is equal to $500$, which indicates that it
is sufficient for $m$ to be set as a value smaller than $\frac{n}{k}$ 
as recommended by OCBA \cite{OCBA10}. By assigning $m=500$ in the Facebook dataset, we
can reduce the running time to only $20\%$ of the running time in $m=2000$, while
the solution quality remains almost the same.

\begin{figure}[tp]
\centering
\hspace{-10pt} \subfigure[] {\
\includegraphics[scale=0.15]{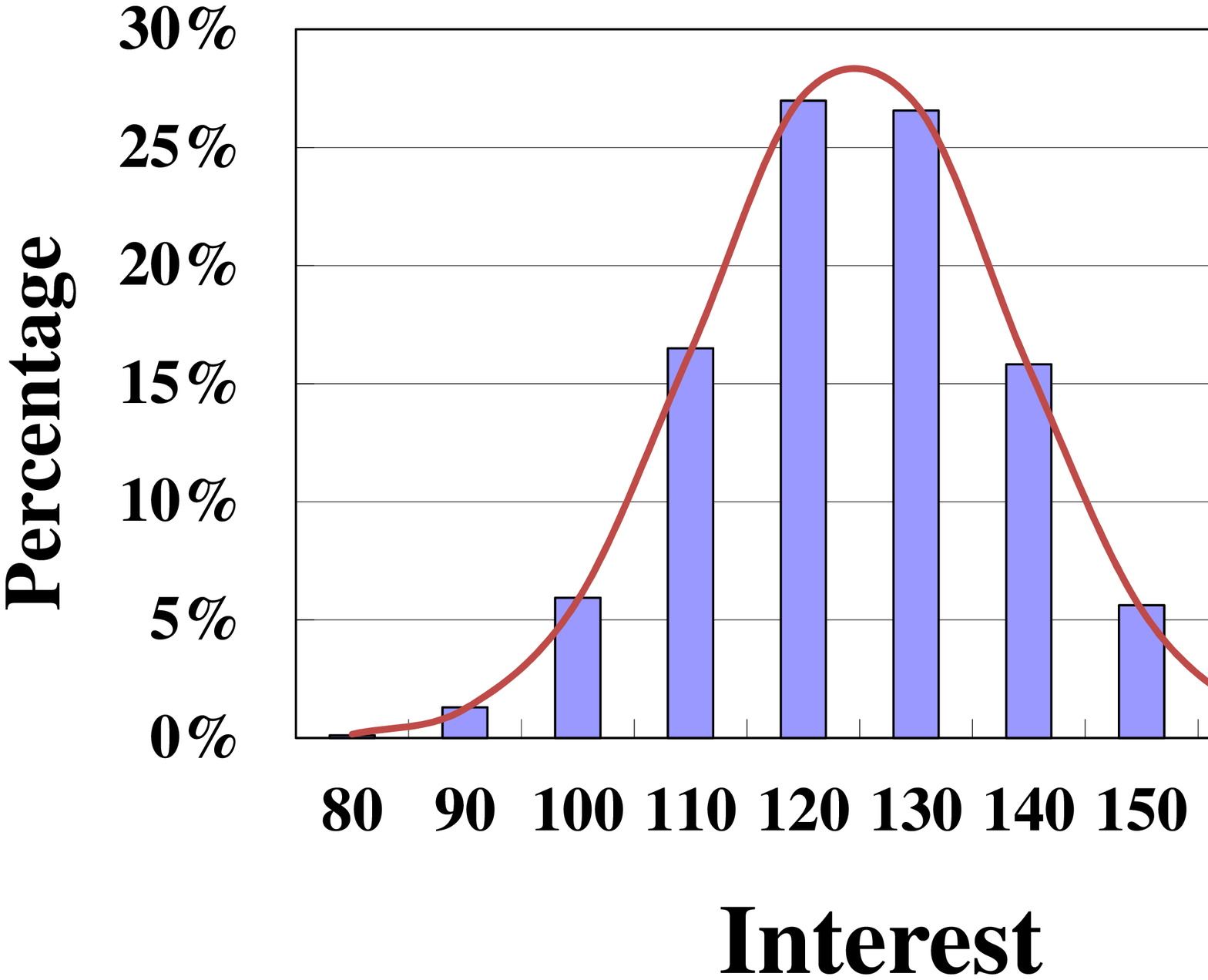} } \hspace{+22pt}
\subfigure[] {\
\includegraphics[scale=0.15]{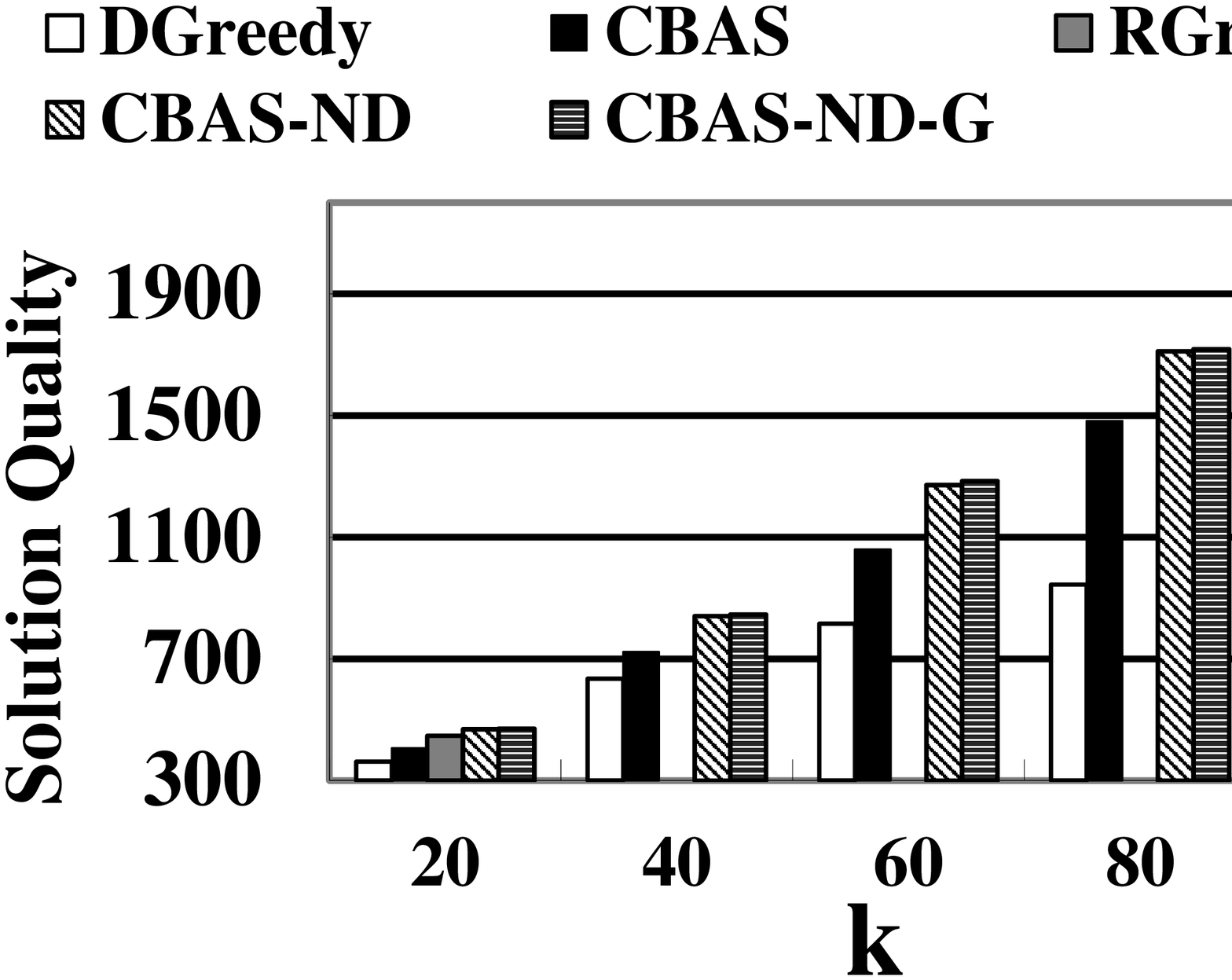} }
\vspace{-10pt}
\caption{Experimental results of WASO with Gaussian distribution}
\vspace{-1mm}
\label{Expgauss}
\end{figure}

Figure \ref{Expgauss}(a) shows the interest histogram of random samples on
Facebook, which indicates that the distribution follows a Gaussian
distribution with the mean as 124.71 and variance as 13.83. The allocation ratio for the variant \emph{CBAS-ND-G} of \emph{%
CBAS-ND} by replacing the uniform distribution with the Gaussian distribution in
Theorem \ref{Prababilitya} is derived in Appendix.
Figure \ref{Expgauss}(b) indicates that the solution quality of \emph{CBAS-ND%
} and \emph{CBAS-ND-G} is very close. In contrast to \emph{CBAS-ND-G},
however, \emph{CBAS-ND} is more efficient and easier to be implemented
because it does not involve the probability integration to find the probability of the best start node.

\subsubsection{DBLP}
\emph{CBAS} and \emph{CBAS-ND} is also
evaluated on the DBLP dataset. Figures \ref{exp3}(a) and (b) compare the
solution quality and running time. The results show that \emph{CBAS-ND}
outperforms \emph{DGreedy} by $92\%$ and \emph{RGreedy} by $32\%$ in
solution quality. Both \emph{CBAS} and \emph{CBAS-ND} are still faster than 
\emph{RGreedy} by an order of $10^{-2}$. However, \textit{RGreedy }runs
faster on the DBLP dataset than on the Facebook dataset, because the DBLP dataset is
a sparser graph with an average node degree of $3.66$. Therefore, the
number of candidate nodes for each start node in the DBLP dataset increases much
more slowly than in the Facebook dataset with an average node degree of $26.1$. Nevertheless, \emph{RGreedy} is still not able to generate a solution for a large group size $k$ due to its unacceptable efficiency. 

\begin{figure}[tp]
\centering
\vspace{+24pt}
\subfigure[] {\
\includegraphics[scale=0.15]{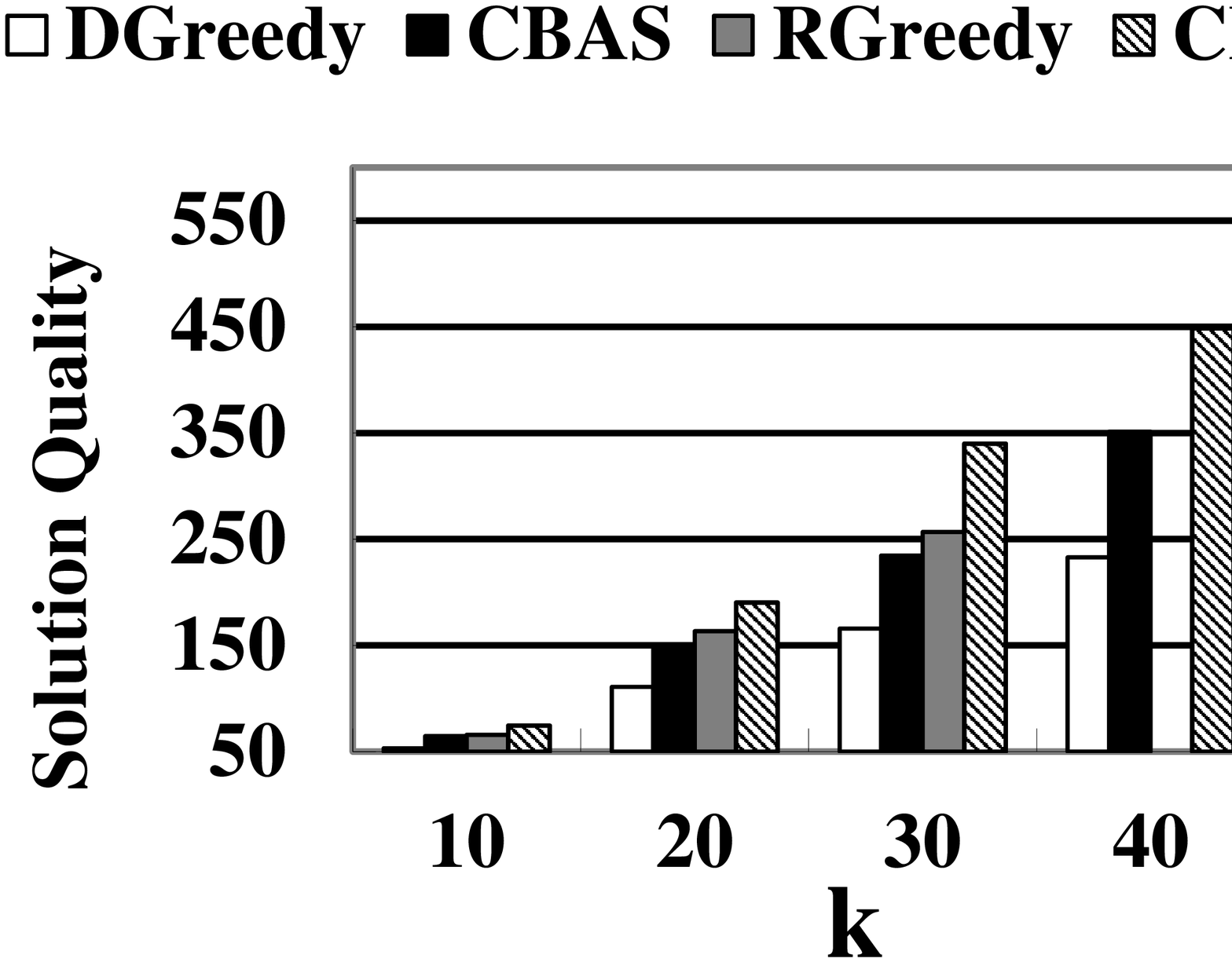} } \hspace{+22pt}  \vspace{+25pt}
\subfigure[] {\
\includegraphics[scale=0.15]{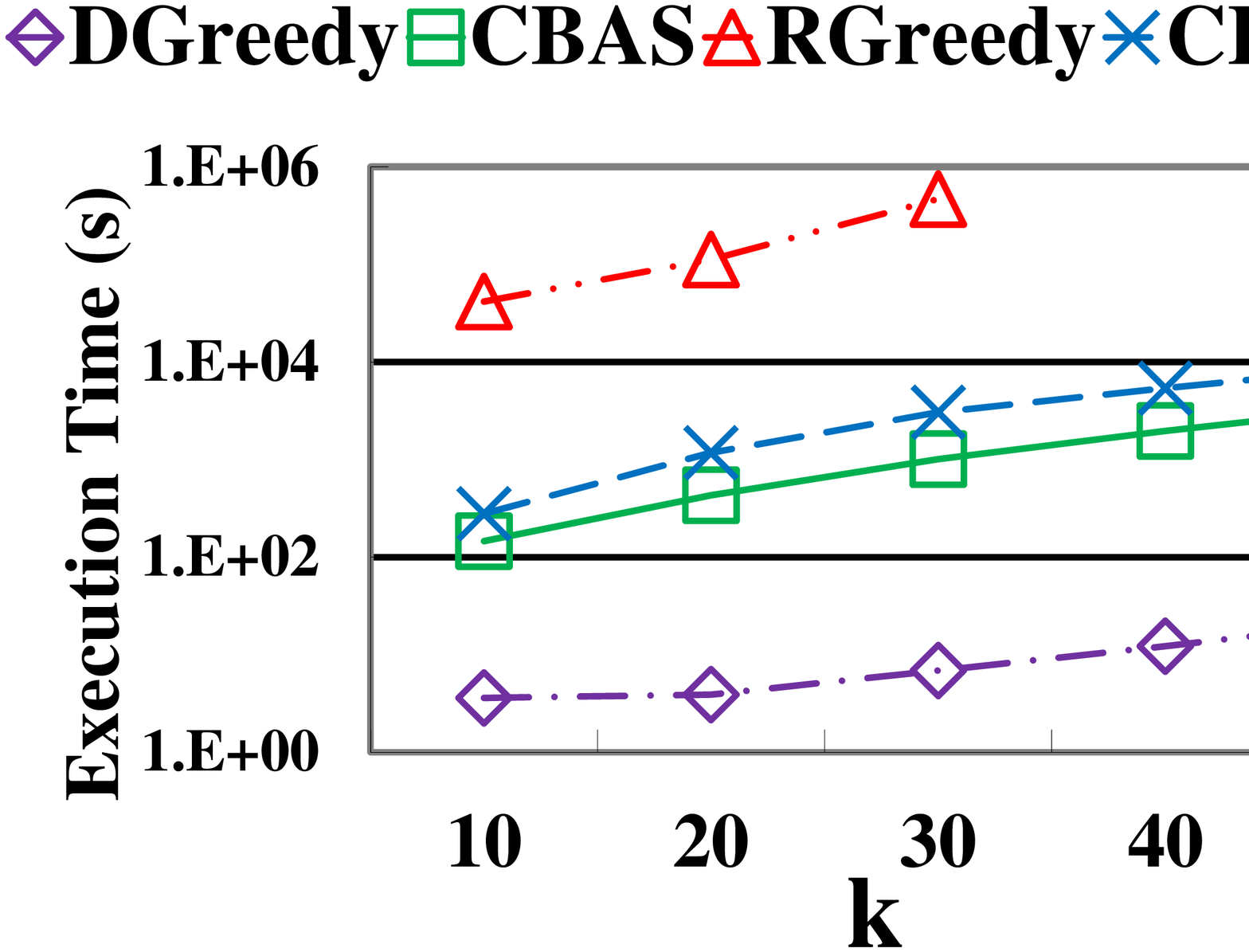} }  
\subfigure[] {\
\includegraphics[scale=0.14]{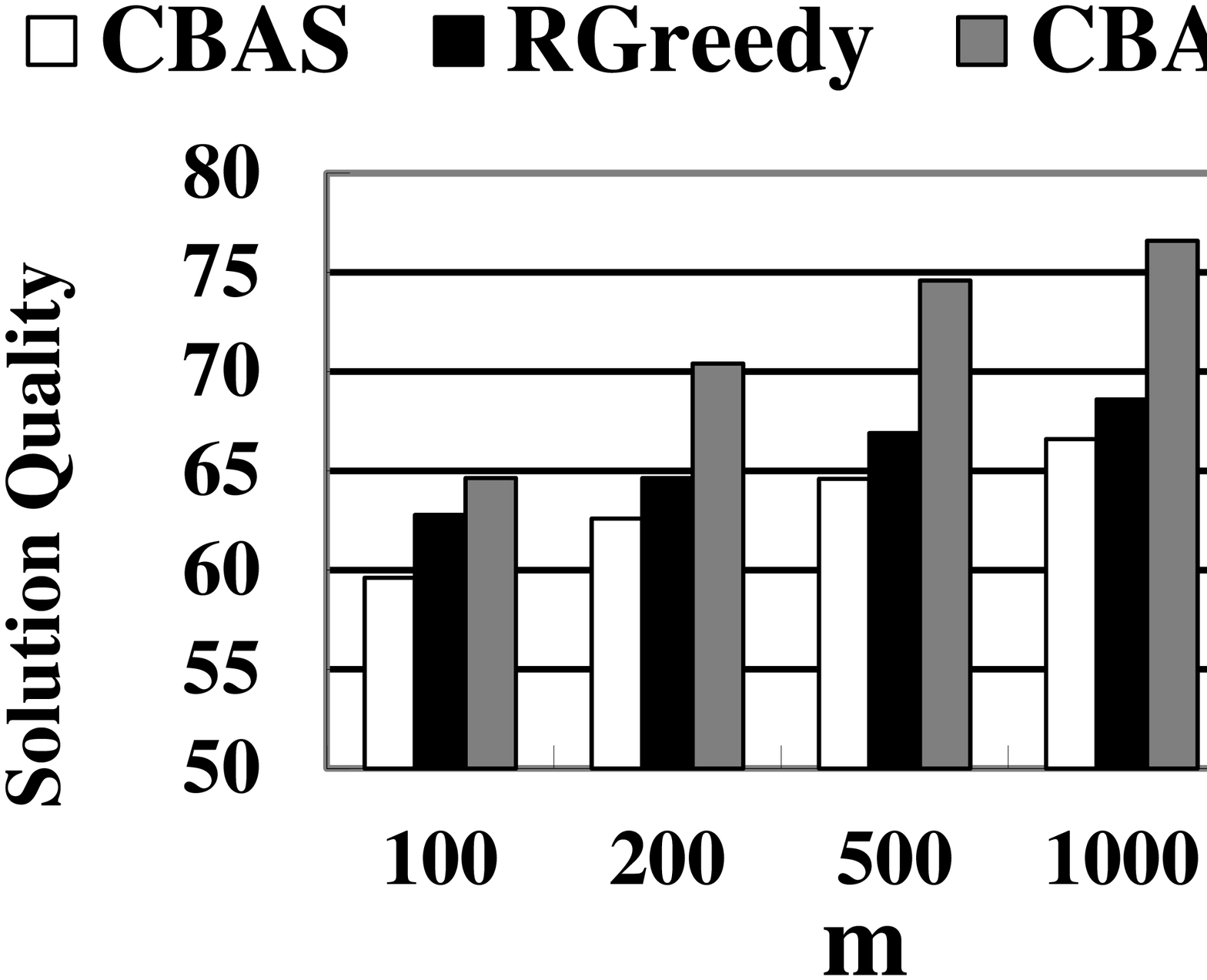} } \hspace{+22pt}  \vspace{+25pt}
\subfigure[] {\
\includegraphics[scale=0.15]{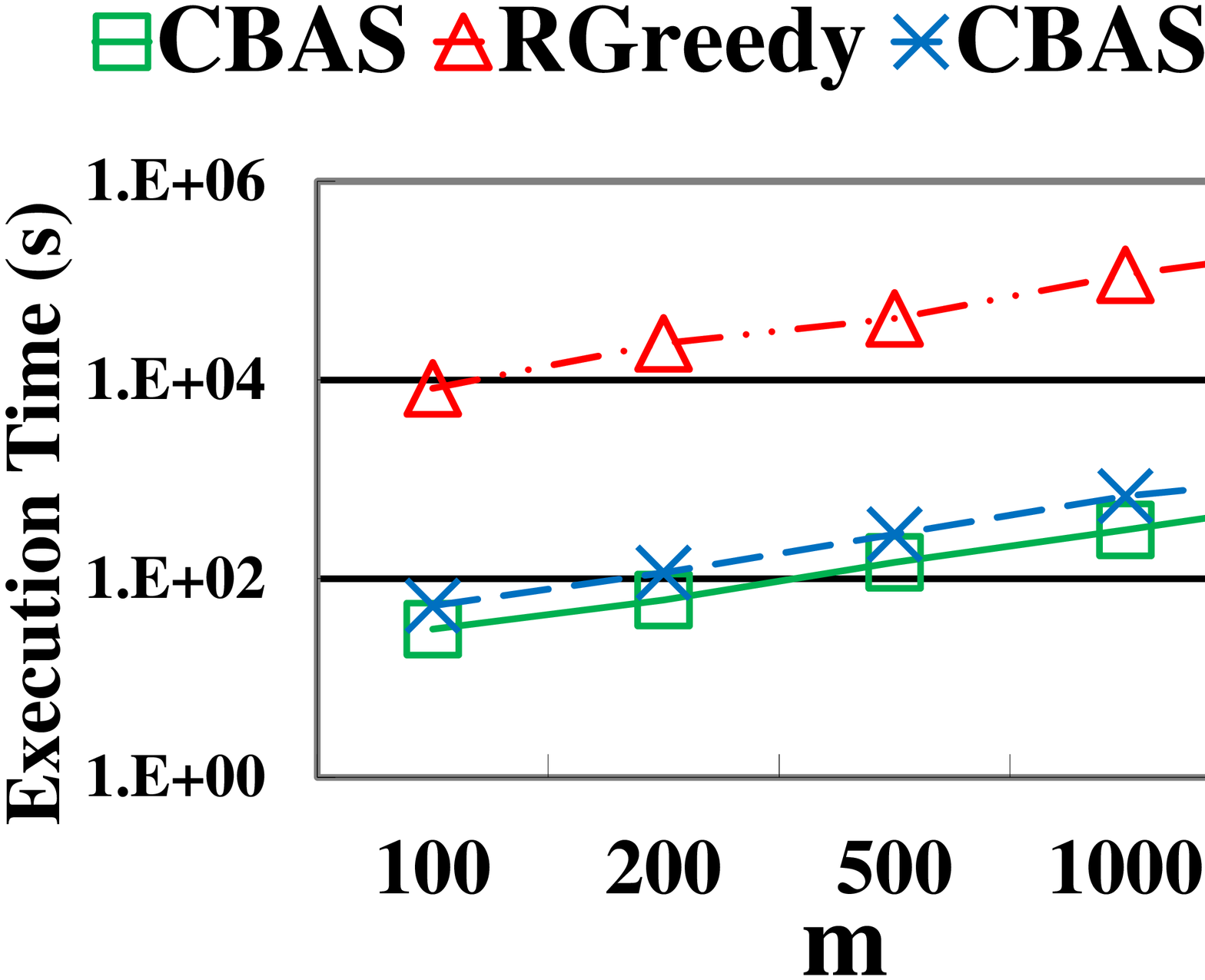} } \vspace{-37pt} 
\subfigure[] {\
\includegraphics[scale=0.15]{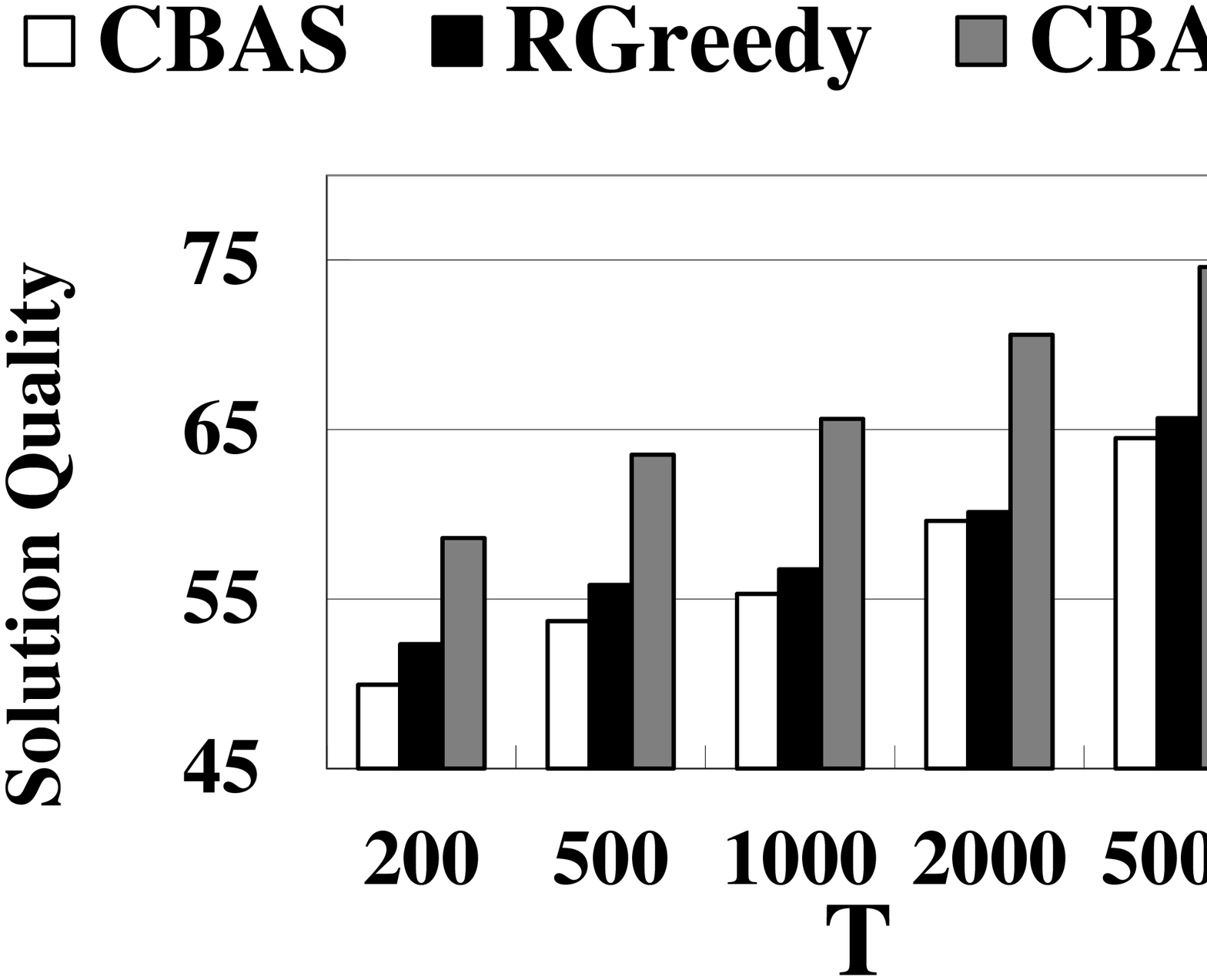} } \hspace{+22pt}  \vspace{+25pt}
\subfigure[] {\
\includegraphics[scale=0.15]{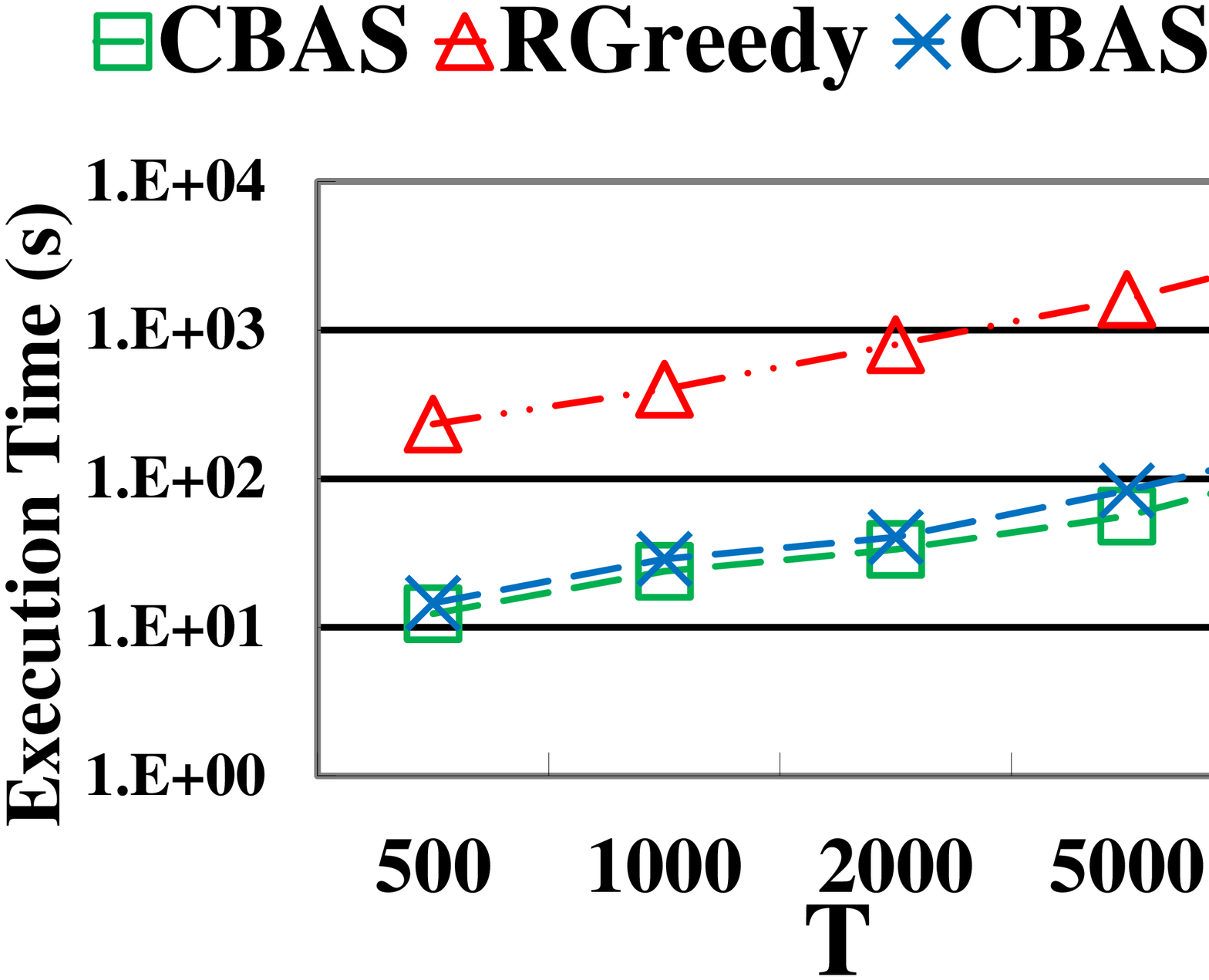} }
\caption{Experimental results on DBLP dataset}
\vspace{-6mm}
\label{exp3}
\end{figure}

Figures \ref{exp3}(c) and (d) present the solution quality and running time of \emph{%
RGreedy}, \emph{CBAS,} and \emph{CBAS-ND} with different numbers of start
nodes, i.e., $m$. The solution quality of \emph{%
CBAS-ND }converges when $m$ is $1000$, indicating that here it is sufficient
to assign $m$ as a number much smaller than $\frac{n}{k}=\frac{511163}{10}%
\approx 51116$, because the way we select start node efficiently filter out the start nodes that do not generate good solutions. Compared to $m=500$ in Facebook dataset, \emph{CBAS }and \emph{CBAS-ND} need a larger $m$ as $1000$ due to a larger network size in DBLP dataset. Figures \ref{exp3}(e) and (f) compare the solution quality and running time with different $T$. As $T$ increases, the solution quality of \emph{CBAS-ND} also grows faster than the other approaches. Both \emph{CBAS }and 
\emph{CBAS-ND }outperform \emph{RGreedy} by an order of $10^{-1}$.

\subsubsection{Flickr}
Finally, to evaluate the scalability of \emph{CBAS} and \emph{CBAS-ND},
Figures \ref{fig_Flickr}(a) and (b) compare the solution quality and
running time on Flickr dataset. The results show that \emph{CBAS-ND}
outperforms \emph{DGreedy} by $31\%$ in solution quality when $k=50$. 
\emph{CBAS} and \emph{CBAS-ND} are both faster than \emph{RGreedy} in an
order of $10^{-2}$. The trend of running time on
Flickr dataset is similar to Facebook dataset, instead of DBLP dataset, because the average node degrees of the Flickr dataset and Facebook
dataset are similar. Moreover, \emph{RGreedy} can support only $k=20$ in the Flickr dataset, smaller than $k=30$ in the DBLP dataset, manifesting that it is not practical to deploy \emph{RGreedy} in a real massive social network.


\begin{figure}[t]
\centering
\subfigure[] {\
\includegraphics[scale=0.15]{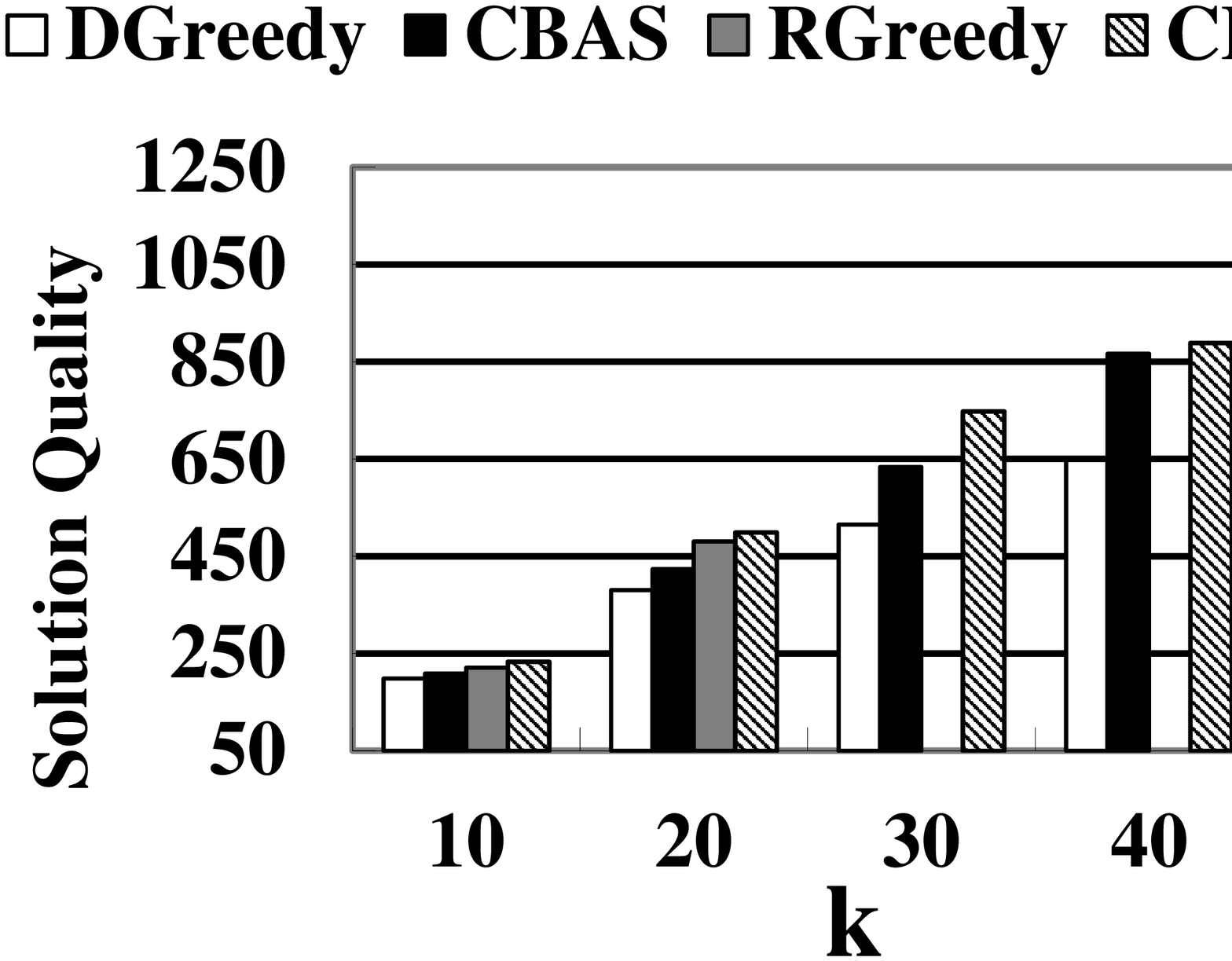} } \hspace{+22pt}\vspace{+25pt}
\subfigure[] {\
\includegraphics[scale=0.15]{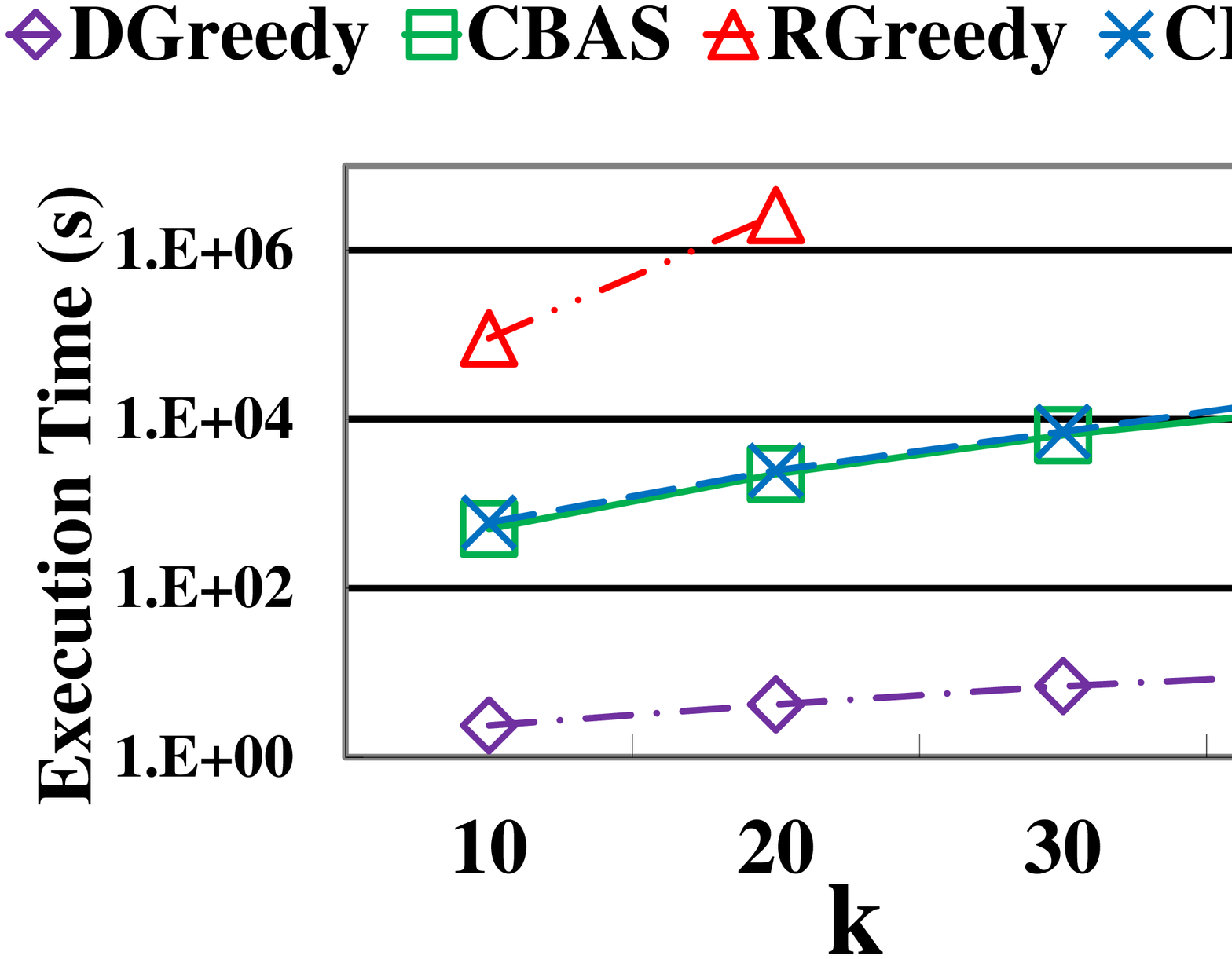} }
\vspace{-37pt}
\caption{Experimental results on Flickr dataset}
\vspace{-18pt}
\label{fig_Flickr}
\end{figure}

\vspace{-2mm}
\subsubsection{Integer Programming and WASO-dis}

To evaluate the solution quality of \emph{CBAS-ND}, Figures \ref%
{fig_IP}(a) and (b) compare the solution quality and running time
of \emph{IP} (ground truth) with $k=10$. Since WASO is NP-hard,
i.e., the running time for obtaining the ground truth is unacceptably large,
we extract 1000 small real datasets from the DBLP dataset with the node sizes as 25, 100, and
500 respectively. The result shows that the solution quality of \emph{%
CBAS-ND} is very close to \emph{IP}, while the running time is smaller by an order
of $10^{-2}$.  It is worth noting that \emph{CBAS-ND} here is single-threaded, but IP is solved by IBM CPLEX (parallel version).

For separate groups, Figure \ref{fig_IP}(c) first presents the running time
with different group sizes, i.e., $k$, where $m=\frac{n}{k}$, $\rho=0.3$, and $w=0.9$, respectively. For all algorithms, the virtual node $\overline{v}$ is added to the selection set $V_{S}$ to relax the connectivity constraint. \emph{RGreedy} computes the incremental willingness of every node in $V_{A}$ to the selection set $V_{S}$, where $V_{A}$ includes all nodes, and thus are computationally intractable. Therefore, \emph{RGreedy} is unable to return a solution within $24$ hours when the group size is larger than $20$. Figure \ref{fig_IP}(d) presents the solution quality with different activity sizes. The results indicate that \emph{CBAS-ND} outperforms \emph{DGreedy}, \emph{RGreedy}, and \emph{CBAS}, especially under a large $k$. In addition, compared to the experimental results in WASO, the difference between \emph{CBAS-ND} and \emph{DGreedy} becomes more significant as $k$ increases. The reason is that the greedy algorithm selects the node with the largest incremental willingness to the current group and thus is inclined to select a connected group, where the optimal solution may be disconnected. 

\begin{figure}[tp]
\vspace{+5pt}
\centering
\subfigure[] {\
\includegraphics[scale=0.14]{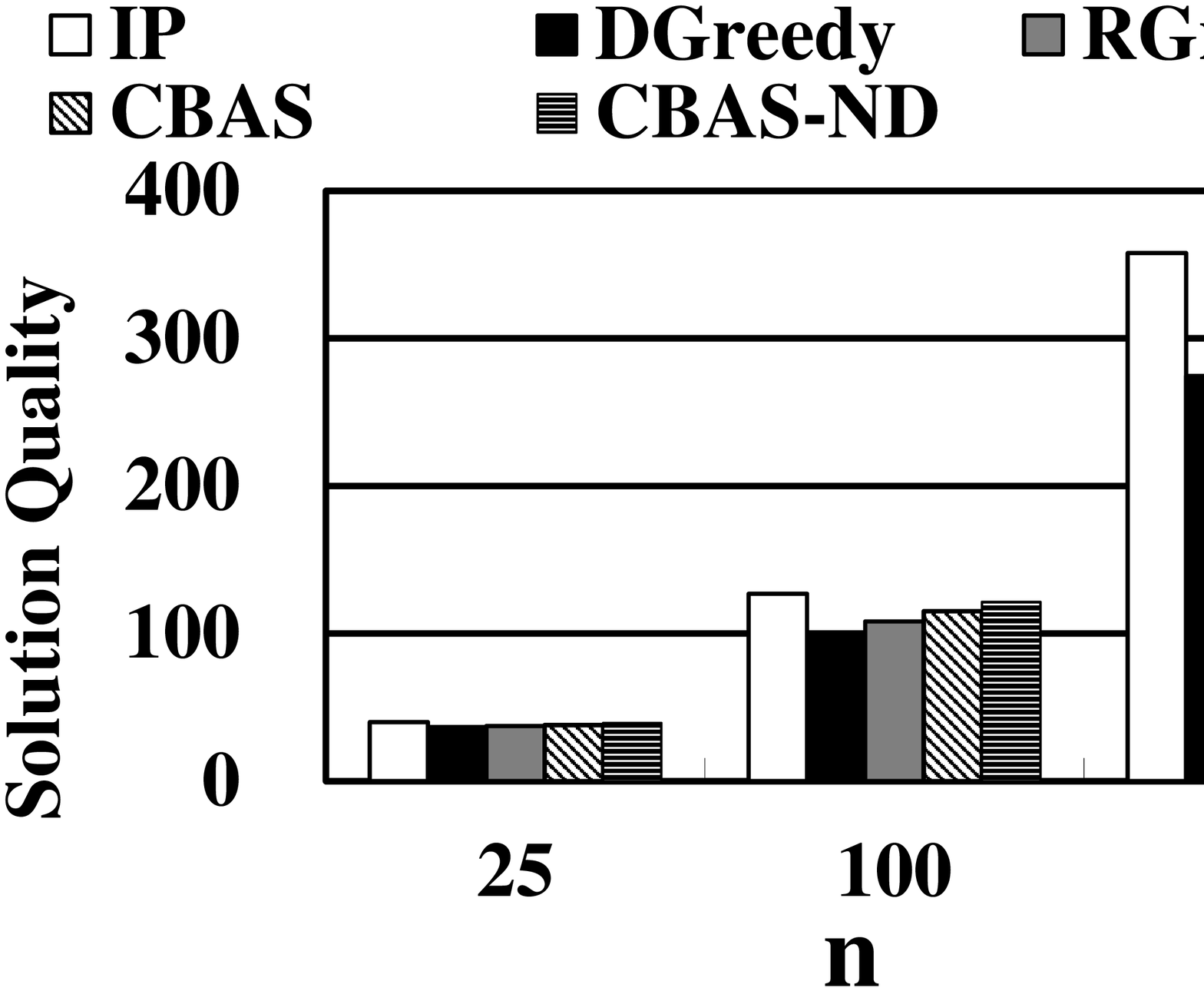} } \hspace{+22pt} \vspace{+25pt}
\subfigure[] {\
\includegraphics[scale=0.15]{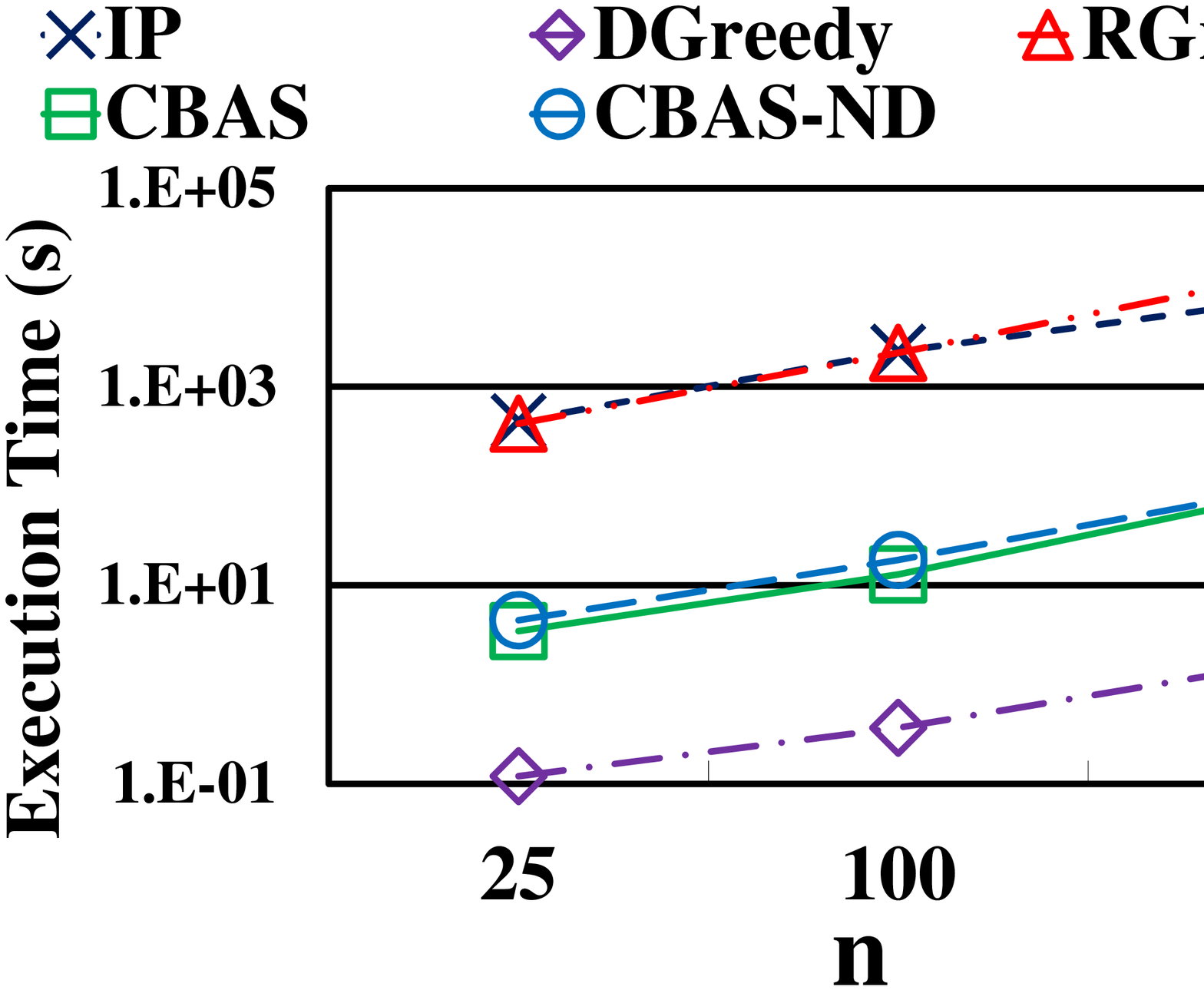} }
\subfigure[] {\
\includegraphics[scale=0.14]{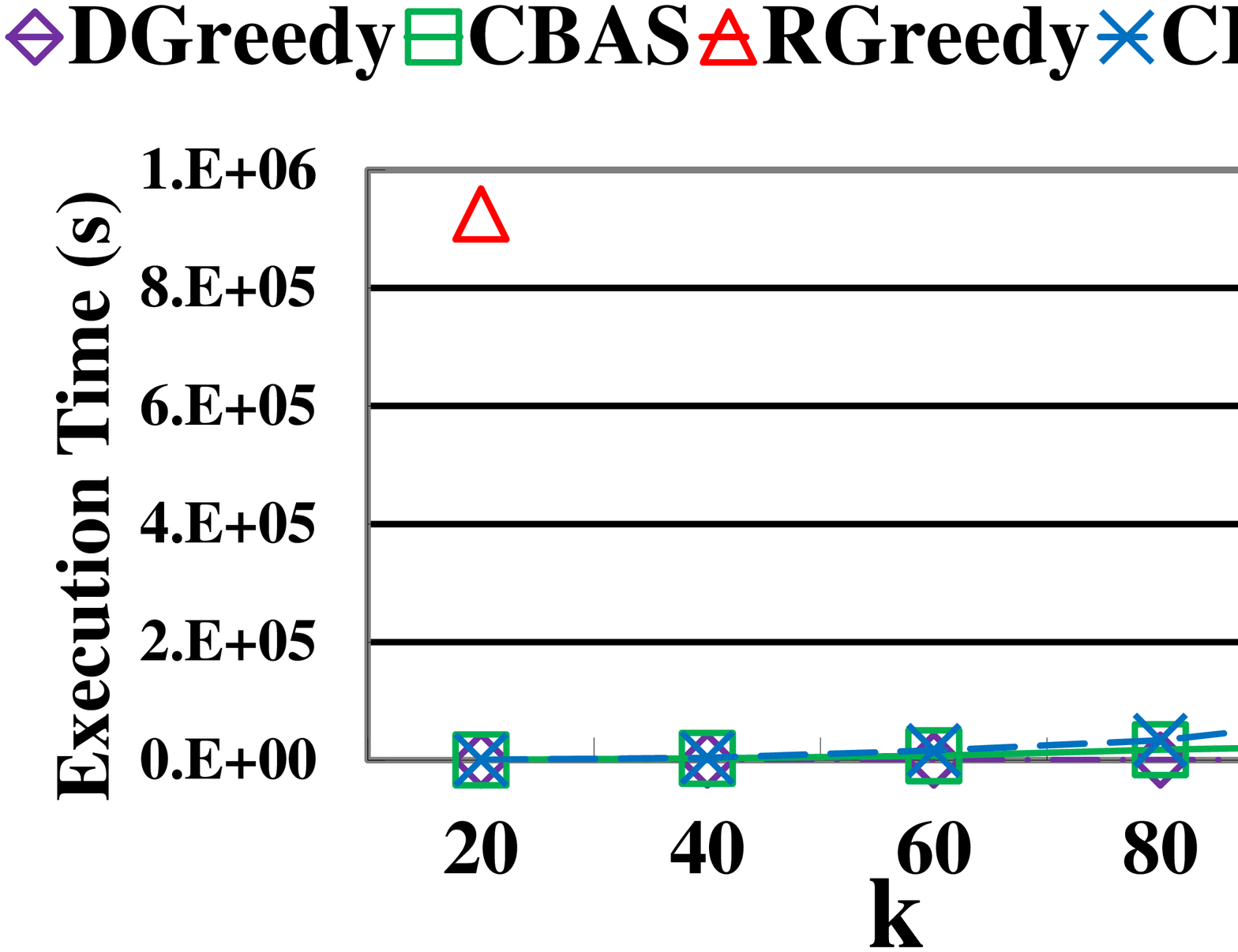} } \hspace{+22pt} \vspace{+22pt}
\subfigure[] {\
\includegraphics[scale=0.14]{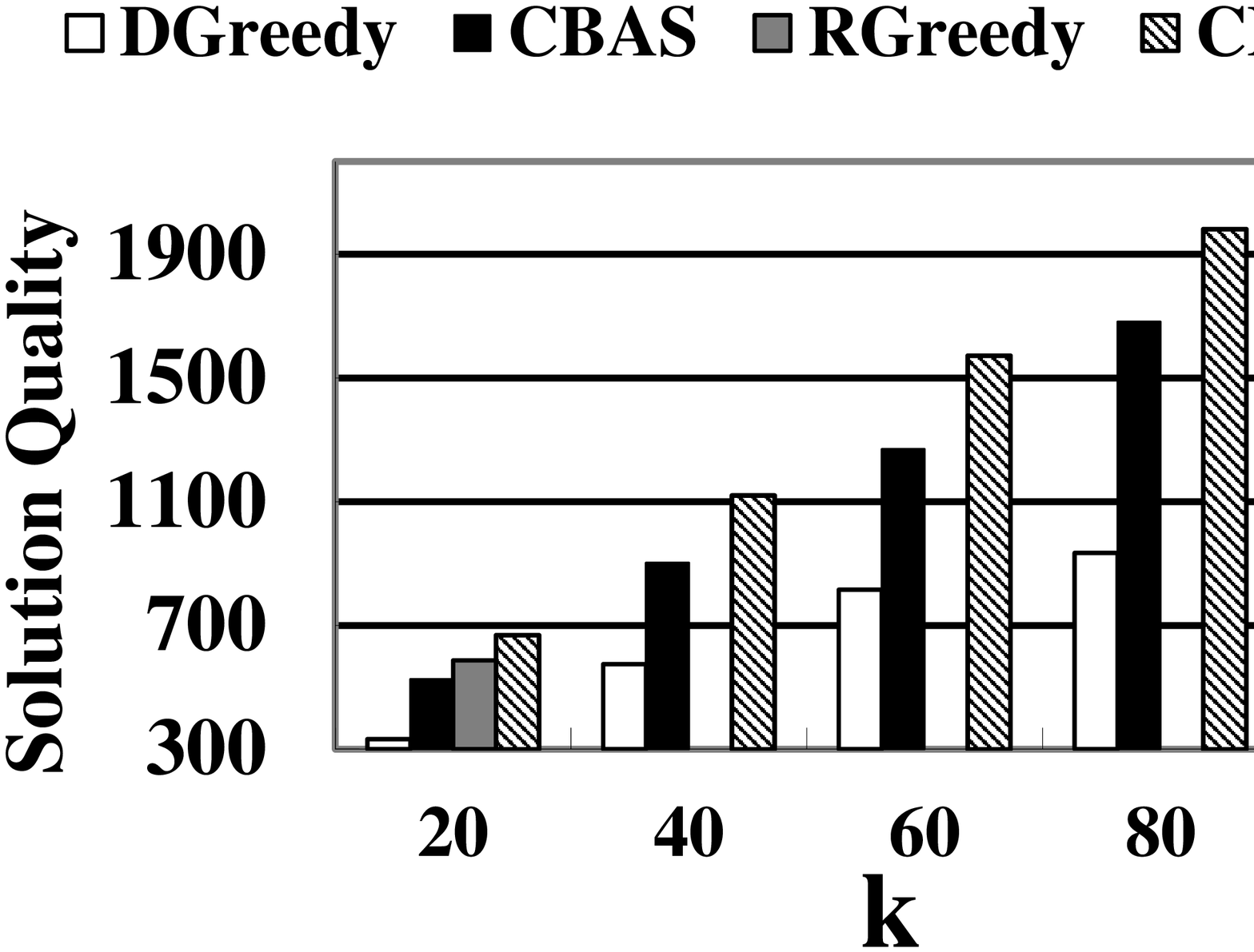} } \vspace{-35pt}
\caption{Experimental results on Integer Programming and WASO-dis}
\vspace{-15pt}
\label{fig_IP}
\end{figure}

\vspace{-5pt}
\section{Conclusion and Future Work}

\label{Conclu}

To the best of our knowledge, there is no real system or existing work in
the literature that addresses the issues of automatic activity planning
based on topic interest and social tightness. To fill this research gap and satisfy an important practical need, this paper formulated a new optimization problem called WASO to
derive a set of attendees and maximize the willingness. We proved that WASO
is NP-hard and devised two simple but effective randomized algorithms,
namely \emph{CBAS} and \emph{CBAS-ND}, with an approximation ratio. The user
study demonstrated that the social groups obtained through the proposed
algorithm implemented in Facebook significantly outperforms the manually configured solutions by users. This research result thus holds much promise to be profitably adopted in social networking websites as a value-added service.

The user study resulted in practical directions to enrich WASO for future research.
Some users suggested that we integrate the proposed willingness optimization
system with automatic available time extraction to filter unavailable users, such as by integrating the proposed system with Google Calendar. Since
candidate attendees are associated with multiple attributes in Facebook,
e.g., location and gender, these attributes can be specified as input
parameters to further filter out unsuitable candidate attendees. Last but
not the least, some users pointed out that our work could be extended to allow
users to specify some attendees that must be included in a certain group
activity. \vspace{-2mm} 
\bibliographystyle{abbrv}
\bibliography{hhshuai_2012}
\appendix
\section{Computational Budget Allocation with Gaussian Distribution}
In the following, we derive the theoretical results for $J_{i}$ following
the normal distribution with mean $\mu _{i}$ and standard deviation of $%
\sigma _{i}$. The probability density function and cumulative distribution
function is as follows.
\vspace{-2mm}
\begin{equation*}
p_{J_{i}}(x)=\phi (\frac{x-\mu _{i}}{\sigma _{i}})=\frac{1}{\sigma _{i}\sqrt{%
2\pi }}e^{-\frac{1}{2}(\frac{x-\mu _{i}}{\sigma _{i}})^{2}}
\end{equation*}
\begin{equation*}
P_{J_{i}}(x)=\Phi (\frac{x-\mu _{i}}{\sigma_{i}})=\frac{1}{2}(1+erf(%
\frac{x-\mu_{i}}{\sigma_{i}\sqrt{2}}))
\end{equation*}

The distribution of maximal value $J_{i}^{\ast }$
\vspace{-2mm}
\begin{equation*}
p_{J_{i}^{\ast }}(x)=N_{i}P_{J_{i}}(x)^{N_{i}-1}p_{J_{i}}(x),
\end{equation*}%
\begin{equation*}
P_{J_{i}^{\ast }}(x)=P_{J_{i}}(x)^{N_{i}}. 
\end{equation*}

Therefore, we derive the probability that $J_{b}^{\ast }$ is smaller than $J_{i}^{\ast }$ as follows.
\vspace{-2mm}
\begin{align*}
&p(J_{b}^{\ast }-J_{i}^{\ast }\leq 0) \\
&=1-p(J_{i}^{\ast }\leq J_{b}^{\ast })\text{ \ \ \ } \\
&=1-\int_{-\infty }^{\infty }p_{J_{b}^{\ast }}(x)P_{J_{i}^{\ast }}(x)dx \\
&=1-\int_{-\infty }^{\infty
}N_{b}P_{J_{b}}(x)^{N_{b}-1}p_{J_{b}}(x)P_{J_{i}}(x)^{N_{i}}dx \\
&=1-N_{b}\int_{-\infty }^{\infty }\Phi (\frac{x-\mu _{b}}{\sigma _{b}}%
)^{N_{b}-1}\phi (\frac{x-\mu _{b}}{\sigma _{b}})\Phi (\frac{x-\mu _{i}}{%
\sigma _{i}})^{N_{i}}dx
\end{align*}

As shown above, the probability is necessary to be computed numerically
because the $\Phi(x)$ function contains $erf(x)$ function which has no
closed-form representation after being integrated. Although we can approximate
the $\Phi(x)$ function with previous works \cite{BrycGP02}, the $\Phi(x)$ function still becomes too complex after raising to the $N_{b}$-th or $N_{i}$-th power.

\vspace{-3pt}
\section{Integer Programming for WASO}
\label{integerprogrammingformulation}
In the following, we describe the Integer Programming (IP) formulation for
WASO. Binary variable $x_{i}$ denotes if node $v_{i}$ is selected in the
solution $F$, and binary variable $y_{i,j}$ denotes if two neighboring nodes 
$v_{i}$ and $v_{j}$ are both selected in $F$. The objective function is
\vspace{-3pt}
\begin{equation*}
\max \sum_{v_{i}\in V}\eta _{i}x_{i}+\sum_{e_{i,j}\in E}\tau _{i,j}y_{i,j}%
\text{,}
\end{equation*}%
where the first term is the total interest score, and the second term is the
total social tightness score of the selected nodes. The basic constraints of
WASO include%
\vspace{-3pt}
\begin{equation}
\sum\limits_{v_{i}\in V}x_{i}=k  \label{eq:B1}
\end{equation}%
\begin{equation}
x_{i}+x_{j}\geq 2y_{i,j},\forall v_{i}\in V,\forall v_{j}\in N_{i}
\label{eq:B2}
\end{equation}%
Constraint (\ref{eq:B1}) states that exactly $k$ nodes are selected in $F$,
while constraint (\ref{eq:B2}) ensures that the social tightness score $\tau
_{i,j}$ of any edge $e_{i,j}$ can be added to the objective function (i.e., $%
y_{i,j}=1$) only when the two terminal nodes $v_{i}$ and $v_{j}$ are both
selected (i.e., $x_{i}=x_{j}=1$); otherwise, $y_{i,j}$ are enforced to be $0$%
.

However, the above basic constraints cannot guarantee that $F$ is a
connected component of $G$, since nodes are allowed to be chosen
arbitrarily. To effective address the issue, we propose the following
advanced constraints for WASO to ensure that there is a path from a root
node in $F$ to every other selected node in $F$, where all nodes in the path
must also belong to $F$. More specifically, let binary variable $r_{i}$
denote if node $v_{i}$ is the root node, and let binary variable $p_{i,j,m,n}
$ denote if edge $e_{m,n}$ in $E$ is located in the path from root node $%
r_{i}$ to another node $v_{j}$ in $F$. It is worth noting that since $F$ is
unknown, variables $r_{i}$ and $p_{i,j,m,n}$ in the advanced constraints are
correlated to $x_{i}$ and $x_{j}$, respectively.

WASO contains the following advanced constraints.%
\vspace{-3pt}
\begin{equation}
\sum\limits_{v_{i}\in V}r_{i}=1  \label{eq:A1}
\end{equation}
\vspace{-3pt}
\begin{equation}
r_{i}\leq x_{i},\forall v_{i}\in V  \label{eq:A2}
\end{equation}%
Constraint (\ref{eq:A1}) states that only one root node will be selected,
while constraint (\ref{eq:A2}) guarantees that the selected root node must
appear in $F$ (i.e, $r_{i}=1$ only when $x_{i}=1$). Equipped with the root
node $r_{i}$, let $N_{j}$ denote the set of neighboring node of $v_{j}$, let 
$d_{i,j,m}$ denote the maximal number of edges in the path from $r_{i}$ to $%
v_{m}$ with $v_{j}$ as the destination of the path, and the following four
constraints identify the path from $r_{i}$ to every node $v_{j}$ in $F$.%
\begin{equation}
r_{i}+x_{j}-1\leq \sum\limits_{n\in N_{i}}p_{i,j,i,n},\forall v_{i},v_{j}\in
V,v_{i}\neq v_{j}  \label{eq:A3}
\end{equation}%
\begin{equation}
r_{i}+x_{j}-1\leq \sum\limits_{m\in N_{j}}p_{i,j,m,j},\forall v_{i},v_{j}\in
V,v_{i}\neq v_{j}  \label{eq:A4}
\end{equation}%
\begin{equation*}
\sum\limits_{q\in N_{m}}p_{i,j,q,m}=\sum\limits_{n\in N_{m}}p_{i,j,m,n},\ \ \ \ \ \ \ \ \ \ \ \ \ \ \ \ \ \ \ \ \ \ \ \ \ \ \ \ \ \ \ \ \ \ \ \ \ \ \ \ \ \ \ \ \ \ \ \ \ \ \ \ \ \ \ \ \ \ \ \ 
\end{equation*}
\begin{equation}
\ \ \ \ \ \ \ \ \ \ \ \ \ \ \ \ \ \ \ \ \ \ \ \ \ \ \ \forall v_{i},v_{j},v_{m}\in V,v_{i}\neq v_{j},v_{i}\neq
v_{m},v_{j}\neq v_{m}  \label{eq:A5}
\end{equation}
\begin{equation}
d_{i,j,m}+(p_{i,j,m,n}-1)\left\vert V\right\vert <d_{i,j,n},\forall
v_{i},v_{j}\in V,\forall e_{m,n}\in E  \label{eq:A6}
\end{equation}%
For the selected root node $r_{i}$ and every other node $v_{j}$ in $F$
(i.e., $x_{j}=1$), the left hand side (LHS) of constraints \ref{eq:A3} and %
\ref{eq:A4} become 1, enforcing that at least one incident edge $e_{i,n}$ of 
$v_{i}$ and one incident edge $e_{m,j}$ of $v_{j}$ must be included in the
path. After obtaining the first and last edge (i.e., $e_{i,n}$ and $e_{m,j}$%
) in the path from $r_{i}$ to $v_{j}$, constraint \ref{eq:A5} is a flow
continuity constraint. For each node $v_{m}$, if it is an intermediate node
in the path, flow continuity constraint states that the flow from $r_{i}$ to 
$v_{m}$ must be identical to the flow $v_{m}$ to $v_{j}$. In other words,
constraint \ref{eq:A5} chooses a parent node $v_{q}$ and a child node $v_{n}$
for $v_{m}$ in the path

Constraint \ref{eq:A6} guarantees that the node sequence in the path
contains no cycle; otherwise, for every edge $e_{m,n}$ in the cycle, $%
p_{i,j,m,n}=1$, and the following inequality holds,%
\vspace{-3pt}
\begin{equation*}
d_{i,j,m}<d_{i,j,n}
\end{equation*}%
and it is thus impossible to find a $d_{i,j,n}$ for every node $v_{n}$ in
the cycle. On the other hand, for any edge with $p_{i,j,m,n}=0$, the
constraint becomes redundant since $d_{i,j,m}-\left\vert V\right\vert
<d_{i,j,n}$ always holds. 

The following constraint ensures that every two terminal nodes $v_{m}$ and $%
v_{n}$ of an edge $e_{m,n}$ in the path (i.e., $p_{i,j,m,n}=1$) must
participate in $F$ (i.e., $x_{m}=x_{n}=1$).

\begin{equation}
p_{i,j,m,n}\leq 2(x_{m}+x_{n}),\forall v_{i},v_{j}\in V,\forall e_{m,n}\in E
\label{eq:A7}
\end{equation}%
Therefore, it is not allowed to arbitrarily choose a path in $G$ to connect
the root node $r_{i}$ to another node $v_{j}$ in $F$.

\newpage
\section{Pseudo Codes}
\label{Pseudocode}

\vspace{-10mm}
\begin{algorithm}[h]
\caption{CBAS}
\label{SIIGS-CBA}
\begin{algorithmic}[1]
\renewcommand{\algorithmicrequire}{\textbf{Input:}}
\renewcommand{\algorithmicensure}{\textbf{Output:}}
\REQUIRE Graph $G(V,E)$, social network size $n$, activity size $k$, correctly select probability $P(CS)$, and solution quality $Q$
\ENSURE The best group F generating maximum willingness
\STATE $c_{i}=\infty$, $d_{i}= 0$ for all $i$;
\STATE $m=\left\lceil \frac{n}{k}\right\rceil$, $w=0$;
\STATE Select $m$ candidate nodes to candidate set $\mathcal{M}$;
\STATE $T_{1}=\left\lceil m\frac{\ln (\frac{2(1-P(CS))}{m-1})}{\ln \alpha }\right\rceil$;
\STATE Find the number of stages $r$ by first consulting $N_{b}$ table with solution $q$, and $r$=$\left\lceil \frac{4N_{b}}{T_{1}}-\frac{4k}{n}+1\right\rceil$;
\FOR {$t=1$ to $r$}
     \IF {$t=1$}
       \FOR {$i=1$ to $m$}
         \STATE $A_{i}=\frac{T_{1}}{m}$;
       \ENDFOR
    \ELSE
           \STATE $A_{total}=0$;
           \FOR {$i=1$ to $m$}
             \STATE $A_{i}$= $\frac{1}{2}(\frac{d_{i}-c_{b}}{d_{b}-c_{b}})^{N_{b}}$;
             \STATE $A_{total}$=$A_{total}$+$A_{i}$;
           \ENDFOR
           \STATE  $A_{i}$= $T_{1}A_{i}$/$A_{total}$; 
    \ENDIF

    \FOR {$i=1$ to $m$}
        \STATE $V_{S}=\mathcal{M}_{i}$
        \STATE $V_{A}=\emptyset$
        \FOR{$x=1$ to $A_{i}$}
           \STATE $V_{A}=N(\mathcal{M}_{i})$
           \FOR{$y=1$ to $k-1$}
              \STATE Random select a node $v$ in $V_{A}$ to $V_{S}$;
              \STATE $V_{A}=V_{A} \cup N(v)$
           \ENDFOR
              \STATE $w= W(V_{S})$;
              \IF {$w>d_{i}$}
                \STATE $d_{i}=w$;
              \ENDIF
              \IF {$w<c_{i}$}
                \STATE $c_{i}=w$;
              \ENDIF
              \IF {$w>S(F)$}
                \STATE $b=j$;
                \STATE $F=V_{S}$;
              \ENDIF
        \ENDFOR
    \ENDFOR
\ENDFOR

    \STATE Output $F$;
\end{algorithmic}
\end{algorithm}


\label{SIIGS-CBACE-ARG}

\begin{algorithm}[H]
\caption{CBAS-ND}
\label{SIIGS-CBACE}
\begin{algorithmic}[1]
\renewcommand{\algorithmicrequire}{\textbf{Input:}}
\renewcommand{\algorithmicensure}{\textbf{Output:}}
\REQUIRE Graph $G(V,E)$, social network size $n$, activity size $k$, correctly select probability $P(CS)$, solution quality $Q$, percentile of CE $\rho$, and smoothing weighting $w$
\ENSURE The best group F generating maximum willingness
\STATE $c_{i}=\infty$, $d_{i}= 0$ for all $i$;
\STATE $m=\left\lceil \frac{n}{k}\right\rceil$, $w=0$;
\STATE Select $m$ candidate nodes to candidate set $\mathcal{M}$;
\STATE $T_{1}=\left\lceil m\frac{\ln (\frac{2(1-P(CS))}{m-1})}{\ln \alpha }\right\rceil$;
\STATE Find the number of stages $r$ by first consulting $N_{b}$ table with solution $q$, and $r$=$\left\lceil \frac{4N_{b}}{T_{1}}-\frac{4k}{n}+1\right\rceil$;
\FOR {$t=1$ to $r$}
     \IF {$t=1$}
       \FOR {$i=1$ to $m$}
         \STATE $A_{i}=\frac{T_{1}}{m}$;
         \STATE Set the node selection probability vector $p_{i,t}$ as uniform;
       \ENDFOR
    \ELSE
           \STATE $A_{total}=0$;
           \FOR {$i=1$ to $m$}
             \STATE $A_{i}$= $\frac{1}{2}(\frac{d_{i}-c_{b}}{d_{b}-c_{b}})^{N_{b}}$;
             \STATE $A_{total}$=$A_{total}$+$A_{i}$;
           \ENDFOR
           \STATE  $A_{i}$= $T_{1}A_{i}$/$A_{total}$; 
    \ENDIF

    \FOR {$i=1$ to $m$}
        \STATE $V_{S}=\mathcal{M}_{i}$
        \STATE $V_{A}=\emptyset$
        \STATE $X=\emptyset$
        \FOR{$x=1$ to $A_{i}$}
           \STATE $V_{A}=N(\mathcal{M}_{i})$
           \FOR{$y=1$ to $k-1$}
              \STATE Random select a node $v$ in $V_{A}$ in accordance with $p_{i,t}$ to $V_{S}$;
              \STATE $V_{A}=V_{A} \cup N(v)$
           \ENDFOR
             \STATE $w=W(V_{S})$;
             \STATE $X.add(V_{S},w)$;
              \IF {$w>d_{i}$}
                \STATE $d_{i}=w$;
              \ENDIF
              \IF {$w<c_{i}$}
                \STATE $c_{i}=w$;
              \ENDIF
              \IF {$w>W(F)$}
                \STATE $b=i$;
                \STATE $F=V_{S}$;
              \ENDIF
        \ENDFOR
    \ENDFOR
                \COMMENT{Update node selection probability $p_{i}$}
                \STATE X=$DescendingSort(X,w)$;
                \IF {$\gamma_{t}>X_{(\left\lceil \rho A_{i}\right\rceil)}.w$}
                    \STATE $\gamma_{t+1}=\gamma_{t}$; 
                \ELSE
                    \STATE $\gamma_{t+1}=X_{(\left\lceil \rho A_{i}\right\rceil)}.w$; 
                \ENDIF             
                \FORALL {Sample $x$ in $X$}
                      \IF {$x.w>\gamma_{t+1}$}
                        \FORALL {$v_{j} \in x$}
                           \STATE  $p_{i,t+1,j}=p_{i,t+1,j}+1$;
                        \ENDFOR
                      \ENDIF
                \ENDFOR
                \FOR{$j=1$ to $n$}
                   \STATE  $p_{i,t+1,j}=p_{i,t+1,j}/\left\lceil \rho A_{i}\right\rceil$;
                   \STATE  $p_{i,j,t+1}=w p_{i,t+1,j}+(1-w)p_{i,t,j}$;
                \ENDFOR
\ENDFOR

    \STATE Output $F$;
\end{algorithmic}
\end{algorithm}
\end{document}